%
%

\documentclass[11pt,oneside]{article}

\usepackage[english]{babel}
\usepackage{amsmath}
\usepackage{amssymb}
\usepackage{amsthm}
\usepackage{mathtools}
\usepackage{xspace}
\usepackage{tikz}

\usepackage{fullpage}

\theoremstyle{plain}
\newtheorem{conjecture}             {Conjecture}
\newtheorem{corollary}  [conjecture]{Corollary}
\newtheorem{claim}                  {Claim}
\newtheorem{definition} [conjecture]{Definition}
\newtheorem{lemma}      [conjecture]{Lemma}
\newtheorem{notation}   [conjecture]{Notation}
\newtheorem{problem}    [conjecture]{Problem}

\newtheorem{remark}     [conjecture]{Remark}

\newtheorem{theorem}    [conjecture]{Theorem}

\setlength{\parskip}{1mm}

\newcommand{\cC}{\ensuremath{\mathcal{C}}}
\newcommand{\cI}{\ensuremath{\mathcal{I}}}
\newcommand{\cL}{\ensuremath{\mathcal{L}}}
\newcommand{\cX}{\ensuremath{\mathcal{X}}}

\newcommand{\IN}{\ensuremath{\mathbb{N}}}
\newcommand{\IR}{\ensuremath{\mathbb{R}}}

\newcommand{\floor} [1]{\ensuremath{{\lfloor #1 \rfloor}}}

\newcommand{\angleb}[1]{\ensuremath{\langle #1\rangle}}
\newcommand{\eref}  [1]{(\ref{#1})}

\newcommand{\dg} {\ensuremath{d}}
\newcommand{\ra} {\ensuremath{\rightarrow}}
\newcommand{\lra}{\ensuremath{\longrightarrow}}

\newcommand{\PT}{\textrm{P}}
\newcommand{\NP}{\textrm{NP}}
\newcommand{\QP}{\textrm{QP}}
\newcommand{\DTIME}{\textrm{DTIME}}

\newcommand{\LC} {\textsc{Label-Cover}\xspace}
\newcommand{\LCM}{\textsc{Label-Cover-Max}\xspace}

\title{
  Hardness Results for Approximate Pure Horn CNF Formulae Minimization\thanks{
    To appear in the special issue of the Annals of Mathematics and Artificial Intelligence dedicated to
    ISAIM-2012.
    The authors gratefully acknowledge the partial support by NSF grants IIS 0803444 and by CMMI 0856663.
    The second author also gratefully acknowledge the partial support by the joint CAPES (Brazil)/Fulbright (USA) 
    fellowship process BEX-2387050/15061676.
  }
}

\author{
  Endre Boros \qquad Aritanan Gruber\\[4mm]
  RUTCOR and MSIS, Rutgers University, 100 Rockafeller Road\\
  Piscataway, NJ, 08854, USA.\\
  \{boros, agruber\}@rutcor.rutgers.edu
}


\begin{document}
\maketitle

\begin{abstract}
  We study the hardness of approximation of clause minimum and literal minimum representations of pure Horn
  functions in $n$ Boolean variables. We show that unless $\PT=\NP$, it is not possible to approximate in polynomial 
  time the minimum number of clauses and the minimum number of literals of pure Horn CNF representations to 
  within a factor of $2^{\log^{1-o(1)} n}$. This is the case even when the inputs are restricted to pure Horn 
  3-CNFs with $O(n^{1+\varepsilon})$ clauses, for some small positive constant $\varepsilon$. Furthermore, we 
  show that even allowing sub-exponential time computation, it is still not possible to obtain constant factor 
  approximations for such problems unless the Exponential Time Hypothesis turns out to be false.
  \\[4mm]
  \textbf{Keywords.} 
    Boolean Functions $\cdot$ 
    Propositional Horn Logic $\cdot$
    Hardness of Approximation $\cdot$
    Computational Complexity $\cdot$
    Artificial Intelligence
  \\[4mm]
  \textbf{Mathematics Subject Classification (2010).} 
    {06E30 $\cdot$ 68Q17 $\cdot$ 68R01 $\cdot$ 68T01 $\cdot$ 68T27}
\end{abstract}

\section{Introduction}\label{sec:intro}
Horn functions constitute a rich and important subclass of Boolean functions and have many applications in artificial 
intelligence, combinatorics, computer science, and operations research. Furthermore, they possess some nice structural 
and algorithmic properties. An example of this claim is found on the satisfiability problem (SAT) of formulae in
conjunctive normal form (CNF): while it is one of the most famous \NP-complete problems for general Boolean CNFs
(Cook~\cite{C71}; see also: Arora and Barak~\cite{AB09}, Garey and Johnson~\cite{GJ79}), it can be solved in linear time 
in the number of variables plus the length of the Horn CNF formula being considered (Dowling and Gallier~\cite{DG84}, 
Itai and Makowsky~\cite{IM87}, and Minoux~\cite{M88}), where length in this context means the number of literal 
occurrences in the formula (i.e., multiplicities are taken into account).

The problem of finding short Horn CNF representations of Horn functions specified through Horn CNFs has received some
considerable attention in the literature since it has an intrinsic appeal stemming from both theoretical and practical 
standpoints. The same can be said about some special cases, including Horn 3-CNFs, i.e., the ones in which each 
clause has at most three literals.

Two of the common measures considered are the number of clauses and the number of literal occurrences (henceforth, 
number of literals). The \NP-hardness of minimizing the number of clauses in Horn CNFs was first proved in a slightly 
different context of directed hypergraphs by Ausiello et al.~\cite{ADS86}. It was later shown to also hold for the 
case of pure Horn 3-CNFs by Boros et al.~\cite{BCK13}. The \NP-hardness of minimizing the number of literals in Horn 
CNFs was shown by Maier~\cite{M80}. A simpler reduction was found later by Hammer and Kogan~\cite{HK93}. Boros et 
al.~\cite{BCK13} showed \NP-hardnes for the pure Horn 3-CNFs case.

The attention then shifted to trying to approximate those values. Hammer and Kogan~\cite{HK93} showed that for a pure 
Horn function on $n$ variables, it is possible to approximate the minimum number of clauses and the minimum number of 
literals of a pure Horn CNF formula representing it to within factors of $n-1$ and $\binom{n}{2}$, respectively. 
For many years, this was the only result regarding approximations. Recently, a super-logarithmic hardness 
of approximation factor was shown by Bhattacharya et al.~\cite{BDMT10} for the case of minimizing the number of clauses
for general Horn CNFs. We provide more details on this shortly. 

Another measure for minimum representations of Horn functions concerns minimizing the number of source sides, grouping 
together all clauses with the same source set. Maier~\cite{M80} and Ausiello et al.~\cite{ADS86} showed that such 
minimization can be accomplished in polynomial time (see also Crama and Hammer~\cite{CH11}). While this measure is 
sometimes used in practice, providing reasonably good results, we consider it an important intelectual quest, following
Bhattacharya et al.~\cite{BDMT10}, to try to precisely understand the hardness of the other two measures.

In this work, we focus on the hardness of approximating short pure Horn CNF/3-CNF representations of pure Horn functions,
where pure means that each clause has exactly one positive literal (definitions are provided in Section~\ref{sec:prelim}). 
More specifically, we study the hardness of approximating 
\begin{itemize}
\item[$\bullet$] the minimum number of clauses of pure Horn functions specified through pure Horn CNFs,
\item[$\bullet$] the minimum number of clauses and the minimum number
  literals of pure Horn functions specified through pure Horn 3-CNFs,
\end{itemize}
when either polinomial or sub-exponential computational time is available.

Below, we present pointers to previous work on the subject, discuss our results in more details, and mention
the main ideas behind them. We then close this section with the organization of the rest of the paper.

\subsection*{Previous Work}
The first result on hardness of approximation of a shortest pure Horn CNF representations was provided in Bhattacharya 
et al.~\cite{BDMT10}. Specifically, it was shown that unless $\NP\subseteq\QP=\DTIME(n^{\textrm{polylog}(n)})$, that is, 
unless every problem in $\NP$ can be solved in quasi-polynomial deterministic time in the size of the input's representation, 
the minimum number of clauses of a pure Horn function on $n$ variables specified through a pure Horn CNF formula cannot be 
approximated in polynomial (depending on $n$) time to within a factor of $2^{\log^{1-\varepsilon}n}$, for any constant 
$\varepsilon>0$ small enough.

This result is based on a gap-preserving reduction from a fairly well known network design problem introduced by
Kortsarz~\cite{K01}, namely, \textsc{MinRep} and has two main components: a gadget that associates to every 
\textsc{MinRep} instance $M$ a pure Horn CNF formula $h$ such that the size of an optimal solution to $M$ is 
related to the size of a clause minimum pure Horn CNF representation of $h$, and a gap amplification device that 
provides the referred gap. Despite being both necessary to accomplish the result, each component works in a rather 
independent way.

Inspired by the novelty of their result and by some characteristics of their reduction, we are able to further advance
the understanding of hardness of approximation of pure Horn functions. We discuss how we strength their result below.

\subsection*{Our Results and Techniques}
Our strengthening of the result of Bhattacharya et al.~\cite{BDMT10} can be summarized as follows: the hardness of 
approximation factor we present is stronger, the complexity theoretic assumption we use for polynomial time solvability 
is weaker, and the class of CNF formulae to which our results apply is smaller. We are also able to derive further 
non-approximability results for sub-exponential time solvability using a different complexity theoretic hypothesis.

In more details, for a pure Horn function $h$ on $n$ variables, we show that unless $\PT=\NP$, the minimum number of 
clauses in a prime pure Horn CNF representation of $h$ and the minimum number of clauses and literals in a prime pure 
Horn 3-CNF representation of $h$ cannot be approximated in polynomial (depending on $n$) time to within factors of 
$2^{\log^{1-o(1)} n}$
even when the inputs are restricted to pure Horn CNFs and pure Horn 3-CNFs with $O(n^{1+\varepsilon})$ clauses, for some 
small constant $\varepsilon>0$. It is worth mentioning that $o(1)\approx(\log\log n)^{-c}$ for some constant $c\in(0,1/2)$ 
in this case. 

After that, we show that unless the Exponential Time Hypothesis introduced by Impagliazzo and Paturi~\cite{IP01} 
is false, it is not possible to approximate the minimum number of clauses and the minimum number of literals of a 
prime pure Horn 3-CNF representation of $h$ in time $\exp(n^\delta)$, for some $\delta\in(0,1)$, to within 
factors of $O(\log^\beta n)$ for some small constant $\beta>0$. Such results hold even when the inputs are restricted 
to pure Horn 3-CNFs with $O(n^{1+\varepsilon})$ clauses, for some small constant $\varepsilon>0$. Furthermore, we also
obtain a hardness of approximation factor of $O(\log n)$ under slightly more stringent, but still sub-exponential
time constraints. We would like to point out that our techniques leave open the problem of determining hardness of 
approximation factors when $exp(o(n))$ computational time is available. We conjecture, however, that constant factor 
approximations are still not possible in that case.

The main technical component of our work is a new gap-preserving reduction\footnote{Our reduction is actually 
approximation-preserving, but we do not need or rely on such characteristic. See Vazirani~\cite{V01} or 
Williamson and Shmoys~\cite{WS11} for appropriate definitions.} from a graph theoretical problem called \LC 
(see Section~\ref{sec:LC} for its definition) to the problem of determining the minimum number of clauses in 
a pure Horn CNF representation of a pure Horn function. We show that our reduction has two independent parts: 
a core piece that forms an exclusive component (Boros et al.~\cite{BCKK10}) of the function in question and 
therefore, can be minimized separately; and a gap amplification device which is used to obtain the hardness 
of approximation factor. It becomes clear that the same principle underlies Bhattacharya et al.~\cite{BDMT10}.
The hardness of approximation factor comes (after calculations) from a result of Dinur and Safra~\cite{DS04} 
on the hardness of approximating certain \LC instances.

We then introduce some local changes into our reduction that allow us to address the case in which the 
representation is restricted to be a pure Horn 3-CNF. Namely, we introduce extra variables in order to
\emph{cubify} clauses whose degree is larger than three, that is, to replace each of these clauses by a 
collection of degree two or degree three clauses that provide the same logical implications. This is done 
for two families of high degree clauses and for each family we use a different technique: a \emph{linked-list} 
inspired transformation that is used on the classical reduction from SAT to 3-SAT instances (Garey and 
Johnson~\cite{GJ79}), and a \emph{complete binary tree} type transformation. The latter type 
is necessary to prevent certain shapes of prime implicates in minimum clause representations that would 
render the gap-amplification device innocuos. From this modified reduction, we are also able to derive in a 
straight-forward fashion a hardness result for determining the minimum number of literals of pure Horn 
functions represented by pure Horn 3-CNFs.

At this point we should mention that our reduction is somewhat more complicated than the one given in  
Bhattacharya et al.~\cite{BDMT10}. While we could adapt their reduction and obtain the same hardness of
approximation factor we present in the case of pure Horn CNF representations (as based in our 
Lemma~\ref{lem:tight-tcover} we can argue that the \LC and the \textsc{MinRep} problems are equivalent), 
the more involved gadget we use is paramount in extending the hardness result for the pure Horn
3-CNF case. Loosely speaking, the simple form of their reduction does not provide enough room to
correctly shape those prime implicates in minimum pure Horn 3-CNF representations that we mentioned
addressing in ours. In this way, the extra complications are justified.

Finally, using newer and slightly different results on the hardness of approximation of certain \LC
instances (Moshkovitz and Raz~\cite{MR-10-JACM}, Dinur and Harsha~\cite{DH13}) in conjunction with 
the Exponential Time Hypothesis~\cite{IP01}, we are able to show (also after calculations) that it 
might not be possible to obtain constant factor approximations for the minimum number of clauses and 
literals in pure Horn 3-CNF representations.

\subsection*{Outline}
The remainder of the text is organized as follows. We introduce some basic concepts about pure Horn functions in 
Section~\ref{sec:prelim} and present the problem on which our reduction is based in Section~\ref{sec:LC}. The reduction 
to pure Horn CNFs, its proof of correctness, and the polynomial time hardness of approximation result are shown in 
Section~\ref{sec:reduction}. In Section~\ref{sec:3-cnf}, we extend that result to pure Horn 3-CNF formulae and address 
the case of minimizing the number of literals. In Section~\ref{sec:subexp} we show that sub-exponential time availability 
gives smaller but still super-constant hardness of approximation factors. We then offer some final thoughts in 
Section~\ref{sec:consider}. A short conference version of this article appeared in Boros and Gruber~\cite{BG12}.

\section{Preliminaries}\label{sec:prelim}
In this section we succinctly define the main concepts and notations we shall use later on. For an almost comprehensive 
exposition, consult the book on Boolean Functions by Crama and Hammer~\cite{CH11}.

A mapping $h:\{0,1\}^n\ra\{0,1\}$ is called a \emph{Boolean function} on $n$ propositional variables. Its set of variables is
denoted by $V_h:=\{v_1,\ldots,v_n\}$. A \emph{literal} is a propositional variable $v_i$ (positive literal) or its negation 
$\bar{v}_i$ (negative literal). 

An elementary disjunction of literals
\begin{equation}\label{eq:clause}
  C=\bigvee_{i\in I}\bar{v}_i\vee\bigvee_{j\in J}v_j,
\end{equation}
with $I,J\subseteq V_h$ is a \emph{clause} if $I\cap J=\emptyset$. The set of variables it depends upon is 
$\mathsf{Vars}(C):=I\cup J$ and its \emph{degree} or \emph{size} is given by $\mathsf{deg}(C):=|I\cup J|$. 
It is customary to identify a clause $C$ with its set of literals.

\begin{definition}\label{dfi:pure-horn}
  A clause $C$ as in \eref{eq:clause} is called \emph{pure} or \emph{definite Horn} if $|J|=1$. For a pure Horn clause $C$, 
  the positive literal $v\in J$ is called its \emph{head} and $S=C\setminus J$ is called its \emph{subgoal} or \emph{body}. 
  To simplify notation, we sometimes write $C$ simply as $\bar{S}\vee v$ or as the implication $S\lra v$.
\end{definition}	

\begin{definition}\label{dfi:horn-cnf}
  A conjunction $\Phi$ of pure Horn clauses is a \emph{pure Horn formula in Conjunctive Normal Form} (for short, pure Horn CNF).
  In case every clause in $\Phi$ has degree at most three, the CNF is a 3-CNF. A Boolean function $h$ is called \emph{pure Horn} 
  if there is a pure Horn CNF formula $\Phi\equiv h$, that is, if $\Phi(v)=h(v)$ for all $v\in\{0,1\}^n$.
\end{definition}

Let $\Phi=\bigwedge_{i=1}^m C_i$ be a pure Horn CNF representing a pure Horn function $h$. We denote by
\[
  |\Phi|_c:=m\qquad \text{and}\qquad |\Phi|_l:=\sum_{i=1}^m\mathsf{deg}(C)
\]
the numbers of clauses and literals of $\Phi$, respectively. We say that $\Phi$ is a clause
(literal) minimum representation of $h$ if $|\Phi|_c\leq|\Psi|_c$ ($|\Phi|_l\leq|\Psi|_l$) for every other pure Horn CNF 
representation $\Psi$ of $h$. With this in mind, we define 
\begin{align*}
  \tau(h)    & := \min\{|\Phi|_c : \Phi\textrm{ is a pure Horn CNF representing }h\},\ \text{and}\\
  \lambda(h) & := \min\{|\Phi|_l \,: \Phi\textrm{ is a pure Horn CNF representing }h\}.
\end{align*}

\begin{problem}\label{pro:horn-min}
  The clause (literal) pure Horn CNF minimization problem consists in determining $\tau(h)$ ($\lambda(h)$) when
  $h$ is given as a pure Horn CNF. Similar definitions hold for the pure Horn 3-CNF case.
\end{problem}

A clause $C$ as in \eref{eq:clause} is an \emph{implicate} of a Boolean function $h$ if for all $v\in\{0,1\}^n$ it holds that 
$h(v)=0$ implies $C(v)=0$. An implicate is \emph{prime} if it is inclusion-wise minimal with respect to its set of literals. 
The set of prime implicates of $h$ is denoted by $\cI^p(h)$. It is known (cf. Hammer and Kogan~\cite{HK92}) that prime implicates 
of pure Horn functions are pure Horn clauses. A pure Horn CNF $\Phi$ representing $h$ is \emph{prime} if its clauses are prime 
and is \emph{irredundant} if the pure Horn CNF obtained after removing any of its clauses does not represent $h$ anymore. 
Let us note that a clause minimun representation may involve non prime implicates, though it is always irredundant. As 
Hammer and Kogan~\cite{HK92} pointed out any Horn CNF can be reduced in polynomial time to an equivalent prime and irredundant
CNF. In the sequel we shall assume all CNFs considered, including the clause minimum ones, to be prime and irredundant.

Let $C_1$ and $C_2$ be two clauses and $v$ be a variable such that $v\in C_1$, $\bar{v}\in C_2$, and $C_1$ and $C_2$ have no 
other complemented literals. The \emph{resolvent} of $C_1$ and $C_2$ is the clause
\[
  R(C_1,C_2):=(C_1\setminus\{v\})\cup(C_2\setminus\{\bar{v}\})
\]
and $C_1$ and $C_2$ are said to be \emph{resolvable}. It is known (e.g. Crama and Hammer~\cite{CH11}) that if $C_1$ and $C_2$ 
are resolvable implicates of a Boolean function $h$, then $R(C_1,C_2)$ is also an implicate of $h$. Naturally, the resolvent 
of pure Horn clauses is also pure Horn.

A set of clauses $\cC$ is \emph{closed under resolution} if for all $C_1,C_2\in\cC$, $R(C_1,C_2)\in\cC$. The \emph{resolution 
closure} of $\cC$, $R(\cC)$, is the smallest set $\cX\supseteq\cC$ closed under resolution. For a Boolean function $h$, let 
$\cI(h):=R(\cI^p(h))$. 

Let us note that the set of all implicates of a Horn function $h$ may, in principle, contain clauses involving arbitrary 
other variables, not relevant for $h$. To formulate proper statements one would need to make sure that such redundancies 
are also handled, which complicates the formulations. To avoid such complications, we focus on $\cI(h)$ in the sequel, 
which is completely enough to describe all relevant representations of $h$.

\begin{definition}\label{dfi:FC}
  Let $\Phi$ be a pure Horn CNF representing a pure Horn function $h$ and let $Q\subseteq V_h$. The 
  \emph{Forward Chaining} of $Q$ in $\Phi$, denoted by $F_\Phi(Q)$, is defined by the following algorithm. 
  Initially, $F_\Phi(Q)=Q$. As long as there is a pure Horn clause $\bar{S}\vee v$ in $\Phi$ such that 
  $S\subseteq F_\Phi(Q)$ and $v\not\in F_\Phi(Q)$, add $v$ to $F_\Phi(Q)$. Whenever a variable $v$ is added 
  to $F_\Phi(Q)$, we say that the corresponding clause $\bar{S}\vee v$ was \emph{trigged}. 
\end{definition}

The result below is pivotal in our work. It tells us that we can make inferences about a pure Horn function 
$h$ using any of its pure Horn CNF representations.

\begin{lemma}[Hammer and Kogan~\cite{HK93}]\label{lem:horn-FC}
  Two distinct pure Horn CNFs $\Phi$ and $\Psi$ represent the same pure Horn function $h$ if and only if 
  $F_\Phi(U)=F_\Psi(U)$, for all $U\subseteq V_h$. Consequently, we can do Forward Chaining in $h$, which we 
  denote by $F_h(\cdot)$, through the use of any of $h$'s representations.
\end{lemma}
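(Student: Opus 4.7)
The plan is to reduce the statement to a standard fixed-point characterization of forward chaining. Specifically, I would first establish the following two routine facts about pure Horn CNFs:
\begin{enumerate}
\item[(a)] For any pure Horn CNF $\Phi$ and any $U\subseteq V_h$, the set $F_\Phi(U)$ is the $\subseteq$-minimum set $T$ with $T\supseteq U$ enjoying the closure property that for every clause $\bar S\vee w\in\Phi$, $S\subseteq T$ implies $w\in T$.
\item[(b)] A Boolean assignment $x\in\{0,1\}^{V_h}$ satisfies $\Phi$ if and only if the set $T_x=\{i : x_i=1\}$ enjoys the closure property in (a); equivalently, $x$ is a model of $\Phi$ iff $F_\Phi(T_x)=T_x$.
\end{enumerate}
Item (a) follows by a short induction on the order in which variables are added to $F_\Phi(U)$ by the forward chaining procedure: the procedure maintains the invariant $F_\Phi(U)\subseteq T$ for any $T$ with the stated closure property, and upon termination $F_\Phi(U)$ itself has the closure property. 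Item (b) is a direct rewriting of what it means for a 0/1-assignment to satisfy each pure Horn clause $\bar S\vee w$.

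For the $(\Rightarrow)$ direction, assume $\Phi\equiv\Psi\equiv h$. Then $\Phi$ and $\Psi$ have exactly the same set of satisfying assignments. By (b), $T$ corresponds to a model of $\Phi$ iff it corresponds to a model of $\Psi$; by (a), $F_\Phi(U)$ is the minimum such set containing $U$ and the same holds with $\Phi$ replaced by $\Psi$, so these two minima agree for every $U\subseteq V_h$. (Existence of such a minimum is immediate because the all-ones assignment satisfies every pure Horn CNF, and one can note in passing that this matches the familiar fact that Horn models are closed under intersection.)

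For the $(\Leftarrow)$ direction, assume $F_\Phi(U)=F_\Psi(U)$ for every $U\subseteq V_h$. Given any $x\in\{0,1\}^{V_h}$, apply (b) with $U=T_x$: $x$ satisfies $\Phi$ iff $F_\Phi(T_x)=T_x$, iff $F_\Psi(T_x)=T_x$, iff $x$ satisfies $\Psi$. Hence $\Phi$ and $\Psi$ have identical truth tables and thus represent the same function $h$. The final sentence of the lemma — that $F_h(\cdot)$ is well-defined via any representation — is then just a restatement of the equivalence.

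The only non-trivial step is (a), since (b) and the two implications are essentially unpackings of definitions; the main subtlety there is noting that the closure property is exactly what forward chaining is designed to achieve, together with the monotonicity argument needed to compare $F_\Phi(U)$ against an arbitrary closed superset of $U$. No other machinery from the paper is required.
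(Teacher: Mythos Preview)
Your proof is correct, but note that the paper does not actually prove this lemma: it is stated with attribution to Hammer and Kogan~\cite{HK93} and used as a black box, so there is no ``paper's own proof'' to compare against.

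That said, your argument is the standard one and is sound. The reduction to the fixed-point characterization in (a) and the model/closed-set correspondence in (b) is exactly how this result is typically established; both directions then follow by unwinding definitions as you describe. One minor remark: in (a) you might make explicit that the forward chaining procedure terminates (trivially, since $F_\Phi(U)\subseteq V_h$ is finite and strictly grows at each step), which guarantees that the output indeed has the closure property. Otherwise nothing is missing.
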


The following definitions and lemma concerning exclusive sets of clauses are useful when decomposing and studying structural 
properties of Boolean functions.

\begin{definition}[Boros et al.~\cite{BCKK10}]\label{dfi:exclusive-set}
  Let $h$ be a Boolean function and $\cX\subseteq\cI(h)$ be a set of clauses. $\cX$ is an \emph{exclusive set of clauses of}
  $h$ if for all resolvable clauses $C_1,C_2\in\cI(h)$ it holds that: $R(C_1,C_2)\in\cX$ implies $C_1\in\cX$ and $C_2\in\cX$.
\end{definition}

An example of an exclusive set of clauses is given by the set of pure Horn implicates of a Horn function: it is not hard to 
see that if a resolvent is a pure Horn clause, then both the resolvable clauses must also be pure Horn.

\begin{definition}[Boros et al.~\cite{BCKK10}]\label{dfi:exclusive-comp}
  Let $\cX\subseteq\cI(h)$ be an exclusive set of clauses for a Boolean function $h$ and let $\cC\subseteq\cI(h)$ be such that
  $\cC\equiv h$. The Boolean function $h^{}_\cX=\cC\cap\cX$ is called the $\cX$-component of $h$.
\end{definition}

The following claim justifies the use of ``the'' in the previous definition.

\begin{lemma}[Boros et al.~\cite{BCKK10}]\label{lem:subfunc-switch}
  Let $\cC_1,\cC_2\subseteq\cI(h)$, $C_1\neq C_2$, such that $\cC_1\equiv\cC_2\equiv h$ and let $\cX\subseteq\cI(h)$ 
  be an exclusive set of clauses. Then $\cC_1\cap\cX\equiv\cC_2\cap\cX$ and in particular $(\cC_1\setminus\cX)\cup(\cC_2\cap\cX)$ 
  also represents $h$.
\end{lemma}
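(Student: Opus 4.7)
The plan is to establish the first assertion by showing that both $\cC_1 \cap \cX$ and $\cC_2 \cap \cX$ are equivalent to the same canonical conjunction, namely $\bigwedge_{C \in \cI(h) \cap \cX} C$, thereby eliminating the dependence on which representation we started from. The ``in particular'' clause will then follow by a simple substitution inside the decomposition $\cC_1 = (\cC_1 \setminus \cX) \cup (\cC_1 \cap \cX)$.

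The technical heart of the argument is the following resolution-purification claim, which I expect to be the main obstacle: \emph{for any $\cD \subseteq \cI(h)$ with $\cD \equiv h$ and any $C \in \cI(h) \cap \cX$, the clause $C$ belongs to the resolution closure $R(\cD \cap \cX)$.} To prove this, first note that $\cD \equiv h$ together with the classical resolution-completeness for prime implicates gives $R(\cD) \supseteq \cI^p(h)$, and then the paper's definition $\cI(h) = R(\cI^p(h))$ yields $R(\cD) = \cI(h)$. Hence $C$ admits a resolution derivation tree whose leaves lie in $\cD$ and whose root is $C$, and whose every node is in $\cI(h)$. Now argue by structural induction starting at the root: the root $C$ lies in $\cX$ by assumption; whenever an internal node $D = R(D', D'')$ lies in $\cX$, the exclusivity of $\cX$ applied to the resolvable pair $D', D'' \in \cI(h)$ forces $D', D'' \in \cX$ as well. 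Consequently every clause in the tree, and in particular every leaf, belongs to $\cX$, so $C$ is derivable from $\cD \cap \cX$ alone.

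Applying this claim with $\cD = \cC_i$ for $i = 1, 2$, every clause in $\cI(h) \cap \cX$ is derivable by resolution from $\cC_i \cap \cX$ and is therefore an implicate of the function represented by $\cC_i \cap \cX$. Combined with the trivial inclusion $\cC_i \cap \cX \subseteq \cI(h) \cap \cX$ in the opposite direction, this shows $\cC_i \cap \cX \equiv \bigwedge_{C \in \cI(h) \cap \cX} C$ as Boolean functions. The right-hand side does not depend on $i$, so $\cC_1 \cap \cX \equiv \cC_2 \cap \cX$, as claimed.

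For the ``in particular'' part, start from $\cC_1 = (\cC_1 \setminus \cX) \cup (\cC_1 \cap \cX) \equiv h$ and substitute the equivalent subformula $\cC_2 \cap \cX$ for $\cC_1 \cap \cX$; replacing an equivalent subconjunction inside a conjunction preserves equivalence, so $(\cC_1 \setminus \cX) \cup (\cC_2 \cap \cX) \equiv h$, completing the proof.
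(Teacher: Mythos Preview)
The paper does not actually supply a proof of this lemma; it is quoted from Boros, {\v{C}}epek, Kogan, and Ku{\v{c}}era~\cite{BCKK10} and only used as a black box. So there is no in-paper proof to compare against.

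That said, your argument is correct and is essentially the standard proof of this fact. The key step---taking a resolution derivation tree of $C\in\cI(h)\cap\cX$ from a representation $\cD$ and propagating membership in $\cX$ from the root down to the leaves via the exclusivity property---is exactly how one shows that the $\cX$-part of any representation already entails all of $\cI(h)\cap\cX$. The only places worth tightening are justifying $R(\cD)=\cI(h)$ (you need both $\cD\subseteq\cI(h)$, which is assumed, and resolution completeness for prime implicates, which you invoke) and making explicit that every intermediate node of the tree lies in $\cI(h)$ so that Definition~\ref{dfi:exclusive-set} applies at each step; you do address both, so the proof stands.
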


The above lemma is a particularly useful and important tool in our work. Loosely speaking, it means that once we are
able to identify an exclusive component $g$ of a function $h$, we can separately study $g$. Moreover, we can draw 
conclusions about $g$ using any of its representations (even alternate between distinct representations as 
convenient) and then reintegrate the acquired knowledge into the analysis of $h$.

The Forward Chaining procedure provides us with a convenient way of identifying exclusive families for pure Horn 
functions, as stated in the next lemma. This result appeared recently in Boros et al.~\cite{BCK13}. As we make
explicit use of it in the analysis of our construction, we decided to include its proof below for completeness.

\begin{lemma}[Boros et al.~\cite{BCK13}]\label{lem:FC-exclusive-set}
  Let $\Phi$ be a prime pure Horn CNF representing the function $h$, let $W\subseteq V_h$ be such that 
  $F_\Phi(W)=W$, and define the set
  \[
    \cX(W) := \{C\in\cI(h): \mathsf{Vars}(C)\subseteq W\}\label{e61}.
  \]
  Then $\cX(W)$ is an exclusive family for $h$.
\end{lemma}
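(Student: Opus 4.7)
The plan is to take an arbitrary pair of resolvable clauses $C_1, C_2 \in \cI(h)$ whose resolvent lies in $\cX(W)$, and verify directly that $\mathsf{Vars}(C_1), \mathsf{Vars}(C_2) \subseteq W$. Because pure Horn clauses have a unique positive literal (the head) and resolution requires one positive and one negative occurrence of the resolved variable $v$, necessarily $v$ is the head of one of the two clauses and appears in the body of the other. By symmetry I may write $C_1 = \bar{S_1} \vee v$ and $C_2 = \bar{S_2} \vee w$ with $v \in S_2$ and $w$ the head of $C_2$, so that
\[
  \mathsf{Vars}(R(C_1,C_2)) \;=\; S_1 \cup (S_2 \setminus \{v\}) \cup \{w\}.
\]
The hypothesis $R(C_1,C_2) \in \cX(W)$ gives $S_1 \cup (S_2\setminus\{v\}) \cup \{w\} \subseteq W$, so to establish $\mathsf{Vars}(C_1) = S_1 \cup \{v\} \subseteq W$ and $\mathsf{Vars}(C_2) = S_2 \cup \{w\} \subseteq W$ it suffices to prove the single fact $v \in W$.

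For this, I would use forward chaining. Since $C_1 \in \cI(h)$, the clause $\bar{S_1} \vee v$ is an implicate of $h$, which in the pure Horn setting is equivalent to saying $v \in F_h(S_1)$ (this is well-defined by Lemma~\ref{lem:horn-FC}, since forward chaining gives the same result in every representation of $h$). Forward chaining is obviously monotone in its starting set, and from $S_1 \subseteq W$ together with the hypothesis $F_\Phi(W) = W$ we obtain
\[
  v \;\in\; F_h(S_1) \;\subseteq\; F_h(W) \;=\; F_\Phi(W) \;=\; W,
\]
as desired. This closes the proof since both $\mathsf{Vars}(C_1)$ and $\mathsf{Vars}(C_2)$ are now seen to be subsets of $W$, and $C_1, C_2 \in \cI(h)$, so $C_1, C_2 \in \cX(W)$.

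There is no serious technical obstacle; the only point worth emphasizing is the use of Lemma~\ref{lem:horn-FC} to ensure that $F_h(\cdot)$ is well-defined independently of the representation $\Phi$, which is what lets us translate the abstract condition ``$C_1$ is an implicate of $h$'' into the concrete monotonicity argument on the closure $W$.
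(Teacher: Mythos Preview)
Your proof is correct and follows essentially the same approach as the paper's: both arguments identify the resolved variable $v$ as the only variable of $C_1\cup C_2$ possibly outside $W$, observe that $v$ is the head of a clause $C_1\in\cI(h)$ whose body $S_1$ lies in $W$, and then use the representation-independence of forward chaining (Lemma~\ref{lem:horn-FC}) to conclude $v\in F_h(S_1)\subseteq F_h(W)=W$. The paper phrases this as a proof by contradiction and is terser about the implicate/forward-chaining equivalence and the monotonicity step, but the content is the same.
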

\begin{proof}
  Let $W$ be as specified in the lemma's statement and suppose there are clauses $C_1, C_2\in\cI(h)$ such that 
  $R(C_1,C_2)\in\cX(W)$ but $\{C_1,C_2\}\not\subseteq\cX(W)$. By the definitions of $\cX(W)$ and of resolution, 
  all but one of the variables in $C=C_1\cup C_2$ must belong to $W$ and that variable, say $v\in C\setminus W$, 
  is precisely the variable upon which $C_1$ and $C_2$ are resolvable, that is, $v$ occurs as head in one of those 
  clauses. Now, since for the same input the forward chaining outcome is independent of the pure Horn representation 
  of $h$, it follows that $v\in F_\Phi(W)\neq W$, a contradiction.
\end{proof}

When showing hardness results, we make use of the standard asymptotic notations: $O(\cdot)$, $\Theta(\cdot)$, 
$\Omega(\cdot)$, $o(\cdot)$, and $\omega(\cdot)$. Namely, \emph{big-oh}, \emph{big-theta}, \emph{big-omega},
\emph{little-oh} or \emph{omicron}, and \emph{little-omega}, respectively. For definitions and examples of use, 
consult e.g. the book by Cormen et al.~\cite{CLRS09}. 

We say a function $f:\IN\to\IN$ (or $f:\IN\to\IR$) is \emph{quasi-linear} if $f\in O\bigl(n^{1+o(1)}\bigr)$ 
and it is \emph{nearly linear} if $f\in O\bigl(n^{1+\varepsilon}\bigr)$ for some constant $\varepsilon>0$ small 
enough; it is \emph{quasi-polynomial} or \emph{super-polynomial} if $f\in O\bigl(n^{\textrm{polylog}(n)}\bigr)$; 
and it is \emph{sub-exponential} if $f\in O\bigl(2^{o(n)}\bigr)$.

\section{The \LC Problem}\label{sec:LC}
The $\LC$ problem is a graph labeling promise problem formally introduced in Arora et al.~\cite{ABSS97} as a 
combinatorial abstraction of interactive proof systems (two-prover one-round in Feige et al.~\cite{FGLSS96} and 
Feige and Lov\'asz~\cite{FL92}, and probabilistically checkable in Arora and Motwani~\cite{AM98} and Arora and 
Safra~\cite{AS98}). It comes in maximization and minimization flavors (linked by a ``weak duality'' relation) and 
is probably the most popular starting point for hardness of approximation reductions. In this section, we 
introduce a minimization version that is best suited for our polynomial time results. Later in Section~\ref{sec:subexp}, 
when dealing with sub-exponential time results, we shall mention its maximization counterpart. 

\begin{definition}\label{dfi:LC-inst}
  A \emph{\LC instance} is a quadruple $\cL^{}_0=(G,L_0^{},L_0',\Pi^{}_0)$, where $G=(X,Y,E)$ is a 
  bipartite graph, $L_0^{}$ and $L_0'$ are disjoint sets of labels for the vertices in $X$ and $Y$, 
  respectively, and $\Pi^{}_0=(\Pi^0_e)^{}_{e\in E}$ is a set of constraints with each 
  $\Pi^0_e\subseteq L_0^{}\times L_0'$ being a non-empty relation of admissible pairs of labels 
  for the edge $e$. The \emph{size} of $\cL^{}_0$ is equal to $|X|+|Y|+|E|+|L^{}_0|+|L'_0|+|\Pi^{}_0|$.
\end{definition}

\begin{definition}\label{dfi:LC-total-cover}
  A \emph{labeling} for $\cL^{}_0$ is any function $f^{}_0:X\ra 2^{L_0^{}},Y\ra 2^{L_0'}\setminus\{\emptyset\}$ 
  assigning subsets of labels to vertices. A labeling $f^{}_0$ \emph{covers} an edge $(x,y)$ if for every label 
  $\ell'_0\in f^{}_0(y)$ there is a label $\ell_0^{}\in f^{}_0(x)$ such that $(\ell_0^{},\ell'_0)\in\Pi^0_{(x,y)}$. A 
  \emph{total-cover} for $\cL^{}_0$ is a labeling that covers every edge in $E$. $\cL^{}_0$ is said to be \emph{feasible} 
  if it admits a total-cover.
\end{definition}

Following Arora and Lund~\cite{AL96}, a way to guarantee that a \LC instance is feasible is by imposing an 
extra condition on it, namely, that there is a label $\ell_0'\in L_0'$ such that for each edge $e\in E$, there is
a label $\ell_0^{}\in L_0^{}$ with $(\ell_0^{},\ell_0')\in \Pi^0_e$. In this way, a labeling assigning $\ell_0'$ to
each vertex in $Y$ and the set $L_0^{}$ to each vertex in $X$ is clearly a total-cover. However, all \LC instances 
that we shall use are, by construction, guaranteed to be feasible. Therefore, we shall not dwell on such imposition
and shall consider only feasible \LC instances in the sequel.

\begin{definition}\label{dfi:LC-total-cover-cost}	
  For a total-cover $f^{}_0$ of $\cL^{}_0$, let $f^{}_0(Z):=\sum_{z\in Z}|f^{}_0(z)|$ with $Z\subseteq X\cup Y$.
  The \emph{cost} of $f^{}_0$ is given by $\kappa(f^{}_0):=f^{}_0(X)/|X|$ and $f^{}_0$ is said to
  be \emph{optimal} if $\kappa(f^{}_0)$ is minimum among the costs of all total-covers for $\cL^{}_0$. This
  minimum value we denote by $\kappa(\cL^{}_0)$.
\end{definition}

Observe that the feasibility of $\cL^{}_0$ implies that $1\leq\kappa(f^{}_0)\leq |L^{}_0|$, for any total-cover 
$f^{}_0$. Also, without loss of generality, we can assume that $G$ has no isolated vertices as they do not 
influence the cost of any labeling.

We now give an example of a \LC instance $\cL_0$. Let $U:=\{u_1,\ldots,u_n\}$ be a set of Boolean variables and let 
$\Phi:=\bigwedge_{i=1}^s\phi_i$ be a formula in CNF such that each clause of $\Phi$ depends on $k$ variables 
of $U$ (as in a variation of the satisfiability problem in which each clause has exactly $k$ literals).
For a clause $\phi\in\Phi$ and a variable $u\in U$, we write $u\in\phi$ whenever $\phi$ depends on $u$.

The bipartite graph $G=(X,Y,E)$ is constructed from $\Phi$ as follows. Let $X:=\{x_1,\ldots,x_{ks}\}$ have a
vertex for every occurrence of a variable in $\Phi$, and let $Y:=\{1,\ldots,s\}$ have a vertex for every
clause $\phi\in\Phi$. Let $X(u)\subseteq X$ denotes the set of vertices corresponding to the variable $u$,
and define
\[
  E:=\Big\{(x,j)\in X\times Y : x\in X(u) \text{ and } u\in\phi_j\Big\},
\]
that is, each vertex $j\in Y$ is connected to all occurrences of all variables in the clause $\phi_j$.

Define the label-sets as $L^{}_0:=\{0,1\}$ and $L'_0:=\{0,1\}^k$. For an edge $(x,j)\in E$, assume that 
$x\in X(u)$ and that $u$ is the $i$-th variable in $\phi_j$, and define
\[
  \Pi^0_{(x,j)}:=\Big\{(a_i,(a_1,\ldots,a_k)) : \phi_j(a_1,\ldots,a_k)=\textsf{True}\Big\},
\]
where $a_i\in L^{}_0$ and $(a_1,\ldots,a_k)\in L'_0$.

Now, it is not hard to see that in this case, there is a total-cover $f^{}_0$ with $\kappa(f^{}_0)=1$ if and only 
if $\Phi$ is satisfiable. Notice that choosing $k\geq 3$ establishes the \NP-completeness of the problem of 
deciding if an optimal total-cover for a given \LC instance has cost equal to one. 

The above example was adapted from Dinur and Safra~\cite{DS04}. Their original version is used in
the proof of Theorem~\ref{thm:LC-inapprox}. In that context however, $\Phi$ is a non-Boolean satisfiability 
instance produced by a probabilistic checkable proof system and the label-sets involved are larger (see 
Remark~\ref{rem:DS-inst-size} below).

\begin{definition}\label{dfi:tightness}
  A total-cover $f^{}_0$ is \emph{tight} if $f^{}_0(Y) := \sum_{y\in Y}|f^{}_0(y)| = |Y|$, i.e., if 
  for every $y\in Y$, it holds that $|f^{}_0(y)|=1$.
\end{definition}

\begin{lemma}\label{lem:tight-tcover}
  Every \LC instance $\cL^{}_0$ admits a tight, optimal total-cover.
\end{lemma}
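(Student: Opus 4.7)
The plan is to start from any optimal total-cover $f_0^*$ and shrink the label-sets on the $Y$ side to singletons, arguing that this preserves both validity and optimality.

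First I would observe that the cost function $\kappa(f_0) = f_0(X)/|X|$ depends only on the labels assigned to vertices of $X$; it is completely insensitive to $f_0(Y)$. So any modification that only alters values on $Y$ leaves the cost unchanged. Second, the covering condition for an edge $(x,y)$ says that every label in $f_0(y)$ must be ``matched'' by some label in $f_0(x)$ via $\Pi^0_{(x,y)}$; this condition is \emph{monotone-decreasing} in $f_0(y)$, in the sense that shrinking $f_0(y)$ (without touching $f_0(x)$) can only make the condition easier to satisfy.

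With these two observations in place, the construction is immediate. Let $f_0^*$ be any optimal total-cover, which exists because $\cL_0$ is feasible. For each $y \in Y$, the set $f_0^*(y)$ is non-empty by definition of a labeling, so I may pick an arbitrary label $\ell_y' \in f_0^*(y)$. Define $f_0$ by setting $f_0(x) := f_0^*(x)$ for every $x \in X$ and $f_0(y) := \{\ell_y'\}$ for every $y \in Y$. To verify that $f_0$ is a total-cover, take any edge $(x,y) \in E$; since $f_0^*$ covers $(x,y)$ and $\ell_y' \in f_0^*(y)$, there exists some $\ell \in f_0^*(x) = f_0(x)$ with $(\ell,\ell_y') \in \Pi^0_{(x,y)}$, which is precisely the covering condition for $f_0$ on $(x,y)$.

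Finally, $\kappa(f_0) = f_0(X)/|X| = f_0^*(X)/|X| = \kappa(f_0^*) = \kappa(\cL_0)$, so $f_0$ is optimal, and $|f_0(y)| = 1$ for every $y \in Y$ gives tightness. There is no real obstacle here; the only point worth stating explicitly is the asymmetry between the two sides of the bipartition, namely that the cost functional ignores $Y$ while the covering condition relaxes when the $Y$-sets shrink.
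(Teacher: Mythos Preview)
Your proof is correct and uses the same two observations as the paper: the cost $\kappa$ ignores the $Y$-side, and the covering condition only becomes easier when $f_0(y)$ shrinks. The only cosmetic difference is that the paper removes one surplus label at a time via a minimal-counterexample argument, whereas you pass to singletons on all of $Y$ in a single step; your direct construction is if anything cleaner.
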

\begin{proof}
  Suppose $f^{}_0$ as in Definition~\ref{dfi:LC-total-cover} is a minimally non-tight, optimal total-cover for $\cL^{}_0$. 
  Hence, there is a $y\in Y$ such that $|f^{}_0(y)|>1$. Let $\ell'_0\in f^{}_0(y)$ and define a new labeling $g$ where 
  $g(z)=f^{}_0(z)$ for all $z\in X\cup(Y\setminus\{y\})$ and $g(y)=f^{}_0(y)\setminus\{\ell'_0\}$. Note that $g(y)\neq\emptyset$
  and that every edge $(x,y)$ for $x\in N(y):=\{z\in X: (z,y)\in E\}$ is covered (for $f^{}_0$ is a total-cover). Moreover,
  clearly $\kappa(f^{}_0)=\kappa(g)$. Hence, $g$ is an optimal total-cover for $\cL^{}_0$ in which 
  $g(Y) = \sum_{y\in Y} |g(y)| < f^{}_0(Y)$, contradicting the minimality of $f^{}_0$. The result thus follows.
\end{proof}

\begin{notation}\label{nta:LC-sizes}
  For $\cL^{}_0$ being a \LC instance as in Definition~\ref{dfi:LC-inst}, define $r:=|X|$, $s:=|Y|$, 
  $m:=|E|$, $\lambda:=|L_0^{}|$, $\lambda':=|L_0'|$, $\pi_e:=|\Pi^0_e|$ for $e\in E$, and set 
  $\pi:=\sum_{e\in E}\pi_e$.
\end{notation}

\begin{problem}\label{pro:label-cover}
  For any $\rho>1$, a \LC instance $\cL^{}_0$ has covering promise $\rho$ if it falls in one of two cases:
  either there is a tight, optimal total-cover for $\cL^{}_0$ of cost $1$, or every tight, optimal total-cover for 
  $\cL^{}_0$ has cost at least $\rho$. The $\LC_\rho$ problem is a promise problem which receives a \LC instance 
  with covering promise $\rho$ (also known as a $\rho$-promise instance) as input and correctly classify it in one 
  of those two cases.
\end{problem}

Notice the behavior of $\LC_\rho$ is left unspecified for non-promise instances. Therefore, any answer is 
acceptable in such case. Due to this characteristic, the $\LC_\rho$ problem is also referred as a \emph{gap-problem}
with gap $\rho$ in the literature. 

The result below is the basis for the polynomial time hardness we shall exhibit.

\begin{theorem}[Dinur and Safra~\cite{DS04}]\label{thm:LC-inapprox}
  Let $c$ be any constant in $(0,1/2)$ and $\rho_c(s):=2^{(\log s)^{1-1/\delta_c(s)}}$ with $\delta_c(s):=(\log\log s)^{c}$.
  There are \LC instances $\cL^{}_0$ with covering-promise $\rho_c(s)$ such that it is \NP-hard to distinguish 
  between the cases in which $\kappa(\cL^{}_0)$ is equal to $1$ or at least $\rho_c(s)$.
\end{theorem}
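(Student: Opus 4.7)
The plan is to derive the theorem as a consequence of the PCP theorem combined with carefully parameterized parallel repetition, reformulated in the \LC language of total-covers; since the result is attributed to~\cite{DS04}, what follows is an outline of the standard approach. First I would start from a canonical \NP-hard two-prover one-round game supplied by the PCP theorem, phrased as a feasible \LC instance with constant label-set sizes and a fixed constant gap $\rho_0 > 1$: in the YES case some labeling assigns a single label per vertex and covers every edge, and in the NO case every labeling leaves at least a $1/\rho_0$ fraction of edges uncovered. By Lemma~\ref{lem:tight-tcover} I may restrict attention to tight total-covers throughout without affecting the optimum, so the two promise cases translate directly into $\kappa(\cL^{}_0) = 1$ versus $\kappa(\cL^{}_0) \geq \rho_0$.

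Next I would amplify the gap by $k$-fold parallel repetition, applying Raz's theorem to boost the soundness to $\rho_0^{\Omega(k)}$ while the vertex count, edge count, and label-set sizes grow by factors of at most $n^k$, $m^k$, and $\lambda^k$ respectively. For the reduction to remain polynomial in the input — and hence to yield \NP-hardness rather than merely quasi-polynomial hardness — the number of repetitions $k = k(n)$ must be tied to the input size, and the alphabet blowup must be absorbed by randomness-efficient or derandomized variants of parallel repetition, together (in the Dinur--Safra refinement) with a recursive PCP composition. Taking $k \approx (\log n)^{1 - 1/\delta_c(n)}$ with $\delta_c(n) = (\log \log n)^c$ balances the two competing growths and produces an instance of size $s$ polynomial in $n$ with soundness of order $2^{(\log s)^{1 - 1/\delta_c(s)}} = \rho_c(s)$.

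Finally I would recast the resulting hard game in the promise formulation of Problem~\ref{pro:label-cover}: in the YES case the honest prover strategy directly gives a tight total-cover of cost exactly $1$, while in the NO case any tight total-cover assigning on average fewer than $\rho_c(s)$ labels per vertex of $X$ would, by sampling one label per $X$-vertex uniformly at random, yield a labeling of the repeated game covering more edges than the soundness bound allows, a contradiction. The main obstacle — and the technical heart of the Dinur--Safra argument — lies in the second step: plain parallel repetition multiplies the alphabet by $\lambda^k$ per round, so a naive choice of $k$ inflates the instance super-polynomially and forfeits \NP-hardness. Threading the needle so that the gap grows as fast as $2^{(\log s)^{1 - o(1)}}$ while $s$ remains polynomial in $n$ requires the careful alphabet management and recursive PCP composition of~\cite{DS04}, and is where essentially all the real work resides; the reformulation into the tight-total-cover language of Problem~\ref{pro:label-cover} is, by contrast, routine once the gap-hard game is in hand.
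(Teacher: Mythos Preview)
The paper does not prove Theorem~\ref{thm:LC-inapprox} at all: it is quoted as a black-box result from Dinur and Safra~\cite{DS04} and immediately followed by Remark~\ref{rem:DS-inst-size} recording the instance parameters needed downstream. There is therefore no ``paper's own proof'' to compare your proposal against; the authors simply import the statement and use it.

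That said, your outline is a reasonable high-level sketch of the standard route to such a result, and you correctly identify the crux: na\"ive $k$-fold parallel repetition of a constant-gap PCP blows up the alphabet by $\lambda^k$, so getting a gap of $2^{(\log s)^{1-o(1)}}$ while keeping the instance size polynomial requires the delicate parameter balancing carried out in~\cite{DS04}. One small correction: the Dinur--Safra argument does not rely on derandomized parallel repetition per se, but rather on choosing the number of repetitions $k$ as a slowly growing function of the input size (of order $\mathrm{polyloglog}$) so that the alphabet growth $\lambda^k$ stays sub-polynomial while the gap $\rho_0^{\Omega(k)}$ reaches the claimed magnitude; the $(\log\log s)^c$ in $\delta_c(s)$ is precisely the trace of this choice. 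Your third step, converting soundness of the repeated game into a lower bound on $\kappa(\cL_0)$ via random label sampling, is the standard ``weak duality'' argument and is essentially the content of the paper's later Lemma~\ref{lem:LC-weakduality}.
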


The closer to $1/2$ the above constant $c$ gets, the larger the hardness of approximation factor becomes. 
Therefore, from now on we shall consider that $c$ is fixed to a value close to $1/2$. 

\begin{remark}\label{rem:DS-inst-size}
  Every \LC instance produced by Dinur and Safra's reduction is 
  feasible, has covering-promise $\rho_c(s)$, and satisfies the following relations:
  $r=s\floor{\delta_c(s)}$, $\lambda=\Theta(\rho_c(s))$, $\lambda'=\Theta(\rho_c(s)^{\delta_c(s)})=o(s)$, 
  $s\floor{\delta_c(s)}\leq m\leq s^2\floor{\delta_c(s)}$, and 
  $\pi\leq m\lambda\lambda'=O(s^2\delta_c(s)\rho_c(s)^{\delta_c(s)+1})=o(s^3)$, for $s$ as specified
  in Notation~\ref{nta:LC-sizes}. It is then immediate that each such instance has size $o(s^3)$.
\end{remark}

We now introduce a refined version of the \LC definitions, in which the vertices in the sets $X$ and $Y$ 
have their own copies of the label-sets $L_0^{}$ and $L_0'$, respectively. We then show that all structural and 
approximation properties are preserved in this new version.

\begin{definition}\label{dfi:ELC-inst}
  Let $\cL^{}_0=(G,L_0^{},L_0',\Pi^{}_0)$ be a feasible \LC instance and consider the sets $L_x:=\{(x,\ell_0^{}):\ell_0^{}\in L_0^{}\}$ 
  for each vertex $x\in X$, and $L'_y:=\{(y,\ell'_0):\ell'_0\in L_0'\}$ for each vertex $y\in Y$. Also, define the sets
  $L:=\bigcup_{x\in X} L_x$, $L':=\cup_{y\in Y} L'_y$, and $\Pi:=\bigcup_{(x,y)\in E}\Pi_{(x,y)}$,
  with
  \[
    \Pi_{(x,y)}:=\left\{\big((x,\ell_0^{}),(y,\ell'_0)\big):(\ell_0^{},\ell'_0)\in\Pi^0_{(x,y)}\right\}.
  \]
  The quadruple $\cL=(G,L,L',\Pi)$ is called a \emph{refinement} of $\cL^{}_0$.
\end{definition}

It is clear that $|L_x|=\lambda$ for each vertex $x\in X$, $|L_y|=\lambda'$ for each vertex $y\in Y$, $|\Pi_e|=\pi_e$ for 
each edge $e\in E$, $|\Pi|=\pi$, and that a labeling for $\cL$ is a mapping $f$ such that $x\mapsto f(x)\subseteq L_x$ for 
each vertex $x\in X$, and $y\mapsto f(y)\subseteq L'_y$, $f(y)\neq\emptyset$ for each vertex $y\in Y$. Furthermore, the 
remaining definitions and concepts can be adapted in a straight forward fashion, and the size of a refined instance 
is also $o(s^3)$.

\begin{lemma}\label{lem:LC-ELC}
  For any $\rho>0$, there is a one-to-one cost preserving correspondence between solutions to the $\LC_\rho$ 
  problem and to its refined version.
\end{lemma}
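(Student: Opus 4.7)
The plan is to exhibit an explicit bijection $\Phi$ between labelings of $\cL^{}_0$ and labelings of its refinement $\cL$, and then verify that it preserves all the relevant features (cardinality, edge-coverage, and tightness). Given a labeling $f^{}_0$ of $\cL^{}_0$, I will define $\Phi(f^{}_0)$ by tagging each label with its vertex: for $x\in X$, set
\[
  \Phi(f^{}_0)(x) := \{(x,\ell^{}_0) : \ell^{}_0\in f^{}_0(x)\} \subseteq L_x,
\]
and symmetrically $\Phi(f^{}_0)(y) := \{(y,\ell'_0) : \ell'_0\in f^{}_0(y)\}\subseteq L'_y$ for $y\in Y$. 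The inverse map simply strips the first coordinate off each tagged label; this is well-defined because within each $L_x$ (resp.\ $L'_y$) the tag $x$ (resp.\ $y$) is constant, so no collisions occur. Thus $\Phi$ is a bijection between the set of labelings of $\cL^{}_0$ and the set of labelings of $\cL$.

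Next I will check that $\Phi$ preserves the three quantities needed. First, $|\Phi(f^{}_0)(v)|=|f^{}_0(v)|$ for every $v\in X\cup Y$ by construction, so $\kappa(\Phi(f^{}_0))=\Phi(f^{}_0)(X)/|X|=f^{}_0(X)/|X|=\kappa(f^{}_0)$, and likewise $\Phi(f^{}_0)(Y)=f^{}_0(Y)$, making tightness equivalent on the two sides. Second, by the very definition of $\Pi_{(x,y)}$ in Definition~\ref{dfi:ELC-inst}, the pair $\bigl((x,\ell^{}_0),(y,\ell'_0)\bigr)$ lies in $\Pi_{(x,y)}$ if and only if $(\ell^{}_0,\ell'_0)\in\Pi^0_{(x,y)}$. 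Therefore the edge $(x,y)$ is covered by $\Phi(f^{}_0)$ exactly when for every $(y,\ell'_0)\in\Phi(f^{}_0)(y)$ there is some $(x,\ell^{}_0)\in\Phi(f^{}_0)(x)$ with $(\ell^{}_0,\ell'_0)\in\Pi^0_{(x,y)}$, which is precisely the coverage condition for $f^{}_0$ at $(x,y)$ in $\cL^{}_0$.

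Combining these observations, $\Phi$ restricts to a bijection between total-covers of $\cL^{}_0$ and of $\cL$, and between tight total-covers of the two instances, with $\kappa$ preserved exactly. Consequently $\kappa(\cL)=\kappa(\cL^{}_0)$, and by Lemma~\ref{lem:tight-tcover} both sides admit tight optimal total-covers. In particular, $\cL^{}_0$ has a tight optimal total-cover of cost $1$ (resp.\ at least $\rho$) if and only if $\cL$ does, so the two problems are equivalent as promise problems and their solutions are in cost-preserving one-to-one correspondence. The only thing that requires care here is the observation that the tags do not collide within any single $L_x$ or $L'_y$ so that the inverse is well-defined; everything else is a direct translation across Definition~\ref{dfi:ELC-inst}, so no substantial obstacle arises.
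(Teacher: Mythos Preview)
Your proof is correct and follows essentially the same approach as the paper: both define the bijection that tags each label with its vertex, $f(x)=\{(x,\ell_0):\ell_0\in f_0(x)\}$ and similarly on $Y$, and observe that this preserves coverage, tightness, and cost. Your write-up is more explicit (you spell out the inverse, the cardinality preservation, and the coverage equivalence), whereas the paper condenses all of this into a single ``it is easy to see'' sentence, but the underlying argument is identical.
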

\begin{proof}
  It is easy to see that $f^{}_0$ is a (tight) total-cover for the $\LC_\rho$ problem if and only if $f$ is a
  (tight) total-cover for its refinement, where $f(x)=\{(x,\ell_0^{}):\ell_0^{}\in f^{}_0(x)\}$ for 
  every $x\in X$, and $f(y)=\{(y,\ell'_0)\in L'_y:\ell'_0\in f^{}_0(y)\}$ for every $y\in Y$ (or $f(y)=(y,f^{}_0(y))$ 
  if the total-covers are tight). Furthermore it is clear that $\kappa(f^{}_0)=\kappa(f)$, in any case. 
\end{proof}

Henceforth, all the \LC instances used are assumed to be of the refined kind.
For more information on the \LC problem and its applications, consult the survey by Arora and Lund~\cite{AL96}, 
the article by Moshkovitz and Raz~\cite{MR-10-JACM}, and the book by Arora and Barak~\cite{AB09}. 

\section{Reduction to pure Horn CNFs and a Polynomial Time Hardness Result}\label{sec:reduction}
Our first reduction starts with a \LC instance $\cL$ as input and produces a pure Horn CNF formula $\Phi$, which
defines a pure Horn function $h$. The driving idea behind this reduction is that of tying the cost of tight, 
optimal total-covers of $\cL$ to the size of clause minimum prime pure Horn CNF representations of $h$.

With this in mind, let $\cL=(G=(X,Y,E),L,L',\Pi)$ be a \LC instance (in compliance with 
Theorem~\ref{thm:LC-inapprox} and Definition~\ref{dfi:ELC-inst}), and let $d$ and $t$ be positive integers 
to be specified later. Both $d$ and $t$ will be used as (gap) amplification devices. For nonnegative integers 
$n$, define $[n]:=\{1,\ldots,n\}$.

Associate propositional variables $u(\ell)$ with every label $\ell\in L\cup L'$, $e(x,y,i)$ and $e(x,y,\ell',i)$ 
with every edge $(x,y)\in E$, every label $\ell'\in L'_y$ and every index $i\in[d]$. Let $v(j)$, for 
indices $j\in[t]$, be extra variables, and consider the following families of clauses:
\[
  \begin{array}{clr}
    \textrm{(a)} & \displaystyle u(\ell)\wedge u(\ell')\lra e(x,y,\ell',i)         
                 & \forall\ (x,y)\in E,\ (\ell,\ell')\in\Pi_{(x,y)},\ i\in[d]; \\[4mm]

    \textrm{(b)} & \displaystyle \bigwedge_{z\in N(y)} e(z,y,\ell',i)\lra e(x,y,i) 
                 & \forall\ (x,y)\in E,\ \ell'\in L'_y,\ i\in[d]; \\[5mm]

    \textrm{(c)} & \displaystyle e(x,y,i)\lra e(x,y,\ell',i) 
                 & \forall\ (x,y)\in E,\ \ell'\in L'_y,\ i\in[d]; \\[4mm]

    \textrm{(d)} & \displaystyle \bigwedge_{i\in[d]}\bigwedge_{(x,y)\in E}e(x,y,i)\lra u(\ell) 
                 & \forall\ \ell\in L\cup L'; \\[5mm]

    \textrm{(e)} & \displaystyle v(j)\lra u(\ell) 
                 & \forall\ j\in[t],\ \ell\in L\cup L'; \\[2mm]
  \end{array}
\]
where as before, $N(y):=\{x\in X:(x,y)\in E\}$ is the open neighborhood of the vertex $y\in Y$.

\begin{definition}\label{dfi:fcns}
  Let us call $\Psi$ and $\Phi$ the \emph{canonical} pure Horn CNF formulae defined, respectively,  by the families of 
  clauses (a) through (d) and by all the families of clauses above. Let $g$ and $h$ be, in that order, the pure Horn 
  functions they represent. 
\end{definition}

The construction presented above can be divided into two parts. The families of clauses appearing in $\Psi$, namely, 
clauses of type (a) through (d), form an independent core since the function $g$ is an exclusive component of the 
function $h$ (as we shall show). This core can be analysed and minimized separately from the remainder, and its role 
is to reproduce the structural properties of the \LC instance. In more details, clauses of type 
\begin{itemize}
\item[(a)]
  correspond to the constraints on the pairs of labels that can be assigned to each edge; 

\vspace{2mm}
\item[(b)]
  will assure that edges $(x,y)\in E$ are covered, enforcing the matching of the labels assigned to all the
  neighbors of vertex $y$;

\vspace{2mm}
\item[(c)]
  assure that if an edge can be covered in a certain way, then it can be covered in all legal ways --- thus implying 
  that it is not necessary to keep track of more than one covering possibility for each edge in clause minimum 
  prime representations;

\vspace{2mm}
\item[(d)]
  translate the total-cover requirement and reintroduce all the labels available ensuring that if a total-cover 
  is achievable, so are all the others; this reintroduction of labels is paramount to the proper functioning of 
  the reduction as explained below.
\end{itemize}

The family of clauses occurring in $\Phi\setminus\Psi$, namely, the clauses of type (e), constitutes the second
part of the construction. These clauses have the role of introducing an initial collection of labels, which sole 
purpose is to help achieve the claimed hardness of approximation result. The intended behavior is as follows.

Consider initially that $d=t=1$. It is known that given any subset of the variables of $h$ as input, the Forward 
Chaining procedure in any pure Horn CNF representation of $h$ will produce the same output (cf. 
Lemma~\ref{lem:horn-FC}). In particular, for the singleton $\{v(1)\}$, the output in $\Phi$ will be the set
with all the variables of $h$, and so will be the output obtained in any clause minimum prime pure Horn CNF
representing $h$.

The reintroduction of labels performed by the family of clauses (d) may allow for some clauses of type (e) to
be dropped without incurring in any loss. In slightly more details, as long as a subset of the family of 
clauses (e) introduces enough labels so that the Forward Chaining procedure in $\Phi$ is able to eventually 
trigger the family of clauses (d), the remaining clauses of type (e) can be dismissed. All the missing labels 
will be available by the end of the procedure's execution. It is not hard to see at this point that subsets 
of retained clauses of type (e) and total-covers of the \LC instance in which the reduction is based are in 
one-to-one correspondence.

Now, supposing that a clause minimum prime pure Horn CNF representation of $h$ resembles the canonical form $\Phi$, 
we just have to compensate for the number of clauses in $\Psi$ to obtain a distinguishable gap that mimics the 
one exhibited by the \LC (as a promise) problem. This is achieved by making the gap amplification parameter $t$ 
which the clauses of type (e) depend upon large enough. 

However, in principle, there is no guarantee that a clause minimum prime pure Horn CNF representation of $h$, say
$\Upsilon$, resembles $\Phi$ or that $\Upsilon$ has any clause of type (e) whatsoever. It may be advantageous 
to $\Upsilon$ to have prime implicates where $v(j)$, for $j\in[t]$, occurs in their subgoals or prime implicates 
with variables other than $u(\ell)$, for $\ell\in L\cup L'$, occurring as heads. Furthermore, the number of prime 
implicates in $\Upsilon$ involving $v(j)$ might simply not depend on the number of labels. Indeed, if $d=1$ as
we are supposing, whenever the number of edges $|E|$ turns out to be strictly smaller than $\kappa(\cL)$, the 
cost of an optimal total-cover for $\cL$, it would be advantageous for $\Upsilon$ to have prime implicates of the 
form $v(j)\lra e(x,y,1)$, for $(x,y)\in E$. This not only breaks the correspondence mentioned above, but it renders 
the gap amplification device $t$ innocuous and the whole construction useless.

We manage to overcome the above difficulty throughout a second amplification device, the parameter $d$ which 
the clauses of type (a) through (d) depend upon. As we shall prove in Lemma~\ref{lem:cnf-no-shortcuts}, setting 
$d=1+r\lambda+s\lambda'$ (which is strictly larger than the total number of labels available in $\cL$ --- cf. 
Notation~\ref{nta:LC-sizes} and Definition~\ref{dfi:ELC-inst}) allows us to control the shape of the prime 
implicates involving variables $v(j)$ in prime pure Horn clause minimum representations of $h$: they will be 
precisely some of the clauses of type (e). Moreover, after showing that the function $g$ is an exclusive 
component of the function $h$, we shall see that we do not need to concern ourselves with the actual form of 
clause minimum prime pure Horn CNF representations of $g$. Therefore, in a sense, the canonical form $\Phi$ has 
indeed a good resemblance to a clause minimum prime pure Horn CNF representing $h$, and the intended behavior is 
achieved in the end.

\subsection{Correctness of the CNF Reduction}
In this subsection, we formalize the discussion presented above. We will constantly use the canonical 
representations $\Phi$ and $\Psi$ to make inferences about the functions $h$ and $g$ they respectively
define, and such inferences will most of the time be made throughout Forward Chaining. We start with 
some basic facts about $\Phi$ and $\Psi$.

\begin{lemma}\label{lem:cnf-reduction-size}
  Let $d$ and $t$ be as above and let $r$, $s$, $m$, $\lambda$, $\lambda'$, and $\pi$ be as in 
  Notation~\ref{nta:LC-sizes}. It holds that the number of clauses and variables in $\Phi$ are, 
  respectively, 
  \[
    |\Phi|_c=(t+1)(r\lambda+s\lambda')+d(\pi+2m\lambda') 
    \quad\text{and}\quad 
    |\Phi|_v=t+dm(\lambda'+1)+r\lambda+s\lambda'.
  \]
  In $\Psi$, those numbers are, respectively, 
  \[
    |\Psi|_c=r\lambda+s\lambda'+d(\pi+2m\lambda')
    \quad\text{and}\quad
    |\Psi|_v=dm(\lambda'+1)+r\lambda+s\lambda'.
  \]
\end{lemma}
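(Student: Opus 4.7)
The plan is straightforward counting. Clauses and variables in $\Phi$ and $\Psi$ are organized into clearly labeled families, so all that is required is to enumerate the sizes of the indexing sets using Notation~\ref{nta:LC-sizes} and Definition~\ref{dfi:ELC-inst}, then sum, taking care not to double-count variables.

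For the clause count in $\Phi$, I would proceed family by family. Family~(a) is indexed by an edge $(x,y)\in E$, an admissible pair $(\ell,\ell')\in\Pi_{(x,y)}$, and an index $i\in[d]$, which contributes $d\sum_{e\in E}\pi_e=d\pi$ clauses. Families~(b) and~(c) are each indexed by $(x,y)\in E$, $\ell'\in L'_y$ and $i\in[d]$; since in the refined instance $|L'_y|=\lambda'$ uniformly for every $y\in Y$, each of these two families contributes $dm\lambda'$ clauses, for a combined total of $2dm\lambda'$. Family~(d) contributes one clause per label in $L\cup L'$, namely $r\lambda+s\lambda'$. Finally, family~(e) is indexed by $(j,\ell)\in[t]\times(L\cup L')$, producing $t(r\lambda+s\lambda')$ clauses. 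Summing and factoring yields
\[
  |\Phi|_c \;=\; d(\pi+2m\lambda') + (t+1)(r\lambda+s\lambda'),
\]
exactly as claimed. Since $\Psi$ is obtained from $\Phi$ by discarding family~(e), one simply omits the $t(r\lambda+s\lambda')$ summand to obtain $|\Psi|_c=d(\pi+2m\lambda')+r\lambda+s\lambda'$.

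For the variable count, I would enumerate the four groups of propositional variables appearing in the construction: the $u(\ell)$ variables (shared among families~(a), (d) and~(e)) contribute $|L|+|L'|=r\lambda+s\lambda'$; the $e(x,y,i)$ variables (appearing in~(b) and~(d)) contribute $dm$; the $e(x,y,\ell',i)$ variables (appearing in~(a), (b) and~(c)) contribute $dm\lambda'$; and the $v(j)$ variables (appearing only in~(e)) contribute $t$. Combining the two $e$-families as $dm(\lambda'+1)$ and adding the other groups gives $|\Phi|_v=t+dm(\lambda'+1)+r\lambda+s\lambda'$. Since the $v(j)$ variables are introduced only by family~(e), dropping that family removes precisely these $t$ variables, yielding $|\Psi|_v$ by subtracting $t$.

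There is no real obstacle here: the lemma is a pure bookkeeping exercise. The only points requiring minor care are invoking $\sum_{e\in E}\pi_e=\pi$ in family~(a), using the fact that the refinement gives $|L'_y|=\lambda'$ for every $y\in Y$ (so that each edge contributes a uniform $\lambda'$ factor in families~(b) and~(c)), and remembering that the $u$-variables and both families of $e$-variables are reused across several clause families and therefore must be counted only once.
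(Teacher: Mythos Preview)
Your proposal is correct and follows essentially the same approach as the paper: a direct family-by-family count of clauses and a group-by-group count of variables, with $\Psi$ obtained by dropping family~(e). The paper's proof is terser but identical in substance, listing $\#(\textrm{a})=d\pi$, $\#(\textrm{b})=\#(\textrm{c})=dm\lambda'$, $\#(\textrm{d})=r\lambda+s\lambda'$, $\#(\textrm{e})=t(r\lambda+s\lambda')$ and the four variable groups exactly as you do.
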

\begin{proof}
  For $\#(\alpha)$ denoting the number of clauses of type $(\alpha)$ in $\Phi$, simple counting arguments show that
  the equalities
  \[
    \begin{array}{lll}
      \#(\textsl{a})=d\pi            & \#(\textsl{c})=dm\lambda'         & \#(\textsl{e})=t(r\lambda+s\lambda') \\
      \#(\textsl{b})=dm\lambda'\quad & \#(\textsl{d})=r\lambda+s\lambda'\quad 
    \end{array}
  \]
  hold. For the number of variables, just notice there are $r\lambda+s\lambda'$ variables $u(\ell)$, $dm$ variables
  $e(x,y,i)$, $dm\lambda'$ variables $e(x,y,\ell',i)$, and $t$ variables $v(j)$. To conclude the proof, just 
  remember that the only difference between $\Phi$ and $\Psi$ is the absence of the family of clauses of type (e)
  in the latter.
\end{proof}

Considering the bounds for $r$, $m$, $\lambda$, $\lambda'$, and $\pi$ provided in Remark~\ref{rem:DS-inst-size},
the above result immediately implies that as long as the quantities $d$ and $t$ are polynomial in $s$, namely, 
the number of vertices in $Y$, the construction of $\Phi$ from $\cL$ can be carried out in polynomial time in
$s$. More meaningfully, it can be carried out in polynomial time in $n=|\Phi|_v$, the number of variables of 
$h$. 

We now establish the pure Horn function $g$ as an exclusive component of $h$. This structural result allows
us to handle $g$ in a somewhat black-box fashion. Specifically, as we shall see briefly, it is not required
of us to precisely know all the properties and details of a clause minimum representation of $g$. We can
mainly concentrate on the study of the prime implicates that might involve the variables in $h$ that are not 
in $g$, namely, the variables $v(j)$, for $j\in[d]$. 

\begin{lemma}\label{lem:g-excl-comp}
  The function $g$ is an exclusive component of the function $h$. Consequently, $g$ can be analysed and 
  minimized separately.
\end{lemma}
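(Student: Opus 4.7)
The plan is to apply Lemma~\ref{lem:FC-exclusive-set} with $W := V_g$, the set of variables occurring in the canonical representation $\Psi$ of $g$ --- namely the $u(\ell)$ for $\ell \in L \cup L'$, together with $e(x,y,i)$ and $e(x,y,\ell',i)$ for $(x,y) \in E$, $\ell' \in L'_y$, and $i \in [d]$. Note that $V_h \setminus W = \{v(1),\ldots,v(t)\}$, so the only variables of $h$ outside $W$ are those introduced by the family of clauses of type (e).

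The first step is to verify that $F_\Phi(W) = W$. This reduces to a short case analysis on the five clause families of $\Phi$: the body of every type (e) clause is a singleton $\{v(j)\}$ which is disjoint from $W$, so no such clause can be triggered when Forward Chaining is started from $W$; whereas for clauses of types (a), (b), (c), and (d), an inspection shows the head always lies in $\{u(\ell), e(x,y,i), e(x,y,\ell',i)\} \subseteq W$. Consequently, Forward Chaining from $W$ inside $\Phi$ introduces no variable outside $W$, and $F_\Phi(W) = W$.

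Lemma~\ref{lem:FC-exclusive-set} then yields that $\cX(W) = \{C \in \cI(h) : \mathsf{Vars}(C) \subseteq W\}$ is an exclusive family of $h$. To identify the corresponding $\cX(W)$-component of $h$ with $g$, I would exhibit $\Phi \cap \cX(W)$ explicitly: the clauses of $\Phi$ whose variables all lie in $W$ are exactly those of types (a)--(d), which is precisely $\Psi$, a representation of $g$. Hence by Definition~\ref{dfi:exclusive-comp}, $h^{}_{\cX(W)} = g$, and Lemma~\ref{lem:subfunc-switch} then legitimises analysing and minimising $g$ independently from the rest of $h$.

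The only subtle point is that Lemma~\ref{lem:FC-exclusive-set} is stated for a prime CNF, whereas $\Phi$ as written has not been formally verified to be prime. I would dispense with this via Lemma~\ref{lem:horn-FC}: because Forward Chaining is representation-independent, the closure $F_\Phi(W) = W$ can equivalently be checked on a prime irredundant reduction of $\Phi$, which is the setting the paper assumes henceforth. I do not anticipate a genuine obstacle in this lemma; the main thing to get right is simply the enumeration of the head/body patterns across all five clause families.
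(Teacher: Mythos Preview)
Your proposal is correct and follows essentially the same approach as the paper: take $W=V_g$, observe that Forward Chaining in $\Phi$ from $W$ stays inside $W$ (the paper states this by noting that no clause in $\Psi$ has head outside $V_g$, leaving the type~(e) case implicit; you make the case analysis explicit), invoke Lemma~\ref{lem:FC-exclusive-set} to get the exclusive family $\cX(V_g)$, and then identify $\Phi\cap\cX(V_g)=\Psi$ to conclude via Definition~\ref{dfi:exclusive-comp} and Lemma~\ref{lem:subfunc-switch}. Your remark on the primeness hypothesis in Lemma~\ref{lem:FC-exclusive-set} is a fair technical point that the paper does not address explicitly; your workaround via Lemma~\ref{lem:horn-FC} is fine, and in fact an inspection of the proof of Lemma~\ref{lem:FC-exclusive-set} shows primeness is not actually used there.
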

\begin{proof}
  Let $V_g$ be the set of variables occurring in $\Psi$. By definition, these are the variables the function 
  $g$ depends upon. Since no clause in $\Psi$ has head outside $V_g$, it is immediate that $V_g$ is closed 
  under Forward Chaining in $\Phi$. As $\Phi$ represents $h$, Lemma~\ref{lem:FC-exclusive-set} then implies 
  that 
  $
    \cX(V_g) := \{C\in\cI(h): \mathsf{Vars}(C)\subseteq V_g\}
  $ 
  is an exclusive family for $h$. This gives that $\Psi=\Phi\cap\cX(V_g)$ is an $\cX(V_g)$-exclusive component 
  of $h$ (cf. Definition~\ref{dfi:exclusive-comp}) and therefore, that $g$ is an exclusive component of $h$.
  Now, using Lemma~\ref{lem:subfunc-switch}, we obtain a proof of the second claim as wished.
\end{proof}

With some effort, it is possible to prove that $\Psi$ is a clause minimum prime pure Horn CNF representation of 
$g$. For our proofs however, a weaker result suffices.

\begin{lemma}\label{lem:cnf-psi-lb}
  Let $\Theta$ be a clause minimum prime pure Horn CNF representation of $g$. We have
  $|\Psi|_c/(\lambda+\lambda')\leq |\Theta|_c\leq |\Psi|_c$.
\end{lemma}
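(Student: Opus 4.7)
The upper bound $|\Theta|_c \leq |\Psi|_c$ is immediate from the minimality of $\Theta$, since $\Psi$ is a pure Horn CNF representation of $g$ (which can be reduced to a prime irredundant one without increasing its clause count). For the lower bound my plan is a head-counting argument carried out in three steps.

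First, I will show that every variable $v \in V_g$ occurs as the head of some prime implicate of $g$. Each $v \in V_g$ is in fact the head of at least one clause of $\Psi$: family (d) supplies heads $u(\ell)$, family (b) supplies heads $e(x,y,i)$, and families (a) and (c) supply heads $e(x,y,\ell',i)$. Moreover $v \in F_g(V_g \setminus \{v\}) \setminus \{v\}$ in each case --- applying a single (d), (b), or (c) clause suffices to derive $v$ from $V_g \setminus \{v\}$ --- so taking a minimal body $S \subseteq V_g \setminus \{v\}$ with $S \lra v$ still an implicate of $g$ yields a prime implicate of $g$ with head $v$.

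Second, I invoke Lemma~\ref{lem:horn-FC}: for the prime implicate $S \lra v$ from step one, $F_\Theta(S) = F_g(S) \ni v$, so Forward Chaining on $\Theta$ starting from $S$ must at some point introduce $v$ by triggering a clause whose head is $v$. Since each pure Horn clause has a unique head, we conclude that $\Theta$ contains at least one clause with $v$ as head for each $v \in V_g$, and hence $|\Theta|_c \geq |V_g|$. Third, I count clauses of $\Psi$ grouped by head: exactly one clause has head $u(\ell)$ (from family (d)); exactly $\lambda' = |L'_y|$ clauses have head $e(x,y,i)$ (family (b), one per $\ell'$); and at most $\lambda + 1$ clauses have head $e(x,y,\ell',i)$ (at most $\lambda = |L_x|$ from family (a), namely $|\{\ell : (\ell,\ell')\in\Pi_{(x,y)}\}|$ of them, plus one from family (c)). The maximum over all heads is therefore at most $\max(1,\lambda',\lambda+1) \leq \lambda + \lambda'$, and summing clause counts by head gives $|\Psi|_c \leq (\lambda+\lambda')|V_g|$; combining with $|\Theta|_c \geq |V_g|$ yields $|\Theta|_c \geq |\Psi|_c/(\lambda+\lambda')$, as desired.

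The main (mildly technical) step is the representation-invariance argument of step two: although $\Theta$ may look structurally very different from $\Psi$, Forward Chaining is representation-independent, so every variable that heads a prime implicate of $g$ must still appear as the head of some clause of $\Theta$. The remainder of the argument is elementary enumeration.
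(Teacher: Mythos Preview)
Your proof is correct and follows essentially the same approach as the paper's own proof, which is extremely terse: the paper simply asserts that each variable of $\Psi$ appears at most $\lambda+\lambda'$ times as a head and must appear as a head at least once in any clause minimum representation, without justifying either claim. You have supplied the missing justifications --- the Forward Chaining argument for why every $v\in V_g$ must head some clause of $\Theta$, and the explicit per-head clause count in $\Psi$ --- so your writeup is a strict expansion of the paper's sketch rather than a different route.
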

\begin{proof}
  The upper bound is by construction. For the lower bound, observe that each variable of $\Psi$ appears no more 
  than $\lambda+\lambda'$ times as a head. As in any clause minimum representation of $g$ they must appear as 
  head at least once, the claim follows.
\end{proof}

The next lemma is a useful tool in showing whether two different representations of the pure Horn function $h$
are equivalent.

\begin{lemma}\label{lem:FC(v(j))}
  For all indices $j\in[t]$, it holds that $F_{h}(\{v(j)\})=V_g\cup\{v(j)\}$.
\end{lemma}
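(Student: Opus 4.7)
The plan is to compute $F_h(\{v(j)\})$ directly in the canonical representation $\Phi$, which is permitted by Lemma~\ref{lem:horn-FC}, and then verify containment in both directions.

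For $F_h(\{v(j)\}) \subseteq V_g \cup \{v(j)\}$, I would simply inspect the heads of the five families of clauses defining $\Phi$: clauses of types (a), (b), and (c) have heads of the form $e(x,y,\ell',i)$ or $e(x,y,i)$, while those of types (d) and (e) have heads $u(\ell)$. All of these belong to $V_g$, and in particular no clause of $\Phi$ has any $v(j')$ as head, so Forward Chaining started from $\{v(j)\}$ cannot escape $V_g \cup \{v(j)\}$.

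For the reverse containment, I would trace the rounds of Forward Chaining starting at $\{v(j)\}$. The clauses of type (e) with body $v(j)$ fire immediately, placing every $u(\ell)$ with $\ell \in L \cup L'$ into $F_\Phi(\{v(j)\})$. Since each $\Pi_{(x,y)}$ is non-empty, every clause of type (a) then fires, so $e(x,y,\ell',i)$ is added for every pair $(\ell,\ell') \in \Pi_{(x,y)}$ and every $i \in [d]$. Next I would invoke feasibility of $\cL$: fixing any total-cover $f_0$ and, for each $y \in Y$, picking $\ell'_y \in f_0(y)$, the covering property guarantees that every $z \in N(y)$ has some $\ell_z \in f_0(z)$ with $(\ell_z, \ell'_y) \in \Pi^0_{(z,y)}$. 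Hence $e(z, y, \ell'_y, i)$ is available for all such $z$ and all $i \in [d]$, triggering the type (b) clauses to add $e(x,y,i)$ for every $(x,y) \in E$ and $i \in [d]$. Clauses of type (c) then supply $e(x,y,\ell',i)$ for all remaining $\ell' \in L'_y$, so every variable of $V_g$ is in $F_\Phi(\{v(j)\})$, yielding the reverse inclusion.

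The main obstacle will be the use of feasibility to trigger the type (b) clauses: without a globally consistent choice $y \mapsto \ell'_y$, only \emph{some} of the edge variables $e(\cdot, y, \ell', i)$ would be present for each $\ell'$, and no single $\ell'$ might simultaneously witness all neighbors of $y$, so the large body of a type (b) clause could fail to be satisfied and the chaining could stall before reaching the $e(x,y,i)$ variables. Feasibility is precisely the hypothesis that makes such a uniform choice possible; the clauses of type (d) then play no role in the chaining but, as discussed in the narrative preceding the lemma, are essential for the reintroduction of labels when Forward Chaining is started from other seeds.
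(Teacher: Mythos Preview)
Your proof is correct and follows the same approach as the paper: compute the forward-chaining closure directly in the canonical representation $\Phi$, appealing to Lemma~\ref{lem:horn-FC}. Your argument is in fact more careful than the paper's. The paper's two-line proof simply asserts that the inclusion $\{e(x,y,i)\}\subseteq F_\Phi(\{v(j)\})$ could fail only if some $u(\ell'')$ were absent, and then notes that the type~(e) clauses supply all of these; it never spells out \emph{why} having every $u(\ell)$ actually forces every $e(x,y,i)$ into the closure. You correctly isolate the missing step: firing a type~(b) clause for a given $y$ requires a single $\ell'$ that simultaneously admits a matching label on \emph{every} neighbour $z\in N(y)$, and this is exactly what the standing feasibility assumption on $\cL$ provides via a total-cover. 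Without that hypothesis the chaining could indeed stall at the type~(b) stage, so your explicit invocation of feasibility is the right way to close the argument; the paper relies on it tacitly.
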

\begin{proof}
  It is enough to show that $\{e(x,y,i):(x,y)\in E,i\in[d]\}\subseteq F_{\Phi}(\{v(j)\})$, for a fixed $j\in[t]$. 
  The inclusion would be false if there existed a label $\ell''\in L\cup L'$ such that $u(\ell'')\not\in F_{\Phi}(\{v(j)\})$.
  As for every label $\ell\in L\cup L'$, $v(j)\lra u(\ell)$ is a clause in $\Phi$, this cannot happen. Hence, the 
  inclusion holds, implying the claim.
\end{proof}

The next couple of lemmas deal with the structure of prime implicates involving the variables $v(j)$, for 
$j\in[t]$. The first states a simple, but useful fact which is valid in any representation of $h$. The second 
proves how the amplification device depending on the parameter $d$ shapes those prime implicates in clause 
minimum representations of $h$ to the desired form: $v(j)\lra u(\ell)$, with $\ell\in L\cup L'$.
 
\begin{lemma}\label{lem:v(j)-quadratic}
  A variable $v(j)$, for some index $j\in[t]$, is never the head of an implicate of $h$. Moreover, every prime 
  implicate of $h$ involving $v(j)$ is quadratic.
\end{lemma}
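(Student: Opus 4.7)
The plan is to prove both parts by exploiting the fact that in the canonical representation $\Phi$, the variables $v(j)$ appear only in the bodies of clauses of type (e), never as heads, and to combine this with Lemma~\ref{lem:FC(v(j))} which tells us that a single $v(j)$ already forward-chains to all of $V_g$.

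For the first part, I would argue as follows. Inspecting the families (a)--(e), no clause in $\Phi$ has any variable $v(j)$ as head. Consequently, a straightforward induction on the execution of the Forward Chaining procedure shows that for every $S \subseteq V_h$ with $v(j) \notin S$, the index $v(j)$ is never added to $F_\Phi(S)$; that is, $v(j) \notin F_\Phi(S)$. By Lemma~\ref{lem:horn-FC}, the same holds for $F_\Psi$ in any pure Horn CNF representation $\Psi$ of $h$, and hence $v(j) \notin F_h(S)$ whenever $v(j) \notin S$. Since the prime implicates of the pure Horn function $h$ are themselves pure Horn clauses, a putative implicate with head $v(j)$ would correspond to some body $S \not\ni v(j)$ with $v(j) \in F_h(S)$, which we have just ruled out.

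For the second part, let $C$ be a prime implicate of $h$ involving some $v(j)$. By part one, $v(j)$ cannot be the head of $C$, so $v(j)$ occurs in the body of $C$; write $C$ as $\{v(j)\} \cup S' \lra w$, with $v(j) \notin S'$ and $w \notin \{v(j)\} \cup S'$. Again by part one, the head $w$ cannot be any $v(j')$, so $w \in V_g$. Applying Lemma~\ref{lem:FC(v(j))}, we get $w \in V_g \subseteq F_h(\{v(j)\})$, which means that $\{v(j)\} \lra w$ is already an implicate of $h$. If $S' \neq \emptyset$, this implicate would strictly dominate $C$, contradicting primality. Hence $S' = \emptyset$ and $C$ is the quadratic clause $v(j) \lra w$ for some $w \in V_g$.

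There is no serious obstacle here: the proof hinges entirely on the asymmetric role of the $v(j)$'s in $\Phi$ (they only appear as body literals of type (e) clauses) together with the already-established Lemma~\ref{lem:FC(v(j))}. The only point requiring a little care is the appeal to Lemma~\ref{lem:horn-FC} to transfer the syntactic observation about $\Phi$ into a semantic statement about $h$, so that the argument applies to \emph{any} representation and hence to all prime implicates.
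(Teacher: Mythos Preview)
Your proof is correct and follows essentially the same approach as the paper's. The only cosmetic difference is that for the first part the paper appeals to resolution completeness (no clause in $\Phi$ has $v(j)$ as head, and resolution cannot create a new head variable), whereas you phrase the same observation in terms of Forward Chaining via Lemma~\ref{lem:horn-FC}; for the second part both arguments invoke Lemma~\ref{lem:FC(v(j))} in the same way, with your version simply spelling out the primality contradiction in more detail.
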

\begin{proof}
  The first claim is straight forward as all implicates of $h$ can be derived from $\Phi$ by resolution,
  and $v(j)$ is not the head of any clause of $\Phi$. By Lemma~\ref{lem:FC(v(j))}, $v(j)\lra z$ is an 
  implicate of $h$ for all $z\in V_g$. Since $h$ is a pure Horn function, the claim follows.
\end{proof}

\begin{lemma}\label{lem:cnf-no-shortcuts}
  Let $d=1+r\lambda+s\lambda'$. In any clause minimum prime pure Horn CNF representation of $h$, the prime implicates 
  involving the variables $v(j)$ have the form $v(j)\lra u(\ell)$, for all indices $j\in[t]$, and for some labels 
  $\ell\in L\cup L'$.	
\end{lemma}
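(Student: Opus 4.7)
The plan is to argue by contradiction, exploiting the fact that $d = 1 + r\lambda + s\lambda'$ is chosen one larger than the total number of labels. I will first show that in any clause-minimum prime representation $\Upsilon$ of $h$, the set $Z_j := \{z \in V_g : v(j) \lra z \in \Upsilon\}$ has size at most $d - 1$; and then I will show that if $Z_j$ contained any edge variable, then $|Z_j| \geq d$, yielding the desired contradiction.

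By Lemma~\ref{lem:v(j)-quadratic}, every clause of $\Upsilon$ involving $v(j)$ has the form $v(j) \lra z$ with $z \in V_g$; combining this with Lemmas~\ref{lem:horn-FC} and~\ref{lem:FC(v(j))} gives that $Z_j$ is a source set for $g$, i.e.\ $F_g(Z_j) = V_g$. A direct forward-chaining check through the clause families (a)--(d), using the universal-label property guaranteed for feasible \LC instances, shows that $F_g(L \cup L') = V_g$ as well. Hence replacing the clauses $\{v(j) \lra z : z \in Z_j\}$ in $\Upsilon$ by the (prime) quadratic implicates $\{v(j) \lra u(\ell) : \ell \in L \cup L'\}$ yields another prime representation of $h$ using exactly $|L \cup L'| = r\lambda + s\lambda' = d - 1$ clauses involving $v(j)$; clause-minimality of $\Upsilon$ therefore forces $|Z_j| \leq d - 1$.

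Now suppose for contradiction that some $Z_j$ contains an edge variable, and write $Z_j = L^* \sqcup E^*$ with $L^* := Z_j \cap (L \cup L')$ and $E^* \neq \emptyset$. If $F_g(L^*) = V_g$ already, then every clause $v(j) \lra z$ with $z \in E^*$ is redundant in $\Upsilon$, contradicting the irredundancy of a clause-minimum representation. Otherwise $F_g(L^*) \ne V_g$; since the canonical representation $\Psi$ (and hence $F_g(L^*)$) is invariant under permutation of the $d$ level indices, there must exist a pair $(x^*, y^*) \in E$ such that $e(x^*, y^*, i) \notin F_g(L^*)$ for \emph{every} $i \in [d]$.

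The crux of the argument, which I expect to be the main obstacle, is then a level-separation claim: for every $i \in [d]$, the set $E^*$ must contain a variable of the form $e(z, y^*, i)$ or $e(z, y^*, \ell', i)$ for some $z \in N(y^*)$ and $\ell' \in L'_{y^*}$. This will be justified by a careful case analysis of the four clause families: within a fixed level $i$, clauses (b) and (c) propagate only between edge variables sharing the same $y$-endpoint, clause (a) needs two labels from $L^*$ and produces edge variables symmetrically at every level (so contributes no more at level $i$ than in $F_g(L^*)$), and clause (d) can fire only once \emph{every} edge variable at \emph{every} level has already been derived. Hence if $E^*$ contained no level-$i$ variable at $y^*$, the forward chain from $Z_j$ would agree with that from $L^*$ on the level-$i$ slice near $y^*$, leaving $e(x^*, y^*, i)$ underivable and contradicting $F_g(Z_j) = V_g$. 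Since the required sets are disjoint across $i$, this yields $|E^*| \geq d$ and hence $|Z_j| \geq d$, contradicting $|Z_j| \leq d - 1$ and completing the proof.
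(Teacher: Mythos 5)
Your proof is correct and takes essentially the same approach as the paper's: both exploit the choice $d = 1+|L\cup L'|$ together with the symmetry and non-interference across the $d$ level indices in the clause families (a)--(d) to force at least one shortcut $v(j)\lra e(\cdots)$ per level whenever any exists, hence at least $d$ of them, and both conclude with the same replacement by $\{v(j)\lra u(\ell):\ell\in L\cup L'\}$ (valid because $F_g(L\cup L')=V_g$). The only difference is organizational: you first establish the bound $|Z_j|\leq d-1$ via the replacement, then pick a specific missing $(x^*,y^*)$ and argue that $E^*$ must contain a level-$i$ variable at $y^*$ for every $i$, whereas the paper's proof partitions $\Gamma$ into the sets $\Gamma^j_i$, proves through its Claims (0)--(3) that non-emptiness of one $\Gamma^j_i$ propagates to all levels, and only afterwards performs the counting and replacement to reach the contradiction.
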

\begin{proof}
  Let $\Upsilon=\Theta\wedge\Gamma$ be a clause minimum prime pure Horn CNF representation of $h$, with $\Theta$ being 
  a clause minimum pure Horn CNF representation of $g$. According to Lemma~\ref{lem:v(j)-quadratic}, all prime 
  implicates of $h$ involving the variables $v(j)$ are quadratic. So, for all indices $j\in[t]$ and all indices 
  $i\in[d]$ define the sets
  \begin{align*}
    \Gamma^j_0 & := \Gamma\cap\{v(j)\lra u(\ell): \ell\in L\cup L'\},\\
    \Gamma^j_i & := \Gamma\cap\{v(j)\lra e(x,y,i),\,
                                v(j)\lra e(x,y,\ell',i): (x,y)\in E, \ell'\in L'_y\}.
  \end{align*}

  Our goal is to show that the chosen value for the parameter $d$ forces all the sets $\Gamma^j_i$ to be
  simultaneously empty and consequently, that all the prime implicates involving the variables $v(j)$
  in clause minimum pure Horn CNF representations of $h$ have the claimed form. We shall accomplish this 
  in two steps. 

  Let $j\in[t]$. We first show that if a set $\Gamma^j_i\neq\emptyset$ for some index $i\in[d]$, then 
  $\Gamma^j_i\neq\emptyset$ for all indices $i\in[d]$, simultaneously. 

  \begin{claim}
    All clauses of type (d) have the same body. Therefore, during the execution of the Forward Chaining procedure
    from $\{v(j)\}$, either they all trigger simultaneously or none of them do. The reason for them not to trigger 
    is the absence of some variable $e(x,y,i)$, with $(x,y)\in E$ and $i\in[d]$, in the Forward Chaining closure 
    from $\{v(j)\}$, i.e, $e(x,y,i)\not\in F_{\Upsilon}(\{v(j)\})$.
  \end{claim}
  \begin{proof}
    A simple inspection of the families of clauses shows that Claim (1) holds. 
  \end{proof}

  Now, for each index $i\in[d]$, let $\Xi^{}_i$ be the collection of clauses of types (a), (b), and (c) that 
  depend on $i$.

  \begin{claim}
    It holds that
    \[
      e(x,y,i) \in F_{\Gamma^j_0\cup\Xi^{}_i} (\{v(j)\})
      \quad\text{if and only if}\quad
      e(x,y,i') \in F_{\Gamma^j_0\cup\Xi^{}_{i'}} (\{v(j)\}),
    \]
    for all indices $i,i'\in[d]$, with $i\neq i'$.
  \end{claim}
  \begin{proof}
    Notice that the families of clauses (a), (b), and (c) are completely symmetric with respect to the indexing 
    variable $i$. Moreover, for $i_1\neq i_2$, the clauses indexed by $i_1$ do not interfere with the clauses 
    indexed by $i_2$ during an execution of the Forward Chaining procedure. In other words, variables depending 
    upon $i_1$ do not trigger clauses indexed by $i_2$, and vice-versa. These two properties, symmetry and non 
    interference, proves Claim (2). 
  \end{proof}

  \begin{claim}
    If there is a variable $e(x,y,i)$, with $(x,y)\in E$ and $i\in[d]$, such that
    \[
      e(x,y,i)\not\in F_{\Gamma^j_0\cup\left(\bigcup_{i\in[d]}\Xi_i^{}\right)}(\{v(j)\})
    \]
    then
    \[
      e(x,y,i)\not\in F_{\Gamma^j_0\cup\left(\bigcup_{i\in[d]}\Xi_i^{}\right)\cup\left(\bigcup_{i'\neq i}\Gamma^j_{i'}\right)}
      (\{v(j)\}).
    \]
    Moreover, this implies that $\Gamma^j_i\neq\emptyset$.
  \end{claim}
  \begin{proof}
    The symmetry and non interference properties of families of clauses (a), (b), and (c) also justifies the first
    part of Claim (3). To see it, just notice that were the claim to be false, the prime implicates in $\Gamma^j_{i'}$ 
    would be trigging clauses involving the variable $e(x,y,i)$ in an execution of the Forward Chaining procedure. 
    Since $i'\neq i$, this cannot happen. The second part follows immediately from the validity of the first part
    together with the fact that $\Upsilon$ represents $h$. 
  \end{proof}

  To finish the first step, notice that since Claim (3) is valid for any $i\in[d]$, Claim (2) implies that if 
  $\Gamma^j_i\neq\emptyset$ for some index $i\in[d]$, then $\Gamma^j_i\neq\emptyset$ for all indices $i\in[d]$, 
  simultaneously.

  For the second step, suppose that $\Gamma^j_i\neq\emptyset$ for all indices $i\in[d]$. We then have that
  \[
    \gamma:=\sum_{i\in[d]}|\Gamma^j_i|\geq d=1+r\lambda+s\lambda'=1+|L\cup L'|, 
  \]
  that is, $\gamma$ is strictly larger than the number of all available labels in $\cL$. This implies that 
  the following pure Horn CNF
  \begin{align*}
    \Delta_j & := \left(\Upsilon\setminus\bigcup_{i\in d}\Gamma^j_i\right)\cup\Big\{v(j)\lra u(\ell):\ell\in L\cup L'\Big\}\\
             &\ = \Theta\cup\left(\left(\Gamma\setminus\bigcup_{i\in d}\Gamma^j_i\right)\cup
                  \Big\{v(j)\lra u(\ell):\ell\in L\cup L'\Big\}\right),
  \end{align*}
  has fewer clauses than $\Upsilon$ (or more precisely, it implies that $|\Delta_j|_c <= |\Upsilon|_c - 1$). 

  Now, since that $\Theta$ is a (clause minimum) representation of the exclusive component $g$, and that the
  set of clauses $\{v(j)\lra u(\ell):\ell\in L\cup L'\}$ makes all available labels reachable by Forward
  Chaining from $\{v(j)\}$, it follows that $F_{\Delta_j}(\{v(j)\})=V_g\cup\{v(j)\}$. Furthermore, the change
  in clauses did not influence the Forward Chaining procedure from any other variable (other than $v(j)$), 
  and thus $F_{\Delta_j}(\{w\})=F_{\Upsilon}(\{w\})$ for all variables $w\neq v(j)$. Thus, Lemma~\ref{lem:FC(v(j))} 
  implies that $\Delta_j$ is a representation of $h$.

  We then have that $\Delta_j$ is a shorter representation for $h$, contradicting the optimality of $\Upsilon$.
  Therefore, the sets $\Gamma^j_i=\emptyset$ for all indices $i\in[d]$. As the above arguments do not depend on 
  any particular value of $j$, they can be repeated for all of them.
\end{proof}

The next property we can show more generally for any prime and irredundant CNF of $h$. 

\begin{definition}\label{dfi:cnf-tcover}
  Let $d=1+r\lambda+s\lambda'$ and let $\Upsilon$ be a prime and 
  irredundant pure Horn CNF representation of $h$. For each $j\in[t]$, consider the set 
  \[
    S_j=\left\{\ell\in L\cup L':v(j)\lra u(\ell) \in \Upsilon\right\}
  \] 
  and define the function  $f_j:X\ra L, Y\ra L'$ given by $f_j(x)=S_j\cap L_x$ for vertices $x\in X$ and 
  $f_j(y)=S_j\cap L'_y$ for vertices $y\in Y$.
\end{definition}

The next three lemmas provide important properties of the functions $f_j$ above.

\begin{lemma}\label{lem:cnf-tightness-up}
  Let $\Upsilon$ be as in the above Definition. For all indices $j\in[t]$ and vertices $y\in Y$, 
  it holds that $|f_j(y)|\leq 1$.
\end{lemma}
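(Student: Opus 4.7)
The plan is to argue by contradiction. Supposing $\ell'_1, \ell'_2 \in f_j(y)$ are two distinct labels, I will identify a clause $v(j) \lra u(\ell'')$ with $\ell'' \in f_j(y)$ whose removal from $\Upsilon$ preserves equivalence with $h$, contradicting irredundancy. To set up, I would invoke Lemmas~\ref{lem:g-excl-comp} and~\ref{lem:subfunc-switch} to isolate the $g$-component: $\Upsilon^g := \Upsilon \cap \cX(V_g)$ represents $g$, and by Lemma~\ref{lem:v(j)-quadratic} every clause of $\Upsilon \setminus \Upsilon^g$ is quadratic of the form $v(j') \lra z$ with $z \in V_g$. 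Defining $T_j := \{z \in V_g : v(j) \lra z \in \Upsilon\}$, a single round of Forward Chaining in $\Upsilon$ from $\{v(j)\}$ produces $\{v(j)\} \cup T_j$; combined with Lemma~\ref{lem:FC(v(j))}, this forces $F_{\Upsilon^g}(T_j) = V_g$, and by Lemma~\ref{lem:horn-FC} applied to $g$ we equivalently have $F_\Psi(T_j) = V_g$. The irredundancy of each clause $v(j) \lra z \in \Upsilon$ translates into $F_\Psi(T_j \setminus \{z\}) \subsetneq V_g$ for every $z \in T_j$, so it suffices to exhibit some $\ell'' \in f_j(y)$ with $F_\Psi(T_j \setminus \{u(\ell'')\}) = V_g$.

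The key structural observation is that $u(\ell'_k) \in L'_y$ appears in the body of clauses of $\Psi$ only through type~(a), namely $u(\ell) \wedge u(\ell'_k) \lra e(z, y, \ell'_k, i)$, and each $e(z, y, \ell'_k, i)$ it produces feeds only type~(b) clauses at $y$ with label $\ell'_k$; however, each such variable can alternatively be derived from $e(z, y, i)$ via the type~(c) clause $e(z, y, i) \lra e(z, y, \ell'_k, i)$. I would trace the Forward Chaining run of $\Psi$ from $T_j$ to locate the first round $R$ at which some $e(x, y, i)$ (without the $\ell'$ coordinate) enters the closure for the fixed $y$. If $R > 0$, the variable enters through a type~(b) clause with some label $\ell^* \in L'_y$; since type~(d) has not yet fired (its body requires every $e(x, y', i')$, including at $y$), $u(\ell^*)$ must have been in $T_j$, whence $\ell^* \in f_j(y)$, and the body variables $e(z, y, \ell^*, i)$ must have been produced by type~(a), forcing $f_j$ to cover every edge $(z, y)$ via $\ell^*$. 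If $R = 0$, some $e$-variable at $y$ is already in $T_j$, and type~(c) takes over the role of $\ell^*$ in supplying $e(x, y, \ell', i)$ once the corresponding $e(x, y, i)$ is in the closure.

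Since $|f_j(y)| \geq 2$, I pick $\ell'' \in f_j(y) \setminus \{\ell^*\}$ in the $R > 0$ case (possible because $\ell^* \in f_j(y)$ leaves at least one other element), and analogously in the $R = 0$ case. The verification $F_\Psi(T_j \setminus \{u(\ell'')\}) = V_g$ then follows: removing $u(\ell'')$ does not alter the firing of the type~(b) clause with $\ell^*$ (nor the type~(c) propagation from the direct $e$-variables when $R = 0$), so every $e(x, y, i)$ still enters the closure; type~(c) then supplies $e(x, y, \ell', i)$ for every $\ell' \in L'_y$, $\ell''$ included; and because $u(\ell'')$ affects only $y$-derivations (being in $L'_y$ alone), no derivations at any $y' \neq y$ are disturbed. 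Once every $e(x, y', i)$ has entered the closure, type~(d) fires and restores $u(\ell'')$ along with the other labels, so $v(j) \lra u(\ell'')$ is redundant in $\Upsilon$, yielding the contradiction and forcing $|f_j(y)| \leq 1$. The main obstacle I anticipate is making the case analysis exhaustive, particularly in the $R = 0$ case where $T_j$ may mix labels and $e$-variables in subtle ways; the type~(c) symmetry remains the central tool that renders every extra label in $f_j(y)$ bypassable.
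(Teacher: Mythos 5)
Your proposal follows the same basic strategy as the paper: remove one clause $v(j)\lra u(\ell'')$ with $\ell''\in f_j(y)$ and show that the result still represents $h$, contradicting irredundancy. Where the paper removes an \emph{arbitrary} $\ell'\in f_j(y)$ and argues by contradiction (``suppose $u(\ell')\not\in F_{\Upsilon'}(\{v(j)\})$\dots''), you first trace the Forward Chaining run to identify the ``working'' label $\ell^*\in f_j(y)$ whose type~(b) clause fires first and then remove some $\ell''\neq\ell^*$, arguing directly that the $\ell^*$-derivation survives. This explicit choice of which clause to discard is a genuine refinement: it sidesteps the possibility that the discarded label is precisely the one the chaining relies on. You also isolate the $g$-component and reformulate everything in terms of $F_\Psi(T_j)$, which is a clean way to say the same thing. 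One point worth flagging: the paper's proof leans on Lemma~\ref{lem:cnf-no-shortcuts} to conclude that every implicate of $\Upsilon$ involving $v(j)$ has the form $v(j)\lra u(\ell)$; you instead invoke only Lemma~\ref{lem:v(j)-quadratic}, which gives quadraticity but not the restriction to label heads. This is actually more faithful to Definition~\ref{dfi:cnf-tcover} (which asks only for prime irredundant $\Upsilon$, not clause-minimum), but it is also what forces you to confront the extra case.

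That extra case is the genuine gap. In your $R=0$ branch --- $T_j$ already contains an $e$-variable at $y$ --- you write that ``type~(c) takes over the role of $\ell^*$,'' but type~(c) only propagates $e(x,y,i)\lra e(x,y,\ell',i)$ for that \emph{one} $x$ whose $e$-variable is in $T_j$; it does not by itself produce $e(x',y,\ell',i)$ for the other neighbors $x'\in N(y)$, and these are still needed in the body of every type~(b) clause at $y$. So the first time a type~(b) clause at $y$ fires you are back to needing a working label $\ell^*\in T_j$, just as in the $R>0$ branch, and the $R=0$ subcase should really be folded into it (redefine $R$ as the first round at which a type~(b) clause at $y$ fires, not the first round at which an $e(x,y,i)$ appears). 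There is also a degenerate sub-possibility you do not dispose of: if \emph{all} $e(x,y,i)$ with $x\in N(y)$, $i\in[d]$ are already in $T_j$, no $\ell^*$ is ever distinguished, but then every $v(j)\lra u(\ell')$ with $\ell'\in f_j(y)$ is already redundant via type~(d), which irredundancy rules out --- this needs to be said. As you yourself anticipate, tightening the case analysis here is the remaining work; the $R>0$ argument you give is otherwise sound and, in my view, a bit cleaner than the paper's.
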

\begin{proof}
  Let $\Upsilon$ be as in Definition~\ref{dfi:cnf-tcover} and suppose indirectly that the claim is false, 
  that is, there is an index $j\in[t]$ and a vertex $y\in Y$ such that $|f_j(y)| > 1$.

  During the proof, recall that the chosen value for the parameter $d$ implies, according to 
  Lemma~\ref{lem:cnf-no-shortcuts}, that all prime implicates of $\Upsilon$ involving the variable 
  $v(j)$ must have the form $v(j)\lra u(\ell)$, with $\ell\in L\cup L'$.

  Let $\ell'\in f_j(y)$ and define the expression 
  \[
    \Upsilon':=\Upsilon\setminus\{v(j)\lra u(\ell')\}.
  \]

  It is enough to show that $F_{\Upsilon'}(\{v(j)\})=V_g\cup\{v(j)\}$, that is, that $\Upsilon'$ is also a 
  representation of $h$ (cf. Lemma~\ref{lem:FC(v(j))}). Suppose that is not the case. Since $\Upsilon$ and
  $\Upsilon'$ differ only in the clause $v(j)\lra u(\ell')$, it must be the case that 
  $u(\ell')\not\in F_{\Upsilon'}(\{v(j)\})$. This happens if the clause of type (d) associated to $u(\ell')$ 
  is not trigged. For this to occur, there must be an edge $(x,y)\in E$ and an index $i\in[d]$ such 
  that $e(x,y,i)\not\in F_{\Upsilon'}(\{v(j)\})$. 

  Now, for $y$ and $i$ as above, notice that: 
  (i) variable $e(x,y,i)$ would be included in $F_{\Upsilon'}(\{v(j)\})$ as long as there were a label in 
  $L'_y$ such that the corresponding clause of type (b) were trigged; and
  (ii) once such clause of type (b) were trigged, the appropriated clauses of type (c) would trigger, thus
  making the other clauses of type (b) associated to $y$ and $i$ to also trigger. 

  Therefore, for $e(x,y,i)$ to not belong to $F_{\Upsilon'}(\{v(j)\})$, it must be the case that for every 
  label $\ell''\in f_j(y)\setminus\{\ell'\}$ there exists a vertex $z(\ell'')\in N(y)$ for which
  \[
    e(z(\ell''),y,\ell'',i)\not\in F_{\Upsilon'}(\{v(j)\}).
  \]
  
  For this latter relation to be true, we must have that the clauses 
  \begin{equation}\label{eq:trigged}
    u(\ell)\wedge u(\ell'')\lra e(z(\ell''),y,\ell'',i)
  \end{equation}
  are not trigged in the Forward Chaining procedure on $\Upsilon'$ starting with $\{v(j)\}$, for every 
  label $\ell\in f_j(z(\ell''))$ with $(\ell,\ell'')\in\Pi_{(z(\ell''),y)}$. 

  However, according to Definition~\ref{dfi:cnf-tcover}, for each label $\ell''\in f_j(y)\setminus\{\ell'\}$
  and each label $\ell\in f_j(z(\ell''))$, there are clauses $v(j)\lra u(\ell'')$ and $v(j)\lra u(\ell)$,
  respectively, in $\Upsilon$ and, consequently, in $\Upsilon'$. This implies that the clauses \eref{eq:trigged} 
  are trigged, which implies that $u(\ell')\in F_{\Upsilon'}(\{v(j)\})$, which then implies that $\Upsilon'$ is 
  also a representation of $h$. Since this contradicts the irredundancy of $\Upsilon$, it follows that 
  $|f_j(y)|\leq 1$, thus concluding the proof.
\end{proof}

\begin{lemma}\label{lem:cnf-tightness-down}
  Let $\Upsilon$ be a clause minimum prime pure Horn CNF of $h$. Then it is prime and irredundant, 
  so Definition~\ref{dfi:cnf-tcover} applies. We claim that for all indices $j\in[t]$ and vertices $y\in Y$, 
  it holds that $|f_j(y)| \geq 1$.
\end{lemma}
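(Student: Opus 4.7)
The plan is to argue by contradiction. Assume that for some $j \in [t]$ and $y \in Y$ we have $f_j(y) = \emptyset$, and derive that $\Upsilon$ fails to represent $h$.

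First, since $\Upsilon$ is clause minimum, I would invoke Lemma~\ref{lem:g-excl-comp} to write $\Upsilon = \Theta \wedge \Gamma$, where $\Theta$ is a clause minimum prime pure Horn CNF of the exclusive component $g$ and $\Gamma$ collects the remaining prime implicates. By Lemma~\ref{lem:cnf-no-shortcuts}, $\Gamma$ consists only of clauses of the form $v(j') \lra u(\ell)$ with $\ell \in L \cup L'$. In particular, the clauses of $\Upsilon$ whose body contains $v(j)$ are exactly $\{v(j) \lra u(\ell) : \ell \in S_j\}$, and by the hypothesis $f_j(y) = \emptyset$ none of them has head $u(\ell')$ with $\ell' \in L'_y$. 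By Lemma~\ref{lem:v(j)-quadratic}, no $v(j')$ is ever produced as a head, so $v(j')$ with $j' \neq j$ never enters the forward chaining closure starting from $\{v(j)\}$. Hence
\[
F_\Upsilon(\{v(j)\}) = \{v(j)\} \cup F_\Theta(\{u(\ell) : \ell \in S_j\}) = \{v(j)\} \cup F_\Psi(\{u(\ell) : \ell \in S_j\}),
\]
where the second equality uses Lemma~\ref{lem:horn-FC} applied to $g$, whose forward chaining is representation-independent.

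The main technical step is to show that $u(\ell') \notin F_\Psi(\{u(\ell) : \ell \in S_j\})$ for every $\ell' \in L'_y$. Call a variable \emph{$y$-touching} if it is $u(\ell')$ with $\ell' \in L'_y$, or $e(x, y, i)$, or $e(x, y, \ell', i)$ for some $(x,y) \in E$, $\ell' \in L'_y$, $i \in [d]$. I would then inspect each family of clauses in $\Psi$ and verify that every clause whose head is $y$-touching has at least one $y$-touching variable in its body: in (a) it is the body literal $u(\ell')$ with $\ell' \in L'_y$; in (b) each body variable $e(z, y, \ell', i)$ is $y$-touching; in (c) the body literal $e(x, y, i)$ is $y$-touching; and in (d), whose head $u(\ell)$ is $y$-touching precisely when $\ell \in L'_y$, the body contains $e(x, y, i)$ for any neighbor $x \in N(y)$, which exists since $G$ has no isolated vertices. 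Because $S_j \cap L'_y = f_j(y) = \emptyset$, the initial set $\{u(\ell) : \ell \in S_j\}$ contains no $y$-touching variable. A routine induction on the order of addition then shows that no $y$-touching variable can ever be the first to enter the closure, and hence none is ever added.

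Putting these together, $u(\ell') \notin F_\Upsilon(\{v(j)\})$ for every $\ell' \in L'_y$, whereas Lemma~\ref{lem:FC(v(j))} ensures $u(\ell') \in F_h(\{v(j)\})$; via Lemma~\ref{lem:horn-FC} this contradicts $\Upsilon \equiv h$. The main obstacle I anticipate is simply identifying the right ``closed-under-heads'' family of variables---the $y$-touching ones---so that the inductive ``no first addition'' argument closes cleanly; what makes this work is precisely the assumption $f_j(y) = \emptyset$, which is exactly what guarantees that no $y$-touching variable is present at the start of the forward chaining.
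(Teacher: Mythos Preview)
Your proof is correct and follows essentially the same approach as the paper's: assume $f_j(y)=\emptyset$, use Lemma~\ref{lem:cnf-no-shortcuts} to pin down the clauses involving $v(j)$, and then argue via forward chaining that no variable $u(\ell')$ with $\ell'\in L'_y$ can be reached, contradicting Lemma~\ref{lem:FC(v(j))}. Your version is in fact more explicit than the paper's in two respects---you spell out the passage from $\Upsilon$ to $\Psi$ via the exclusive-component decomposition and Lemma~\ref{lem:horn-FC}, and you package the inductive ``no first addition'' argument through a clean definition of $y$-touching variables---whereas the paper simply refers to clause types (a)--(d) as if working in $\Psi$ directly, leaving that translation implicit.
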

\begin{proof}
  Suppose that the claim is false, that is, 
  there is an index $j\in[t]$ and a vertex $y\in Y$ such that $|f_j(y)| = 0$.

  Then clauses $v(j)\lra u(\ell')$, for all labels $\ell'\in L'_y$, are absent from 
  $\Upsilon$. Recall that the chosen value for the parameter $d$ implies that all prime implicates 
  of $\Upsilon$ involving $v(j)$ are quadratic (Lemma~\ref{lem:cnf-no-shortcuts}).

  Thus, no clause of type (a) dependent on the vertex $y$ is trigged during a Forward 
  Chaining from $\{v(j)\}$ and hence, no clauses of type (b) and of type (c) dependent on $y$ are 
  trigged either. This gives that the variables $e(x,y,i)$, for all vertices $x\in N(y)$ and all
  indices $i\in[d]$, do not belong to the Forward Chaining closure (from $\{v(j)\}$). Therefore,
  no clause of type (d) is trigged and no label $\ell'\in L'_y$ is reintroduced. In other words, 
  it is the case that $u(\ell')\not\in F_{\Upsilon}(\{v(j)\})$ and hence, that 
  $F_{\Upsilon}(\{v(j)\}) \neq F_h(\{v(j)\})$. By Lemma~\ref{lem:FC(v(j))}, $\Upsilon$ does not
  represent $h$, a contradiction. So, it must be the case that $|f_j(y)|\geq 1$.
\end{proof}

Combining the two lemmas above, we have the following tight result.
\begin{corollary}\label{cor:cnf-tightness}
  Let $\Upsilon$ be a clause minimum prime pure Horn CNF of $h$. Then, for all indices $j\in[t]$ 
  and vertices $y\in Y$, it holds that $|f_j(y)| = 1$.
  \qed
\end{corollary}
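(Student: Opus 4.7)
The plan is to combine Lemma~\ref{lem:cnf-tightness-up} and Lemma~\ref{lem:cnf-tightness-down} directly, with one small preliminary observation about the hypotheses. Lemma~\ref{lem:cnf-tightness-down} is already stated in terms of a clause minimum prime pure Horn CNF $\Upsilon$ of $h$. To apply Lemma~\ref{lem:cnf-tightness-up}, however, I need $\Upsilon$ to be prime and irredundant, since that lemma is phrased for representations of that form. So the first step is to note that a clause minimum prime representation is automatically irredundant: if some clause of $\Upsilon$ could be dropped while still representing $h$, the resulting formula would be a strictly shorter pure Horn CNF for $h$, contradicting clause minimality. Hence Definition~\ref{dfi:cnf-tcover} applies to $\Upsilon$ and the sets $f_j(y)$ are well defined.

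Given that, I would fix any $j \in [t]$ and any $y \in Y$, and apply the two lemmas in turn. Lemma~\ref{lem:cnf-tightness-up} yields $|f_j(y)| \leq 1$, while Lemma~\ref{lem:cnf-tightness-down} yields $|f_j(y)| \geq 1$. Combining these two inequalities gives $|f_j(y)| = 1$, which is exactly the claim. Since $j$ and $y$ were arbitrary, the conclusion holds for all such pairs. There is no real obstacle here beyond confirming the applicability of Lemma~\ref{lem:cnf-tightness-up}; the content of the corollary lies entirely in the two preceding lemmas, and this statement is a clean consolidation of their bounds into a single tight equality that will be convenient to cite in subsequent arguments.
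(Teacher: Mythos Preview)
Your proposal is correct and matches the paper's own treatment: the corollary is stated with a \qed and no proof, the text before it simply reading ``Combining the two lemmas above, we have the following tight result.'' Your extra remark about irredundancy is fine but already handled in the paper, since the statement of Lemma~\ref{lem:cnf-tightness-down} itself notes that a clause minimum prime representation is prime and irredundant, so Definition~\ref{dfi:cnf-tcover} applies.
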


The next lemma shows that the functions $f_j$ are indeed tight total-covers.

\begin{lemma}\label{lem:cnf-tcover}
  Let $\Upsilon$ be a clause minimum prime pure Horn CNF of $h$. For each index $j\in[t]$, the function $f_j$ 
  is a tight total-cover for $\cL$.
\end{lemma}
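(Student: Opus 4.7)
The plan is to derive tightness and coverage separately. Tightness is essentially free: Corollary~\ref{cor:cnf-tightness} already yields $|f_j(y)|=1$ for every $y\in Y$, which simultaneously makes $f_j$ tight in the sense of Definition~\ref{dfi:tightness} and certifies that $f_j$ is a valid labeling (so $f_j(y)\neq\emptyset$). Writing $\ell_y$ for the unique label in $f_j(y)$, the substantive task that remains is to show that for each edge $(x,y)\in E$ there is some $\ell_x\in f_j(x)$ with $((x,\ell_x),(y,\ell_y))\in\Pi_{(x,y)}$.

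To establish coverage I would analyze the Forward Chaining procedure from $\{v(j)\}$ in $\Upsilon$. Because $\Upsilon\equiv h$, Lemma~\ref{lem:FC(v(j))} forces $F_\Upsilon(\{v(j)\})=V_g\cup\{v(j)\}$, so every $u(\ell)$ and every $e(x,y,i)$ is eventually reached. By Lemma~\ref{lem:cnf-no-shortcuts}, the opening step of FC produces exactly $\{v(j)\}\cup\{u(\ell):\ell\in S_j\}$. Among the families (a)--(d), only (d) has a $u$-variable as its head, so no further $u(\ell)$ can enter the closure until (d) first fires — and (d) itself requires every $e(x,y,i)$ to already be present. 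Thus, in the entire FC window preceding the first firing of (d), the available $u$-variables are $\{u(\ell):\ell\in S_j\}$, so any clause of family (a) that fires uses a pair $(\ell,\ell')\in S_j\times S_j$. Since family (c) needs an $e(z,y,i)$ in its body — which in turn must first arrive through family (b) — the earliest appearance of any variable $e(z,y,\ell',i)$ in the closure is via family (a), forcing $\ell'\in S_j\cap L'_y=\{\ell_y\}$. Hence the first firing of family (b) at $(y,i)$ must use the unique label $\ell_y$, which demands, for every $z\in N(y)$, a witness $\ell_z\in f_j(z)$ with $((z,\ell_z),(y,\ell_y))\in\Pi_{(z,y)}$ — precisely the coverage condition at $y$. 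Because (d) must eventually fire (otherwise some $u(\ell)$ with $\ell\notin S_j$ would be missing from the closure, contradicting $\Upsilon\equiv h$, outside the trivial case where $S_j\supseteq L\cup L'$), coverage must hold at every $y\in Y$.

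The step I expect to demand the most care is the bootstrap ordering: explicitly ruling out the scenario in which family (c) at some $y$ retroactively supplies the missing $e(z,y,\ell',i)$ and thereby enables family (b) at $y$ to fire with a label outside $S_j$. The resolution is structural — clause (c) at vertex $y$ only produces $e$-variables whose middle index is $y$, so it feeds back into family (b) only at the same $y$, and (b) at $y$ must have already fired at least once using an $S_j$-label before (c) at $y$ can contribute anything. Making this chronological dependency rigorous — most naturally by induction on the FC time-step at which each variable first enters the closure — is, to my mind, the delicate part of the argument.
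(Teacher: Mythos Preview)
Your proposal is correct and follows essentially the same approach as the paper: derive tightness from Corollary~\ref{cor:cnf-tightness}, then argue coverage by contradiction via Forward Chaining from $\{v(j)\}$ and Lemma~\ref{lem:FC(v(j))}. The paper's proof is considerably terser---it jumps directly to asserting that $e(x,y,\ell',i)\notin F_\Upsilon(\{v(j)\})$ with only a one-line justification (invoking Lemma~\ref{lem:cnf-no-shortcuts})---whereas you unpack the chronological dependencies among families (a)--(d) explicitly; your bootstrap analysis of why (c) cannot preempt (a) at a given $y$ is exactly the content the paper leaves implicit. One small point worth making explicit in your write-up: since $\Upsilon=\Theta\wedge\Gamma$ with $\Theta$ an \emph{unknown} minimum representation of $g$, your reasoning about ``families (a)--(d)'' is really reasoning about $\Psi$, justified by $F_\Theta=F_\Psi$ via Lemma~\ref{lem:horn-FC}.
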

\begin{proof}
  Let $j\in[t]$. By construction (cf. Definition~\ref{dfi:cnf-tcover}), $f_j$ is a labeling for $\cL$.
  Suppose however, that $f_j$ is not a total-cover. Hence, there exists an edge $(x,y)\in E$ 
  and a label $\ell'\in f_j(y)$ such that for all labels $\ell\in f_j(x)$, it holds that 
  $(\ell,\ell')\not\in\Pi_{(x,y)}$. Since $\Upsilon$ is clause 
  minimum, no variable $e(x,y,\ell',i)$ belongs to $F_{\Upsilon}(\{v(j)\})$, for any index $i\in[d]$.
  This is so because the variables $e(x,y,\ell',i)$ do not occur as heads in any clause of $\Upsilon$
  (cf. Lemma~\ref{lem:cnf-no-shortcuts}). Now, using Lemma~\ref{lem:FC(v(j))}, we obtain that $\Upsilon$ 
  does not represent $h$, a contradiction. To conclude the proof, just notice that Corollary~\ref{cor:cnf-tightness} 
  implies that $f_j$ is tight.
\end{proof}

In order to relate the size of a clause minimum representation of $h$ to the cost of an optimal solution 
to $\cL$, we need a comparison object. Let $f$ be a tight total-cover for $\cL$ and consider the following 
subfamily of clauses:
\[
  \begin{array}{clr}
    \textrm{(e')} & \displaystyle v(j)\lra u(\ell) \quad        
                  & \forall\ j\in[t],\ x\in X,\ y\in Y,\ \ell\in f(x)\cup f(y), \\[2mm]
  \end{array}
\]
with $f(x)\subseteq L_x$ and $f(y)\subseteq L'_y$. Let $\Phi_f$ be the \emph{refined canonical} 
(with respect to $f$) pure Horn CNF formula resulting from the conjunction of $\Psi$ with the 
clauses of type (e').

\begin{lemma}\label{lem:cnf-represents}
  $\Phi_f$ represents $h$.
\end{lemma}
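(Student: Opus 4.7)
The plan is to use Lemma~\ref{lem:horn-FC} and show that $F_{\Phi_f}(U)=F_\Phi(U)$ for every $U\subseteq V_h$, since $\Phi$ is known to represent $h$. Observe first that $\Phi_f$ is obtained from $\Phi$ by dropping clauses of type (e) and keeping only the subset (e'), i.e., $\Phi_f\subseteq\Phi$ as sets of clauses; in particular every clause of $\Phi_f$ is still an implicate of $h$, so the nontrivial direction is $F_\Phi(U)\subseteq F_{\Phi_f}(U)$.

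I would split the argument on whether $U$ meets $\{v(1),\dots,v(t)\}$. If $U$ contains no $v(j)$, then no clause of type (e) or (e') can ever trigger in either formula (their bodies contain a $v(j)$), so $F_{\Phi_f}(U)=F_\Psi(U)=F_\Phi(U)$ trivially. If $U$ does contain some $v(j)$, monotonicity of forward chaining reduces the problem to showing $F_{\Phi_f}(\{v(j)\})\supseteq V_g\cup\{v(j)\}$, which by Lemma~\ref{lem:FC(v(j))} equals $F_\Phi(\{v(j)\})$.

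The core step, then, is to verify that the clauses of type (e') for a single fixed $j$ suffice to ignite the chain through (a)--(d). Starting from $v(j)$, the (e') clauses immediately produce $u(\ell)$ for every $\ell\in\bigcup_{x\in X}f(x)\cup\bigcup_{y\in Y}f(y)$. Because $f$ is a total-cover, for every edge $(x,y)\in E$ and every $\ell'\in f(y)$ there exists $\ell\in f(x)$ with $(\ell,\ell')\in\Pi_{(x,y)}$, so the (a)-clause $u(\ell)\wedge u(\ell')\lra e(x,y,\ell',i)$ triggers for all $i\in[d]$. Since this happens for every $z\in N(y)$ covering the label $\ell'$, the bodies of the (b)-clauses $\bigwedge_{z\in N(y)}e(z,y,\ell',i)\lra e(x,y,i)$ are satisfied (using tightness of $f$ to pick the unique $\ell'\in f(y)$), producing all $e(x,y,i)$. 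Then the single (d)-clause with body $\bigwedge_{i\in[d]}\bigwedge_{(x,y)\in E}e(x,y,i)$ reintroduces every $u(\ell)$ for $\ell\in L\cup L'$, and the (c)-clauses then yield the remaining $e(x,y,\ell',i)$. Thus $F_{\Phi_f}(\{v(j)\})=V_g\cup\{v(j)\}$, which completes the argument.

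The only subtle point is the use of tightness in the (b)-step: we need at least one $\ell'\in f(y)$ for which $e(z,y,\ell',i)$ is derived for \emph{all} $z\in N(y)$ simultaneously. Since $f$ is a tight total-cover, the unique label $\ell'\in f(y)$ together with the cover property supplies exactly this, so no additional (e)-clauses are needed. Once $F_{\Phi_f}=F_\Phi$ on all inputs, Lemma~\ref{lem:horn-FC} gives $\Phi_f\equiv h$.
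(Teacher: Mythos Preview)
Your proof is correct and follows essentially the same approach as the paper's. The paper argues by contradiction---supposing some $u(\ell'')\notin F_{\Phi_f}(\{v(j)\})$ and tracing back through clauses (d), (b), (a) to contradict the total-cover property---while you proceed constructively by running forward chaining from $\{v(j)\}$ through (e'), (a), (b), (d), (c); but the underlying mechanism is identical, and your explicit case split on whether $U$ meets $\{v(1),\dots,v(t)\}$ together with the monotonicity reduction simply makes precise what the paper leaves implicit when it invokes Lemma~\ref{lem:FC(v(j))} at the end.
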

\begin{proof}
  Suppose the opposite. As $g$ is also an exclusive component of $\Phi_f$, that means 
  $u(\ell'')\not\in F_{\Phi_f}(\{v(j)\})$ for some label $\ell''\in L\cup L'$ and index $j\in[t]$. For that to 
  happen, for any index $i\in[d]$ there must be an edge $(x,y)\in E$ such that $e(x,y,i)\not\in F_{\Phi_f}(\{v(j)\})$ 
  and for every $\ell'\in L'_y$ there is a vertex $z\in N(y)$ such that $e(z,y,\ell',i)\not\in F_{\Phi_f}(\{v(j)\})$
  as well. But since $f$ is a tight total-cover, there is a pair of labels $(\ell_z,\ell'_y)\in\Pi_{(z,y)}$ triggering
  the clause $u(\ell_z)\wedge u(\ell'_y)\lra e(z,y,\ell'_y,i)$ as both $u(\ell_z)$ and $u(\ell_y')$ belong to
  $F_{\Phi_f}(\{v(j)\})$, contradicting $e(z,y,\ell'_y,i)\not\in F_{\Phi_f}(\{v(j)\})$. Therefore, 
  $F_{\Phi_f}(\{v(j)\})=V_g\cup\{v(j)\}$ and the claim follows by Lemma~\ref{lem:FC(v(j))}.
\end{proof}

\begin{lemma}\label{lem:cnf-optimality}
  Let $\Upsilon$ be a clause minimum prime pure Horn CNF of $h$, and let us define $f_j$, for all indices 
  $j\in[t]$, as in Definition~\ref{dfi:cnf-tcover}. It holds that each $f_j$ is a tight, minimum cost total-cover 
  for $\cL$.
\end{lemma}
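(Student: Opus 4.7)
The plan is to sandwich $|\Upsilon|_c$ between an upper bound coming from any tight optimal total-cover of $\cL$ and a lower bound extracted from the $f_j$'s themselves, and then conclude by an averaging argument. From Lemma~\ref{lem:cnf-tcover} we already know that each $f_j$ is a tight total-cover, so $\kappa(f_j)\geq\kappa(\cL)$ by definition of $\kappa(\cL)$; the task is thus to prove the reverse inequality for every $j\in[t]$, and it turns out equality will follow from a single global count.

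For the upper bound, let $f^{*}$ be a tight optimal total-cover of $\cL$ (which exists by Lemma~\ref{lem:tight-tcover}), and let $\Theta^{*}$ be any clause-minimum prime pure Horn CNF representation of the exclusive component $g$, so $|\Theta^{*}|_c=\tau(g)$. By Lemma~\ref{lem:cnf-represents} the refined canonical formula $\Phi_{f^{*}}$ represents $h$, and since $g$ is an exclusive component of $h$ (Lemma~\ref{lem:g-excl-comp}), Lemma~\ref{lem:subfunc-switch} lets us replace its $\Psi$-part by $\Theta^{*}$ and still represent $h$. The resulting formula has $\tau(g)$ clauses from $\Theta^{*}$ plus, for every $j\in[t]$, exactly $f^{*}(X)+f^{*}(Y)=r\kappa(\cL)+s$ clauses of type (e'). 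Hence
\[
  |\Upsilon|_c \;\leq\; \tau(g) + t\bigl(r\kappa(\cL)+s\bigr).
\]

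For the lower bound, decompose $\Upsilon=\Theta\wedge\Gamma$, where $\Theta$ collects those clauses of $\Upsilon$ whose variables all lie in $V_g$, and $\Gamma$ collects the rest. By Lemma~\ref{lem:FC-exclusive-set} applied to $W=V_g$ the set $\cX(V_g)$ is exclusive, and $\Theta=\Upsilon\cap\cX(V_g)$ therefore represents the $\cX(V_g)$-component of $h$, namely $g$ itself (as can be checked by taking $\Phi$ as reference representation, since $\Phi\cap\cX(V_g)=\Psi$); consequently $|\Theta|_c\geq\tau(g)$. Lemma~\ref{lem:cnf-no-shortcuts} forces every clause in $\Gamma$ to have the form $v(j)\lra u(\ell)$ with $\ell\in L\cup L'$, and Corollary~\ref{cor:cnf-tightness} gives $|f_j(y)|=1$ for every $y\in Y$, so for each $j\in[t]$ there are exactly $|S_j|=f_j(X)+f_j(Y)=r\kappa(f_j)+s$ such clauses. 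Therefore
\[
  |\Upsilon|_c \;\geq\; \tau(g) + \sum_{j\in[t]}\bigl(r\kappa(f_j)+s\bigr).
\]
Combining the two bounds yields $\sum_{j\in[t]}\kappa(f_j)\leq t\,\kappa(\cL)$; since each summand is at least $\kappa(\cL)$, equality must hold term by term, so every $f_j$ is a minimum-cost tight total-cover of $\cL$. The main obstacle is justifying that the $g$-part $\Theta$ of $\Upsilon$ alone represents $g$, so that $|\Theta|_c\geq\tau(g)$; this is precisely where the exclusivity machinery of Lemmas~\ref{lem:subfunc-switch} and~\ref{lem:FC-exclusive-set} is needed, after which the averaging over $j\in[t]$ is immediate.
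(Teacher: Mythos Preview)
Your proof is correct and follows essentially the same sandwich argument as the paper: both compare $|\Upsilon|_c$ against the representation obtained by combining an optimal $g$-part with the type~(e') clauses coming from an optimal tight total-cover, and both exploit the exclusive-component machinery to split off the $g$-part. The only cosmetic difference is that the paper first asserts (by minimality of $\Upsilon$) that all $\kappa(f_j)$ coincide and then bounds that common value by $\kappa(f)$ for an arbitrary tight total-cover $f$, whereas your averaging inequality $\sum_{j}\kappa(f_j)\le t\,\kappa(\cL)$ together with $\kappa(f_j)\ge\kappa(\cL)$ yields both the equality of the $\kappa(f_j)$'s and their optimality in one stroke.
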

\begin{proof}
  Let $\Upsilon=\Theta\wedge\Gamma$ where $\Theta$ is an optimal representation of $g$ and $\Gamma$ consists of
  clauses of type (e) (cf. Lemma~\ref{lem:cnf-no-shortcuts}).
  As assured by Lemma~\ref{lem:cnf-tcover}, $f_j$ are tight total-covers for $\cL$, for each and every
  index $j\in[t]$. Notice that since $\Upsilon$ is clause minimum, it follows that all these tight
  total-covers have the same cost, i.e., $\kappa(f_{j})=\kappa(f_{k})$ for all $j,k\in[t]$.

  We then have that
  \begin{equation}\label{eq:sizemin}
    \sum_{j\in[t]} (\kappa(f_j)r+s) = t(\kappa(f_j)r+s) = |\Gamma|_c \leq |\Phi_{f}\setminus\Psi|_c = t(\kappa(f)r+s),
  \end{equation}
  where $f$ is any tight total-cover for $\cL$ and $\Phi_f$ is the refined canonical (w.r.t. $f$) formula 
  as in Lemma~\ref{lem:cnf-represents}. In particular, Equation~\eref{eq:sizemin} holds even when $f$ is
  a tight, minimum cost total-cover, thus implying that $f_j$ is optimal as claimed.
\end{proof}

\begin{remark}\label{rem:diff-tcovers}
  The tight, optimal total-covers $f_j$ and $f_k$, for $j,k\in[t]$ and $j\neq k$, might be different. As they have
  the same optimal cost, any one of them can be exhibited as solution to $\cL$.
\end{remark}

The following corollary summarizes the work done so far.

\begin{corollary}\label{cor:upsilon-ub}
  Let $\Psi$ be as in Definitions~\ref{dfi:fcns} and $\Upsilon$ be a clause minimum prime pure Horn CNF of $h$.
  It holds that 
  \[
    |\Psi|_c/(\lambda+\lambda')\leq |\Upsilon|_c - t(\kappa(f)r + s)\leq~|\Psi|_c,
  \] 
  where $\kappa(f)r + s$ is the total number of labels in a tight optimal total-cover $f$ for $\cL$.
  \qed
\end{corollary}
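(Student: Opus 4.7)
The plan is to prove this corollary essentially as a book-keeping exercise that stitches together Lemmas \ref{lem:g-excl-comp}, \ref{lem:cnf-psi-lb}, \ref{lem:cnf-no-shortcuts}, and \ref{lem:cnf-optimality}. The strategy is to decompose a clause minimum representation $\Upsilon$ of $h$ into a part living over $V_g$ and a part consisting entirely of clauses of type (e), then bound each part separately.

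First, I would invoke Lemma \ref{lem:cnf-no-shortcuts}: for the chosen value $d = 1 + r\lambda + s\lambda'$, every prime implicate of $\Upsilon$ that involves some variable $v(j)$ has the form $v(j) \lra u(\ell)$ with $\ell \in L \cup L'$. Writing $\Upsilon = \Theta \wedge \Gamma$, where $\Gamma$ collects precisely these type-(e) clauses and $\Theta$ collects the rest, $\Theta$ is a prime pure Horn CNF whose clauses involve only variables in $V_g$, hence $\Theta$ represents $g$. I then claim $\Theta$ is clause minimum for $g$: by Lemma \ref{lem:g-excl-comp}, $g$ is an exclusive component of $h$, so by Lemma \ref{lem:subfunc-switch}, replacing $\Theta$ inside $\Upsilon$ by any strictly shorter representation of $g$ yields another pure Horn CNF for $h$ whose clause count is strictly below $|\Upsilon|_c$, contradicting clause-minimality of $\Upsilon$.

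Next, I would compute $|\Gamma|_c$ using Lemma \ref{lem:cnf-optimality}. For each $j \in [t]$, the map $f_j$ defined in Definition \ref{dfi:cnf-tcover} is a tight, optimal total-cover for $\cL$, so all $f_j$ share the common optimal cost $\kappa(f) := \kappa(\cL)$. Because $\Gamma$ contains, for each $j$, exactly the clauses $v(j) \lra u(\ell)$ with $\ell \in f_j(X) \cup f_j(Y)$, and $|f_j(X)| = \kappa(f_j)r = \kappa(f)r$ while tightness gives $|f_j(Y)| = s$, we obtain
\[
  |\Gamma|_c \;=\; \sum_{j \in [t]} \bigl(\kappa(f_j)r + s\bigr) \;=\; t\bigl(\kappa(f)r + s\bigr).
\]

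Finally, I would apply Lemma \ref{lem:cnf-psi-lb} to $\Theta$: since $\Theta$ is a clause minimum prime pure Horn CNF representation of $g$, we get
\[
  |\Psi|_c / (\lambda + \lambda') \;\leq\; |\Theta|_c \;\leq\; |\Psi|_c.
\]
Substituting $|\Theta|_c = |\Upsilon|_c - |\Gamma|_c = |\Upsilon|_c - t(\kappa(f)r + s)$ yields the claimed sandwich. The only genuinely non-routine step is the middle clause of the first paragraph — checking that $\Theta$ is itself clause minimum for $g$ — but this is forced by the exclusive component machinery of Lemmas \ref{lem:g-excl-comp} and \ref{lem:subfunc-switch}, so no new idea is required beyond what is already proved.
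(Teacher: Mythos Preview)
Your proposal is correct and follows exactly the approach the paper intends: the corollary is stated with a \qed\ as a direct summary of Lemmas~\ref{lem:g-excl-comp}, \ref{lem:cnf-psi-lb}, \ref{lem:cnf-no-shortcuts}, and \ref{lem:cnf-optimality}, and your argument simply makes the implicit book-keeping explicit. The only minor point is that the implication ``$\Theta$ involves only variables in $V_g$, hence $\Theta$ represents $g$'' should itself be justified via Lemma~\ref{lem:subfunc-switch} (i.e., $\Theta=\Upsilon\cap\cX(V_g)\equiv\Phi\cap\cX(V_g)=\Psi$), but you already invoke that machinery two lines later, so this is a presentation detail rather than a gap.
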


\subsection{The CNF Hardness Result}
We are now able to prove the main result of this section.

\begin{theorem}\label{thm:cnf-hardness}
  Let $c$ be a fixed constant close to $1/2$. Unless $\PT=\NP$, the minimum number of clauses of a pure Horn 
  function on $n$ variables cannot be approximated in polynomial time (depending on $n$) to within a factor of 
  \[ 
    \rho_c(n^{\varepsilon})\geq 2^{\varepsilon(\log n)^{1-1/\delta_c(n)}}=2^{\log^{1-o(1)} n},
  \]
  where $\delta_c(n)=(\log\log n)^{c}$, even when the input is restricted to CNFs with $O(n^{1+2\varepsilon})$ 
  clauses, for some $\varepsilon\in(0,1/4]$.
\end{theorem}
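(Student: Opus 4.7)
The plan is to lift the $\rho_c(s)$ hardness gap of Theorem~\ref{thm:LC-inapprox} through the reduction developed in this section. I would start from a feasible refined \LC instance $\cL$ supplied by Dinur--Safra, of size $o(s^{3})$ and covering-promise $\rho := \rho_c(s)$, apply the reduction with $d = 1+r\lambda+s\lambda'$ (the value forced by Lemma~\ref{lem:cnf-no-shortcuts}), and set the amplification parameter $t = s^{1/\varepsilon}$ for a constant $\varepsilon \in (0,1/4]$ to be fixed below. Remark~\ref{rem:DS-inst-size} says that the sub-polynomial factors $\lambda$, $\lambda'$, and $\rho_c(s)^{\delta_c(s)}$ are all of the form $2^{o(\log s)}$, so Lemma~\ref{lem:cnf-reduction-size} yields $|\Psi|_c = s^{3+o(1)}$ and $|\Phi|_v = t+s^{3+o(1)}$. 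Taking $1/\varepsilon \geq 4$ makes $t$ dominate $|\Phi|_v$, giving $n := |\Phi|_v = s^{1/\varepsilon+o(1)}$, i.e.\ $s = n^{\varepsilon+o(1)}$, and the clause count is $|\Phi|_c = (t+1)(r\lambda+s\lambda')+d(\pi+2m\lambda') = O(t\cdot s^{1+o(1)}) = O(n^{1+\varepsilon+o(1)})\subseteq O(n^{1+2\varepsilon})$, as required.

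Next I would invoke Corollary~\ref{cor:upsilon-ub}, which sandwiches the size $|\Upsilon|_c$ of any clause minimum prime pure Horn CNF of $h$ between $t(\kappa(\cL)r+s) + |\Psi|_c/(\lambda+\lambda')$ and $t(\kappa(\cL)r+s) + |\Psi|_c$. In the YES case ($\kappa(\cL)=1$) this gives $|\Upsilon|_c \leq t(r+s)+|\Psi|_c$, while in the NO case ($\kappa(\cL)\geq \rho$) it gives $|\Upsilon|_c \geq t\rho\, r$. Because $t=s^{1/\varepsilon}$ is polynomially larger than both $|\Psi|_c/r = s^{2+o(1)}$ and $s/(r/s) = s\cdot\lfloor\delta_c(s)\rfloor^{-1}$, the ratio of the two bounds satisfies
\[
  \frac{t\rho\,r}{t(r+s)+|\Psi|_c} \;\geq\; \rho\,\bigl(1-o(1)\bigr),
\]
so the $\rho_c(s)$ gap propagates into a clause-minimum gap of size $\rho_c(s)(1-o(1))$. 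By the size calculation above, $\rho_c(s) = \rho_c(n^{\varepsilon+o(1)}) \geq 2^{\varepsilon(\log n)^{1-1/\delta_c(n)}} = 2^{\log^{1-o(1)} n}$ for large~$n$.

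Combining the two halves: a polynomial time algorithm approximating $\tau(h)$ within any ratio strictly smaller than $\rho_c(n^{\varepsilon})$ would, composed with this reduction (which runs in time polynomial in $n$, since the instance size is $o(s^3)=n^{O(1)}$), decide the promise problem of Theorem~\ref{thm:LC-inapprox}, contradicting $\PT\neq\NP$. The main obstacle is the simultaneous balancing of the two amplification parameters: $t$ must be polynomially larger than $|\Psi|_c/r$ so that the additive $\Psi$-overhead does not swamp the multiplicative gap, yet pushing $t$ higher inflates $n$ and so shrinks both $\varepsilon$ and the factor $\rho_c(n^\varepsilon)$; keeping all three of ``$tr$ dominates,'' ``$\varepsilon$ stays at a constant,'' and ``$|\Phi|_c$ stays near-linear'' consistent is precisely what pins the statement down to $\varepsilon\in(0,1/4]$ and to the clause window $O(n^{1+2\varepsilon})$.
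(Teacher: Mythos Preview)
Your proposal is correct and follows essentially the same approach as the paper's own proof: fix $d=1+r\lambda+s\lambda'$ as dictated by Lemma~\ref{lem:cnf-no-shortcuts}, set $t=s^{1/\varepsilon}$ with $\varepsilon\le 1/4$ so that $t$ dominates $|\Phi|_v$ and the additive $|\Psi|_c$ overhead in Corollary~\ref{cor:upsilon-ub} becomes negligible, then read off the gap and translate from $s$ to $n^{\varepsilon}$. The paper carries the intermediate bounds a bit more explicitly (tracking the $\delta_c(s)$ factor coming from $r=s\lfloor\delta_c(s)\rfloor$ rather than absorbing everything into $s^{k+o(1)}$), but the structure, the choice of parameters, and the final gap computation are the same.
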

\begin{proof}
  Let $\cL_0$ be a Dinur and Safra's \LC promise instance (cf. Theorem~\ref{thm:LC-inapprox}) and let $\cL$ be 
  its equivalent refined version (cf. Definition~\ref{dfi:ELC-inst}). 

  Let $\Phi$ be the canonical formula constructed from $\cL$ with the parameter $d$ set to $d=1+r\lambda+s\lambda'$ 
  (cf. Lemma~\ref{lem:cnf-no-shortcuts}), and let $h$ be the pure Horn function defined by $\Phi$. Let $\Upsilon$ be 
  a pure Horn CNF representation of $h$ obtained by some exact clause minimization algorithm when $\Phi$ is given as 
  input.

  Recall Notation~\ref{nta:LC-sizes} and for convenience, let $\delta=\delta_c(s)$ and $\rho=\rho_c(s)$. 
  Substituting the quantities established in Lemma~\ref{lem:cnf-reduction-size} into the bounds given by
  Corollary~\ref{cor:upsilon-ub}, and applying the values given in Remark~\ref{rem:DS-inst-size}, we have
  that
  \begin{align*}
    |\Upsilon|_c & \geq t(\kappa(f)r + s) + \frac{d(\pi+2m\lambda') + r\lambda+s\lambda'}{\lambda+\lambda'} \\
                 & \geq st(\kappa(f)(\delta-1) + 1) + \Omega(s^2\delta\rho^{\delta}) \\ 
                 & \geq st(\kappa(f)(\delta-1) + 1) + \omega(s^{2}),
  \end{align*}
  and
  \begin{align*}
    |\Upsilon|_c & \leq t(\kappa(f)r + s) + d(\pi+2m\lambda') + r\lambda+s\lambda' \\
                 & \leq st(\kappa(f)\delta + 1) + O(s^3\delta\rho^{2\delta+1}) \\ 
                 & \leq st(\kappa(f)\delta + 1) + o(s^{4}).
  \end{align*}

  Similarly, for the number of variables we have that
  \begin{align*}
    t \leq |\Upsilon|_v & = t + dm(\lambda'+1) + r\lambda+s\lambda' \\
                        & \leq t + O(s^3\delta\rho^{2\delta}) \\
                        & \leq t + o(s^{4}).
  \end{align*}

  Now, choosing $\varepsilon>0$ such that $t=s^{1/\varepsilon}=\Omega(s^4)$ and supposing that $s\lra\infty$,
  we obtain the asymptotic expressions
  \[
    |\Upsilon|_c = s^{(1+1/\varepsilon)}\delta_c(s)\kappa(f)(1+o(1))
    \qquad\text{and}\qquad
    |\Upsilon|_v = s^{1/\varepsilon}(1+o(1)).
  \]
	
  Bringing the existing gaps of the \LC promise instances into play, we then obtain the following dichotomy
  \begin{align*}
    \kappa(\cL)=1            & \Longrightarrow |\Upsilon|_c\leq s^{(1+1/\varepsilon)}\delta_c(s)(1+o(1)),\\
    \kappa(\cL)\geq\rho_c(s) & \Longrightarrow |\Upsilon|_c\geq s^{(1+1/\varepsilon)}\delta_c(s)\rho_c(s)(1+o(1)).
  \end{align*}

  Letting $n=|\Upsilon|_v$ and relating $|\Upsilon|_c$ to the number of variables of $h$, 
  the above dichotomy reads as
  \begin{align*}
    \kappa(\cL)=1            & \Longrightarrow |\Upsilon|_c\leq n^{(1+\varepsilon)}\delta_c(n^{\varepsilon})(1+o(1)),\\
    \kappa(\cL)\geq\rho_c(s) & \Longrightarrow |\Upsilon|_c\geq n^{(1+\varepsilon)}\delta_c(n^{\varepsilon})\rho_c(n^{\varepsilon})(1+o(1)),
  \end{align*}
  giving a hardness of approximation factor of $\rho_c(n^{\varepsilon})$ for the pure Horn CNF minimization problem
  (cf. Theorem~\ref{thm:LC-inapprox}). That is, any polynomial time algorithm that approximates the number of 
  clauses of a pure Horn CNF representation of $h$ to within a factor better than $\rho_c(n^{\varepsilon})$ can
  be used to solve the \LC promise problem for $\cL$. This in turn would show that $\PT=\NP$.
	
  To conclude the proof, notice that since $\log\varepsilon < 0$ and $n\lra\infty$, the gap
  \begin{align*}
    \rho_c(n^{\varepsilon}) 
    &    = 2^{(\varepsilon\log n)^{1-1/\delta_c(n^\varepsilon)}}\\
    & \geq 2^{\varepsilon(\log n)^{1-1/\delta_c(n)}}\\
    &    = 2^{(\log n)^{1-1/\delta_c(n)+\log\varepsilon/\log\log n}}\\
    &    = 2^{(\log n)^{1-o(1)}},
  \end{align*}
  where the $\log\varepsilon/\log\log n$ in the exponent is negligible compared to $1/\delta_c(n)$
  as $\delta_c(n)=o(\log\log n)$, and also notice that the number of clauses 
  \[
  n^{(1+\varepsilon)}\delta_c(n^{\varepsilon})\leq n^{1+2\varepsilon}.
  \]
\end{proof}

\section{Pure Horn 3-CNFs and Minimizing the Number of Literals}\label{sec:3-cnf}
In this section, we extend our hardness result in two ways. First, we prove that it still holds when
the pure Horn function $h$ is represented through a pure Horn 3-CNF, that is, in which each clause 
has at most three literals (and at least two, since $h$ is pure Horn). Second, we build upon this
first extension and prove that a similar bound holds when trying to determine a literal minimum pure 
Horn 3-CNF representation of $h$. 

Let again $d=1+r\lambda+s\lambda'$ and $t$ be a positive integer to be specified later. As in the 
previous section, both $d$ and $t$ will be used as (gap) amplification devices.

A brief inspection of our construction in Section~\ref{sec:reduction} shows that the clauses of type 
(b) and (d) may have arbitrarily long subgoals, with long meaning strictly more than three 
literals. The idea is then to modify the construction locally so that each long clause is replaced
by a gadget consisting of a collection of quadratic or cubic new clauses. Each gadget is designed to 
preserve the original logic implications of the clause it replaces.
  
Specifically, we replace the clauses of type (b) in a similar way to what is done in the reduction from 
SAT to 3-SAT (cf. Garey and Johnson~\cite{GJ79}), that is, in a \emph{linked-list} fashion. For each 
vertex $y\in Y$, let $\dg(y):=|N(y)|$ be its degree and let $\angleb{z_y^1,z_y^2,\ldots,z_y^{\dg(y)}}$ be 
an arbitrary, but fixed ordering of its neighbors. Associate new propositional variables $e(\beta,x,y,\ell',i)$ 
with all edges $(x,y)\in E$, all labels $\ell'\in L'_y$, all indices $i\in[d]$, and all indices
$\beta\in[\dg(y)-2]$. As before, we have that $d=1+r\lambda+s\lambda'$. Replace the clauses of type (b) 
by the families of clauses below:

\begin{description}
  \item[(b$_1$)] \raggedright $\displaystyle \bigwedge_{z\in N(y)}e(z,y,\ell',i)\lra e(x,y,i)$\hfill
                 $\forall\ (x,y)\in E$, $\ell'\in L'_y$, $i\in[d]$, $\dg(y)\leq 2;$ \\[2mm]
	
  \item[(b$_2$)] \raggedright $\displaystyle e(z_y^1,y,\ell',i)\wedge e(z_y^2,y,\ell',i)\lra e(1,x,y,\ell',i)$\\[.5mm]
                 \raggedleft $\forall\ (x,y)\in E$, $\ell'\in L'_y$, $i\in[d]$, $\dg(y)\geq 3;$ \\[2mm]
	
  \item[(b$_3$)] \raggedright $\displaystyle e(z_y^{\beta+2},y,\ell',i)\wedge e(\beta,x,y,\ell',i)\lra e(\beta+1,x,y,\ell',i)$\\[.5mm]
                 \raggedleft $\forall\ (x,y)\in E$, $\ell'\in L'_y$, $i\in[d]$, $\beta\in[\dg(y)-3];$ \\[2mm]
	
  \item[(b$_4$)] \raggedright $\displaystyle e(z_y^{\dg(y)},y,\ell',i)\wedge e(\dg(y)-2,x,y,\ell',i)\lra e(x,y,i)$\\[.5mm]
                 \raggedleft $\forall\ (x,y)\in E$, $\ell'\in L'_y$, $i\in[d]$, $\dg(y)\geq 3.$ \\[2mm]
\end{description}

The clauses of type (b$_1$) are exactly the quadratic and cubic clauses of type (b) in the original 
construction. The clauses of types (b$_2$), (b$_3$), and (b$_4$) rely on the new variables $e(\beta,x,y,\ell',i)$ 
to handle the remaining original clauses of type (b) through a series of split and link operations. 
It is not hard to see that these new families of clauses retain the symmetry and non interference 
properties possessed by the original ones they replaced. Furthermore, they will be part of an exclusive
component. These characteristics, as we shall see, contributes to the transference of most of the lemmas 
from the previous section in a rather verbatim fashion.

Trying to apply the same technique to the clauses of type (d) generates the following problem. 
Recall that $m=|E|$.
All the $r\lambda+s\lambda'$ clauses of type (d) have the same long subgoal with $dm$ literals; 
they only differ in their heads. In a first attempt to emulate what was done above to the clauses 
of type (b), we may decide to replace this long subgoal with a single linked list and to replicate
its last node once for each label $\ell\in L\cup L'$. After ordering these literals in an arbitrary, 
but fixed order, introducing $dm-2$ new variables, say $e(\zeta)$ for $\zeta\in[dm-2]$, and performing 
split and link operations, we obtain the linked list whose last node has subgoal 
\mbox{$e(x',y',i')\wedge e(dm-2)$}, for some edge $(x',y')\in E$ and some index $i'\in[d]$. This subgoal
is then replicated, thus spanning the clauses 
\[
  e(x',y',i')\wedge e(dm-2)\lra u(\ell),
\]
for all labels $\ell\in L\cup L'$. Now, it is not hard to see that the prime implicates 
\mbox{$v(1)\lra e(x',y',i')$} and \mbox{$v(1)\lra e(dm-2)$} completely bypass the amplification 
device dependent on the parameter $d$, reintroducing all the available labels. Similar to what was 
said before, this renders the gap amplification device dependent on parameter $t$ innocuous and the 
whole construction useless. It is also not hard to see that introducing a new amplification device 
and recurring on the whole idea does not solve the problem as we end up in a similar situation.

In a second attempt to emulate what was done to the clauses of type (b), we may decide to introduce 
different linked lists to different clauses of type (d). Notice that this implies 
that each of these lists must be indexed by one of the available labels in $L\cup L'$. In order for 
this new collection of clauses to be reached in an execution of the Forward Chaining procedure, this 
indexing scheme must be back propagated into the clauses of types (a), (b$_1$) through (b$_4$), and (c). 
That is, every clause of the exclusive component must now be indexed by a label in $L\cup L'$. With
some thought, it is possible to realize the following. First, this label indexing of clauses is 
performing a job similar to the one played by parameter $d$, in the sense that the latter could in 
principle be dropped --- and replacing one scheme by the other leaves the number of clauses in the 
exclusive component in the same order of magnitude. Second, this operation significantly changes the
meaning of the clauses being used as the labels would then be reintroduced in a somewhat independent
fashion. While the high degree of symmetry guarantees that all labels would be reintroduced, the new
core is not an extension of the old one: we are not extending the original function by the introduction
of auxiliary variables to reduce clause degrees through new local gadgets; we are changing the function
being represented, and that qualifies the process as a new construction instead of a \emph{cubification}
of the old one. This immediately leads us to our third point: it is not clear how or whether the proofs
we presented in Section~\ref{sec:reduction} would extend to this new environment. Again with some
thought, it is possible to see that the use of the label indexing scheme alone does not guarantee that 
the prime implicates involving variables $v(j)$ can have forms other than $v(j)\lra u(\ell)$, for $j\in[t]$
and $\ell\in L\cup L'$. Moreover, the shape of these prime implicates might depend on the ordering
chosen for the edges of the graph and reintroducing the amplification device based on parameter $d$, 
playing the same role as was before, does not ameliorate the situation (actually, it is completely 
useless). The difficulty in controlling the shape of those prime implicates tarnishes the tie established
between clause minimum prime pure Horn representations and tight, optimal total-covers. It might still 
be possible to replace the clauses of type (d) in a linked-list fashion, but at this point is still not 
clear how to use such approach in a correct and not overly complicated way.

We shall circumvent the above problem through the use of a different structure: we shall replace the
clauses of type (d) by new clauses arranged as \emph{complete binary trees}, i.e., trees in which every
level has all the nodes with the possible exception of the last level, where its nodes are flushed to 
the left; and we shall then link these trees together by their roots. The idea is as follows. For each 
index $i\in[d]$, we will introduce $m-1$ new variables and will arrange them as internal nodes in a complete 
binary tree, where the variables $e(x,y,i)$ will appear as tree leaves. Notice that we will have exactly
$d$ trees. We will then associate each label from $L\cup L'$ with the roots of two of those trees, in an
orderly fashion: the roots (namely, the variables $e(1,i)$) will be seen as nodes and the labels will be 
seen as edges of a path of length $d$. The path will be well defined (i.e., all labels will be reintroduced) 
if all nodes (root variables) are reachable through Forward Chaining from a variable $v(j)$. It is worth
noticing that since $d=1+r\lambda+s\lambda'>|L\cup L'|$, all prime implicates involving the variables $v(j)$
will have the form $v(j)\lra u(\ell)$, for some label $\ell\in L\cup L'$ --- similarly to what happend in
the pure Horn CNF case, we shall show that it is simply not advantageous for these prime implicates to 
have any other form in clause minimum prime pure Horn 3-CNF representations. We now formalize this idea.

Let us rename the labels in $L\cup L'$ as $\ell_\alpha$, $\alpha\in[d-1]$. Let us also index the $m$
edges in $E$ as $e_k$, $k\in[m]$. Let us further introduce new propositional variables $e(k,i)$, 
$k\in[m-1]$, $i\in[d]$, and introduce $e(k,i)$ for $k\in\{m,m+1,\ldots,2m-1\}$ to be an alias to the 
variable $e(x,y,i)$, where $(x,y)=e^{}_{k-m+1}$ is an edge in $E$ according to the indexing above. 
We then create $d$ complete binary trees through the family of clauses:
\[
  \begin{array}{clr}
    \textrm{(d$_1$)} & \displaystyle e(2k,i)\wedge e(2k+1,i)\lra e(k,i) 
                     & \qquad\displaystyle \forall\ k\in[m-1],\ i\in[d]; \\[4mm]

    \textrm{(d$_2$)} & \displaystyle e(1,\alpha)\wedge e(1,\alpha+1)\lra u(\ell_\alpha) 
                     & \qquad\forall\ \alpha\in[d-1]. \\[1mm]
\end{array}
\]

Notice that clauses of type (d$_1$) index the nodes of the tree in a similar way a complete binary
tree is stored inside an \emph{array} (cf. Cormen et al.~\cite{CLRS09}) and that the clauses of type 
(d$_2$) do define the path we mentioned. 

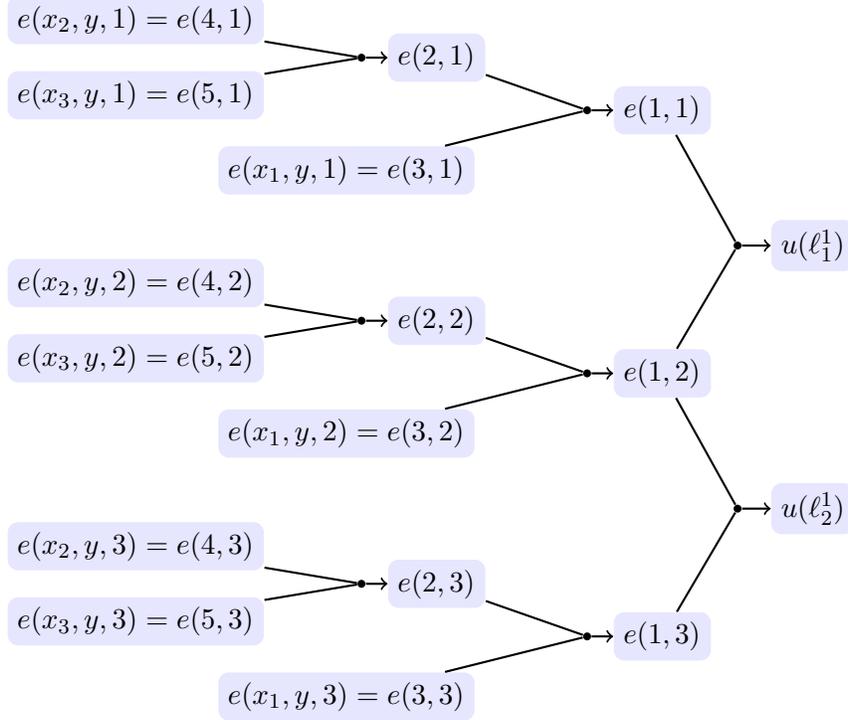
\begin{figure}[bt]
  \centering
  \begin{tikzpicture}[scale=1.0]
    \tikzstyle{dot}    = [fill=black, circle, minimum size=3pt, inner sep=0pt]
    \tikzstyle{vertex} = [fill=blue!10, rectangle, rounded corners]
    \tikzstyle{arc}    = [draw,thick,->,black]
    \tikzstyle{edge}   = [draw,thick,- ,black]

    \node[vertex] (e41) at (0.0,9.0) {$e(x_2,y,1)=e(4,1)$};
    \node[vertex] (e51) at (0.0,8.0) {$e(x_3,y,1)=e(5,1)$};
    \node[vertex] (e31) at (2.8,7.0) {$e(x_1,y,1)=e(3,1)$};

    \node[dot]    (d21) at (3.0,8.5) {}; 
    \node[vertex] (e21) at (4.0,8.5) {$e(2,1)$};
    \node[dot]    (d11) at (6.0,7.8) {}; 
    \node[vertex] (e11) at (7.0,7.8) {$e(1,1)$};
    \draw[edge]   (e41) -- (d21);
    \draw[edge]   (e51) -- (d21);
    \draw[arc]    (d21) -- (e21);
    \draw[edge]   (e21) -- (d11);
    \draw[edge]   (e31) -- (d11);
    \draw[arc]    (d11) -- (e11);

    \node[vertex] (e42) at (0.0,5.5) {$e(x_2,y,2)=e(4,2)$};
    \node[vertex] (e52) at (0.0,4.5) {$e(x_3,y,2)=e(5,2)$};
    \node[vertex] (e32) at (2.8,3.5) {$e(x_1,y,2)=e(3,2)$};

    \node[dot]    (d22) at (3.0,5.0) {}; 
    \node[vertex] (e22) at (4.0,5.0) {$e(2,2)$};
    \node[dot]    (d12) at (6.0,4.3) {}; 
    \node[vertex] (e12) at (7.0,4.3) {$e(1,2)$};
    \draw[edge]   (e42) -- (d22);
    \draw[edge]   (e52) -- (d22);
    \draw[arc]    (d22) -- (e22);
    \draw[edge]   (e22) -- (d12);
    \draw[edge]   (e32) -- (d12);
    \draw[arc]    (d12) -- (e12);

    \node[vertex] (e43) at (0.0,2.0) {$e(x_2,y,3)=e(4,3)$};
    \node[vertex] (e53) at (0.0,1.0) {$e(x_3,y,3)=e(5,3)$};
    \node[vertex] (e33) at (2.8,0.0) {$e(x_1,y,3)=e(3,3)$};

    \node[dot]    (d23) at (3.0,1.5) {}; 
    \node[vertex] (e23) at (4.0,1.5) {$e(2,3)$};
    \node[dot]    (d13) at (6.0,0.8) {}; 
    \node[vertex] (e13) at (7.0,0.8) {$e(1,3)$};
    \draw[edge]   (e43) -- (d23);
    \draw[edge]   (e53) -- (d23);
    \draw[arc]    (d23) -- (e23);
    \draw[edge]   (e23) -- (d13);
    \draw[edge]   (e33) -- (d13);
    \draw[arc]    (d13) -- (e13);

    \node[dot]    (w11) at (8.0,6.0) {};
    \node[vertex] (u11) at (9.0,6.0) {$u(\ell^1_1)$};
    \node[dot]    (w12) at (8.0,2.5) {};
    \node[vertex] (u12) at (9.0,2.5) {$u(\ell^1_2)$};
    \draw[edge]   (e11) -- (w11);
    \draw[edge]   (e12) -- (w11);
    \draw[arc]    (w11) -- (u11);
    \draw[edge]   (e12) -- (w12);
    \draw[edge]   (e13) -- (w12);
    \draw[arc]    (w12) -- (u12);
  \end{tikzpicture}               
  \caption{
    Partial depiction of the complete binary tree chain, which is responsible for reintroducing 
    all the available labels of a \LC instance in our 3-CNF construction, of a toy example
    where the constraint graph is a claw. In such example, the whole chain has nine trees and 
    reintroduces eight different labels. Notice the complete symmetry between the trees.
  }\label{fig:label-tree} 
\end{figure}

We shall illustrate the complete binary tree transformation through the following toy example. 
We start with a \LC instance whose constraint graph is a claw, that is, 
$G=(\{x_1,x_2,x_3\},\{y\},\{(x_1,y),(x_2,y),(x_3,y)\})$, whose label sets are 
$L^{}_0=\{\ell^{}_1,\ell^{}_2\}$ and $ L'_0=\{\ell'_1,\ell'_2\}$, and whose 
constraint set is the union of 
\[
  \Pi^0_{(x_1,y)}=\{(\ell^{}_1,\ell'_1), (\ell^{}_1,\ell'_2)\},\ 
  \Pi^0_{(x_2,y)}=\{(\ell^{}_1,\ell'_2)\},\ \text{and}\  
  \Pi^0_{(x_3,y)}=\{(\ell^{}_2,\ell'_1), (\ell^{}_2,\ell'_2)\}.
\]

The refined \LC instance will then have 8 labels in total 
($L=\{\ell^1_1,\ell^1_2,$ $\ell^2_1,\ell^2_2,\ell^3_1,\ell^3_2\}$ and 
$L'=\{\ell'_1,\ell'_2\}$)\footnote{The refined label $(x_i,\ell_j)\in L$, with $x_i\in X$ and 
$\ell_j\in L^{}_0$, is depicted by $\ell^i_j$ in this example.}
and our CNF construction will introduce $8\times 9=72$ clause of type (d), nine of which have the 
following form
\[
  e(x_1,y,i)\wedge e(x_2,y,i)\wedge e(x_3,y,i)\lra u(\ell^1_1),
\]
as $i\in[d]=[9]$. Clauses of type (d$_1$) replace those subgoals above by
\begin{equation}\label{eq:toy-tree}
  e(4,i)\wedge e(5,i)\lra e(2,i)
  \qquad\text{and}\qquad
  e(2,i)\wedge e(3,i)\lra e(1,i),
\end{equation}
where in this case, $e(x_1,y,i)=e(3,i)$, $e(x_2,y,i)=e(4,i)$, and $e(x_3,y,i)=e(5,i)$.

Finally, clauses of type (d$_2$) link the trees in \eref{eq:toy-tree} together, as e.g.
\[
  e(1,1)\wedge e(1,2)\lra u(\ell^1_1),
  \qquad
  e(1,2)\wedge e(1,3)\lra u(\ell^1_2),
  \qquad
  \textrm{and so on.}
\]
A graphical illustration of part of the above transformation is provided 
in Figure~\ref{fig:label-tree}. 

Now, for clarity purposes, we present this new construction for the pure Horn 3-CNF case in full 
form below:
\[
  \begin{array}{ll}
    \textrm{(a)} & \displaystyle u(\ell)\wedge u(\ell')\lra e(x,y,\ell',i) 
                   \hspace{1.8cm} \forall\ (x,y)\in E,\ (\ell,\ell')\in\Pi_{(x,y)},\ i\in[d]; \\[4mm]

\textrm{(b$_1$)} & \displaystyle \bigwedge_{z\in N(y)}e(z,y,\ell',i)\lra e(x,y,i)
                   \hfill \forall\ (x,y)\in E$, $\ell'\in L'_y$, $i\in[d]$, $\dg(y)\leq 2; \\[6mm]

\textrm{(b$_2$)} & \displaystyle e(z_y^1,y,\ell',i)\wedge e(z_y^2,y,\ell',i)\lra e(1,x,y,\ell',i) \\[.5mm]
                 & \hfill \forall\ (x,y)\in E$, $\ell'\in L'_y$, $i\in[d]$, $\dg(y)\geq 3; \\[2mm]

\textrm{(b$_3$)} & \displaystyle e(z_y^{\beta+2},y,\ell',i)\wedge e(\beta,x,y,\ell',i)\lra e(\beta+1,x,y,\ell',i) \\[.5mm]
                 & \hfill \forall\ (x,y)\in E$, $\ell'\in L'_y$, $i\in[d]$, $\beta\in[\dg(y)-3]; \\[2mm]

\textrm{(b$_4$)} & \displaystyle e(z_y^{\dg(y)},y,\ell',i)\wedge e(\dg(y)-2,x,y,\ell',i)\lra e(x,y,i) \\[.5mm]
                 & \hfill \forall\ (x,y)\in E$, $\ell'\in L'_y$, $i\in[d]$, $\dg(y)\geq 3; \\[4mm]

    \textrm{(c)} & \displaystyle e(x,y,i)\lra e(x,y,\ell',i)
                   \hfill \forall\ (x,y)\in E,\ \ell'\in L'_y,\ i\in[d]; \\[4mm]

\textrm{(d$_1$)} & \displaystyle e(2k,i)\wedge e(2k+1,i)\lra e(k,i)
                   \hfill \forall\ k\in[m-1],\ i\in[d]; \\[4mm]
 
\textrm{(d$_2$)} & \displaystyle e(1,\alpha)\wedge e(1,\alpha+1)\lra u(\ell_\alpha)
                   \hfill \forall\ \alpha\in[d-1]; \\[4mm]

    \textrm{(e)} & \displaystyle v(j)\lra u(\ell)
                   \hfill \forall\ j\in[t],\ \ell\in L\cup L'; \\[2mm]
  \end{array}
\]
where as before, $N(y):=\{x\in X:(x,y)\in E\}$ is the open neighborhood of the vertex $y\in Y$.
Recall that $e(k,i)$ is an alias to a variable $e(x,y,i)$, $(x,y)=e^{}_{k-m+1}$ being an edge
in $E$, if $k\in\{m,m+1,\ldots,2m-1\}$ and that it is a new variable used to build the 
$i$-th tree if $k\in[m-1]$.

\begin{definition}\label{dfi:3fcns}
  Let us call $\Psi$ and $\Phi$ the \emph{canonical} pure Horn 3-CNF formulae defined, respectively, 
  by the families of clauses (a) through (d$_2$) and by all the families of clauses above. Let $g$ 
  and $h$ be, in that order, the pure Horn functions they represent. 
\end{definition}

\subsection{Correctness of the 3-CNF Reduction}
We now proceed to show the correctness of the ideas discussed above. Once again, we will constantly
rely on the canonical representations $\Phi$ and $\Psi$ to make inferences about the functions $h$ 
and $g$ they respectively define. And these inferences will most of the time be made throughout 
Forward Chaining. First, we present the new estimations for the number of clauses and variables
in $\Phi$ and $\Psi$.

\begin{lemma}\label{lem:3cnf-reduction-size}
  Let $d$ and $t$ be positive integers (amplification parameters) and let $r$, $s$, $m$, $\lambda$, 
  $\lambda'$, and $\pi$ be as in Notation~\ref{nta:LC-sizes}. We have the following relations for 
  the number of clauses and variables in $\Phi$, respectively:
  \[
    d(\pi+2m\lambda'+2m-1)-1\leq|\Phi|_c-t(r\lambda+s\lambda')\leq d(\pi+m^2\lambda'+4m)
  \]
  and
  \[
    t\leq|\Phi|_v\leq t+dm(\lambda'+2)+m^2\lambda', 
  \]
  In $\Psi$, those numbers are, respectively, 
  \[
    d(\pi+2m\lambda'+2m-1)-1\leq|\Psi|_c\leq d(\pi+m^2\lambda'+4m)
  \]
  and
  \[
    0 \leq|\Psi|_v\leq dm(\lambda'+2)+m^2\lambda'.
  \]
\end{lemma}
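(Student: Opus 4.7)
The approach is a direct, family-by-family enumeration of clauses and variables in the construction, combined with the elementary degree-sum inequality for $G$ to convert the degree-dependent totals into the uniform bounds claimed.

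For the clause count I would first tally the families whose sizes are immediate: type (a) contributes $d\pi$ clauses, type (c) contributes $dm\lambda'$, type (d$_1$) contributes $d(m-1)$, type (d$_2$) contributes $d-1$, and type (e) contributes $t(r\lambda+s\lambda')$ (the last only in $\Phi$). The only nontrivial contribution comes from the linked-list replacement (b$_1$)--(b$_4$). Here I would argue that for each $y\in Y$ with $\dg(y)\leq 2$ the original type-(b) clause is already short enough and is retained verbatim as (b$_1$), yielding $\dg(y)\lambda' d$ clauses, while for each $y$ with $\dg(y)\geq 3$ each of its $\dg(y)$ incident edges is replaced by exactly $1+(\dg(y)-3)+1=\dg(y)-1$ clauses distributed over (b$_2$), (b$_3$), and (b$_4$), totalling $\dg(y)(\dg(y)-1)\lambda' d$ per such vertex. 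Hence the combined (b)-contribution equals $d\lambda'\bigl[\sum_{y:\dg(y)\leq 2}\dg(y)+\sum_{y:\dg(y)\geq 3}\dg(y)(\dg(y)-1)\bigr]$.

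To pass from this expression to the uniform bounds in the statement, I would use $\sum_y\dg(y)=m$ to lower-bound the combined (b)-sum by $dm\lambda'$ (the minimum, attained when every $y$-degree is at most two) and the inequality $\sum_y\dg(y)^2\leq m\cdot\max_y\dg(y)\leq m^2$ to upper-bound it by $dm^2\lambda'$. Adding these to the fixed counts from (a), (c), (d$_1$), (d$_2$), and then also adding the (e)-contribution when moving from $\Psi$ to $\Phi$, yields the stated intervals for $|\Psi|_c$ and $|\Phi|_c$ once the lower-order additive terms are absorbed into the constants.

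The variable count is analogous and requires no new ideas: the $u(\ell)$'s contribute $r\lambda+s\lambda'$, the $e(x,y,i)$'s contribute $dm$, the $e(x,y,\ell',i)$'s contribute $dm\lambda'$, the auxiliary linked-list variables $e(\beta,x,y,\ell',i)$ with $\beta\in[\dg(y)-2]$ contribute $d\lambda'\sum_{y:\dg(y)\geq 3}\dg(y)(\dg(y)-2)\leq dm^2\lambda'$, the $d$ copies of the complete-binary-tree internal nodes $e(k,i)$ for $k\in[m-1]$ contribute $d(m-1)$, and the $v(j)$'s (present only in $\Phi$) contribute $t$. Summing and again invoking $\sum_y\dg(y)^2\leq m^2$ produces the stated bounds on $|\Phi|_v$ and $|\Psi|_v$. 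No step is genuinely difficult; the only place requiring real care is the case split $\dg(y)\leq 2$ versus $\dg(y)\geq 3$ in the (b)-replacement, to guarantee that each long type-(b) clause is replaced by exactly $\dg(y)-1$ new clauses and introduces exactly $\dg(y)-2$ new auxiliary variables per $(\ell',i)$, together with the bookkeeping on the $d$ trees built by (d$_1$) and chained by (d$_2$).
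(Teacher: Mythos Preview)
Your approach---family-by-family enumeration, with the degree-sum identity $\sum_y\dg(y)=m$ controlling the (b)-replacement---is exactly the paper's. Two arithmetic points to watch when you carry it out. First, for the upper bound on $|\Psi|_c$ you need the slightly sharper $\sum_{(x,y)\in E}(\dg(y)-1)=\sum_y\dg(y)(\dg(y)-1)\le m(m-1)$ rather than just $\sum_y\dg(y)^2\le m^2$, so that the (b)- and (c)-contributions \emph{together} are at most $dm^2\lambda'$; with your looser bound you overshoot the stated upper bound by an additive $dm\lambda'$ that the $4m$ term cannot absorb. Second, for the (d)-contribution the paper does \emph{not} use your direct index-set count $\#(\text{d}_1)=d(m-1)$; it instead sums over tree levels, writing $\#(\tilde{\text{d}})=d-1+d\sum_{l=1}^{\eta-1}2^l$ with $\eta=1+\lfloor\log(2m-1)\rfloor$, and it is this level-sum that produces the $d(2m-1)-1$ term appearing in the stated lower bound. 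If you stick with your direct count you will arrive at the lower bound $d(\pi+2m\lambda'+m)-1$ rather than the constant in the statement, so be aware of this discrepancy before claiming that ``the lower-order additive terms are absorbed.''
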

\begin{proof}
  Let $\#(\tilde{\textrm{b}}):=\sum_{i\in[4]}\#(\textrm{b}_i)$ and $\#(\tilde{\textrm{d}}):=\sum_{i\in[2]}\#(\textrm{d}_i)$, 
  where $\#(\alpha)$ denotes the number of clauses of type $(\alpha)$ in $\Phi$.
	
  For each edge $(x,y)\in E$, there was a clause of type (b) whose subgoal had size equal to $\dg(y)$. 
  Each of those clauses was either maintained in case $\dg(y)\leq 2$ (originating the new clauses (b$_1$)) 
  or replaced by the $\dg(y)-1$ new clauses (b$_2$), (b$_3$), and (b$_4$) in a linked-list fashion.
  This procedure results in
  \[
    dm\lambda' \leq \#(\tilde{\textrm{b}}) = d\lambda'\sum_{(x,y)\in E}(\dg(y)-1) \leq dm(m-1)\lambda'.
  \]

  The clauses of type (d) were replaced by clauses of type (d$_1$) and (d$_2$). The new clauses of type 
  (d$_1$) describe $d$ complete binary trees, each of which having $2m-1$ nodes and hence, height 
  $\eta=1+\floor{\log (2m-1)}\leq\floor{\log 4m}$. Considering also the $d-1$ new clauses of type (d$_2$),
  we then obtain that
  \[
    d(2m-1)-1\leq \#(\tilde{\textrm{d}}) = d-1 + d\sum_{l=1}^{\eta-1} 2^l = d-1 + d(2^\eta-1) \leq 4dm.
  \]

  The new gadgets introduces $d(m-1)$ new variables $e(k,i)$ (notice the aliases are not new variables) 
  and at most $dm(m-1)\lambda'$ new variables $e(\beta,x,y,\ell',i)$. Now, the results follow by using 
  the remaining estimates of Lemma~\ref{lem:cnf-reduction-size} and by recalling that $\Phi$ and $\Psi$ 
  differ only on the clauses of type (e).
\end{proof}

Similarly to the pure Horn CNF case, as long as the quantities $d$ and $t$ are polynomial in $s$, namely, 
the number of vertices in $Y$, the new construction of $\Phi$ from $\cL$ can also be carried out in polynomial 
time in $s$ or, in another way, in polynomial time in the number of variables of $h$ (cf. 
Remark~\ref{rem:DS-inst-size}).

The same arguments used to prove Lemma~\ref{lem:g-excl-comp} apply in this new setting as the differences 
introduced by the new clauses are of a local nature. Specifically, it is still immediate that no clause in 
$\Psi$ has head outside $V_{g}$, the set of variables occurring in $\Psi$, and hence, that $V_g$ is closed 
under Forward Chaining in $\Phi$. Thus, the set 
$
  \cX(V_g) := \{C\in\cI(h): \mathsf{Vars}(C)\subseteq V_g\}
$ 
is still an exclusive family for $h$ (cf. Lemma~\ref{lem:FC-exclusive-set}) and $g\equiv\Psi=\Phi\cap\cX(V_g)$ 
is an $\cX(V_g)$-exclusive component of $h$. We have just proved the following.

\begin{lemma}\label{lem:3cnf-g-excl-comp}
  The new function $g$ is an exclusive component of the new function $h$. Consequently, $g$ can be analysed 
  and minimized separately. \qed
\end{lemma}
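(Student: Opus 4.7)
The plan is to transfer the proof of Lemma~\ref{lem:g-excl-comp} essentially verbatim to this modified construction, by checking that the cubification did not change anything structurally relevant. The crucial observation is that every new clause introduced when replacing families (b) and (d) belongs to $\Psi$ (all the auxiliary variables $e(\beta,x,y,\ell',i)$ and $e(k,i)$ and all the new clauses of types (b$_1$)--(b$_4$) and (d$_1$)--(d$_2$) were added on the ``core side'' of the reduction). In particular, the only clauses lying in $\Phi\setminus\Psi$ are still the clauses of type (e), exactly as in the pure Horn CNF case.

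The first step is to let $V_g$ denote the set of propositional variables occurring in the new $\Psi$, and to note that by construction every clause of $\Psi$ has its head inside $V_g$. The key step is then to verify that $V_g$ remains closed under Forward Chaining in $\Phi$: a clause that gets triggered is either a clause of $\Psi$ (in which case its head is in $V_g$ automatically) or a clause of $\Phi\setminus\Psi$, that is, a clause $v(j)\lra u(\ell)$ of type (e). But the latter has body $\{v(j)\}$ and $v(j)\notin V_g$, so such a clause can never fire starting from any subset of $V_g$. Hence $F_\Phi(V_g)=V_g$.

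Once closure is in hand, Lemma~\ref{lem:FC-exclusive-set} applies directly and yields that
\[
  \cX(V_g):=\{C\in\cI(h):\mathsf{Vars}(C)\subseteq V_g\}
\]
is an exclusive family for $h$. Since $\Psi$ is exactly the subset of $\Phi$ consisting of clauses whose variables all lie in $V_g$, we have $\Psi=\Phi\cap\cX(V_g)$, and Definition~\ref{dfi:exclusive-comp} then identifies $g$ as the $\cX(V_g)$-component of $h$. The final sentence of the lemma (that $g$ can be analysed and minimised separately) is then an immediate invocation of Lemma~\ref{lem:subfunc-switch}.

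I do not anticipate a genuine obstacle: the result is essentially syntactic and its entire content is that the only bridge between the variables $v(j)$ and the rest of the formula is still provided by the clauses of type (e). The only thing worth double-checking is that none of the cubification gadgets has accidentally placed a variable of $V_g$ as the head of a clause whose body involves some $v(j)$, which is transparent from the clause list since families (b$_1$)--(b$_4$) and (d$_1$)--(d$_2$) never mention any $v(j)$.
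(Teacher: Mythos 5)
Your proof is correct and follows the same route as the paper: observe that all new clauses (types (b$_1$)--(b$_4$), (d$_1$), (d$_2$)) and auxiliary variables fall entirely on the $\Psi$ side, conclude that $V_g$ is closed under Forward Chaining in $\Phi$, apply Lemma~\ref{lem:FC-exclusive-set} to get the exclusive family $\cX(V_g)$, and finish with Definition~\ref{dfi:exclusive-comp} and Lemma~\ref{lem:subfunc-switch}. If anything you spell out the closure argument (why type (e) clauses cannot carry the chaining outside $V_g$) slightly more carefully than the paper does, but the content is identical.
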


It is not hard to see that the bounds provided by Lemma~\ref{lem:cnf-psi-lb} are still valid in this new 
setting. Furthermore, Lemmas~\ref{lem:FC(v(j))} and~\ref{lem:v(j)-quadratic} transfer in an almost verbatim 
fashion: their proof can be quickly adapted as the new clauses and variables do not disrupt any of the 
conclusions obtained.

In the CNF construction presented in the previous section, all clauses of type (d) had the same subgoal.
As explained in the beginning of this section, those clauses' subgoals were potentially long and we had
to replace them by $d$ gadgets whose structure mimics those of complete binary trees. It is still true
that if one label in $L\cup L'$ is reintroduced by a clause of type (d$_2$), then so are all the remaining
others. Like before, the reason is the high degree of symmetry occurring inside the exclusive component of 
$g$. Nevertheless, as the 3-CNF construction is more involved, we shall still present below complete proofs 
for the analogues of Lemmas~\ref{lem:cnf-no-shortcuts}, \ref{lem:cnf-tightness-up}, and 
\ref{lem:cnf-tightness-down}.

\begin{lemma}[Analogue of Lemma~\ref{lem:cnf-no-shortcuts}]\label{lem:3cnf-no-shortcuts}
  Let $d=1+r\lambda+s\lambda'$. In any clause minimum prime pure Horn 3-CNF representation of $h$, the prime implicates 
  involving the variables $v(j)$ have the form $v(j)\lra u(\ell)$, for all indices $j\in[t]$, and for some labels 
  $\ell\in L\cup L'$.
\end{lemma}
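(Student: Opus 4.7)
The plan is to mirror the proof of Lemma~\ref{lem:cnf-no-shortcuts}, adapting the three key claims to the new clause families (b$_1$)--(b$_4$), (d$_1$), (d$_2$). The analogues of Lemmas~\ref{lem:FC(v(j))} and~\ref{lem:v(j)-quadratic} go through essentially verbatim since $v(j)$ still only occurs in bodies of clauses of type (e), so forward chaining from $\{v(j)\}$ in the new canonical $\Phi$ still reaches all of $V_g$, and therefore every prime implicate of $h$ involving $v(j)$ is a quadratic clause $v(j)\lra z$ with $z\in V_g$. Writing $\Upsilon=\Theta\wedge\Gamma$ with $\Theta$ a clause minimum representation of $g$, I partition $\Gamma^j$ into $\Gamma^j_0$ (clauses $v(j)\lra u(\ell)$ with $\ell\in L\cup L'$) and $\Gamma^j_i$ for $i\in[d]$ (clauses whose head is one of the copy-$i$ variables $e(x,y,i)$, $e(x,y,\ell',i)$, $e(\beta,x,y,\ell',i)$, or $e(k,i)$ with $k\in[m-1]$).

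The target is the same replacement argument as in the CNF case: if $\Gamma^j_{i^\star}\neq\emptyset$ for some $i^\star$, I want to show that $\Gamma^j_i\neq\emptyset$ for every $i\in[d]$; then $\sum_i|\Gamma^j_i|\geq d>|L\cup L'|$ and substituting $\bigcup_i\Gamma^j_i$ by the full set $\{v(j)\lra u(\ell):\ell\in L\cup L'\}$ yields a strictly shorter representation $\Delta_j$ (still valid because a full $\Gamma^j_0$ immediately provides every $u(\ell)$, after which the exclusive component $g$ cascades through (a), (b$_1$)--(b$_4$), (c), (d$_1$), (d$_2$) to all of $V_g$), contradicting the minimality of $\Upsilon$. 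Letting $\Xi_i$ collect the clauses of types (a), (b$_1$)--(b$_4$), (c), (d$_1$) indexed by $i$, the families $\Xi_i$ are completely symmetric in $i$ and mutually non-interfering (they share only the $u(\ell)$-variables), whereas the only cross-copy clauses of $\Psi$ are those of type (d$_2$), whose heads are exclusively $u(\ell)$-variables. These observations deliver the analogues of Claim~1 (without (d$_2$), no new $u(\ell)$ appears in forward chaining from $\{v(j)\}$, so $L^{(0)}:=\{\ell:v(j)\lra u(\ell)\in\Gamma^j_0\}$ fully controls subsequent derivations) and Claim~2 (by the structural symmetry, $e(x,y,i)\in F_{\Gamma^j_0\cup\Xi_i}(\{v(j)\})$ if and only if $e(x,y,i')\in F_{\Gamma^j_0\cup\Xi_{i'}}(\{v(j)\})$ under the natural $i\leftrightarrow i'$ renaming, and similarly for the other copy-$i$ variables).

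The analogue of Claim~3 is the delicate step. Non-interference immediately gives that $e(x,y,i)\notin F_{\Gamma^j_0\cup\bigcup_i\Xi_i}(\{v(j)\})$ implies $e(x,y,i)\notin F_{\Gamma^j_0\cup\bigcup_i\Xi_i\cup\bigcup_{i'\neq i}\Gamma^j_{i'}}(\{v(j)\})$, since each $\Gamma^j_{i'}$ only derives copy-$i'$ variables. Because $\Upsilon$ represents $h$, the missing $e(x,y,i)$ must still appear in the full closure $F_\Upsilon(\{v(j)\})$, so it has to be produced either directly by $\Gamma^j_i$ or through a (d$_2$) firing that introduces some previously-absent $u(\ell_\alpha)$; but any such (d$_2$) firing requires copies $\alpha$ and $\alpha+1$ to complete, and by the symmetry of Claim~2 this completion condition is uniform in the copy index. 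Hence any $\Gamma^j_{i'}$-induced bootstrap yielding a new $u(\ell_\alpha)$ would, by the same symmetry, rescue copy $i$ as well, contradicting the hypothesis. Iterating over $i\in[d]$ forces $\Gamma^j_i\neq\emptyset$ for every copy, completing the argument. The main obstacle is precisely this (d$_2$) feedback: unlike the monolithic type-(d) clauses of the CNF case, which cannot fire unless every $e(x,y,i)$ has already been derived, the (d$_2$) clauses fire in small modular pieces and could a priori permit an asymmetric bootstrap of only a few adjacent copies to cascade into new $u(\ell)$'s compensating for a shrunken $\Gamma^j_0$; taming this via the copy symmetry of Claim~2, so that every asymmetric bootstrap is matched by a symmetric one, is the crucial technical step that preserves the all-or-nothing conclusion from the CNF setting.
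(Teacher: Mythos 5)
Your proposal follows the paper's proof essentially verbatim: same decomposition $\Upsilon=\Theta\wedge\Gamma$ with $\Theta$ a clause-minimum representation of the exclusive component $g$, same partition of $\Gamma$ by head into $\Gamma^j_0$ and the copy-indexed sets $\Gamma^j_i$, same three claims (simultaneity of (d$_2$) triggering, copy symmetry of $\Xi_i$, non-interference of $\Gamma^j_{i'}$ for $i'\neq i$) leading to the all-or-nothing conclusion, and the same counting step ($\sum_i|\Gamma^j_i|\geq d=1+|L\cup L'|$ forces the substitution $\Delta_j$ to be strictly shorter). One small improvement over the paper's statement: you include the (d$_1$) clauses in $\Xi_i$, which is necessary to actually reach the internal tree nodes $e(k,i)$ for $k\in[m-1]$, whereas the paper's definition of $\Xi_i$ omits them (almost certainly an oversight). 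You also make explicit the subtlety the paper handles implicitly in its Claim (1) --- that unlike the monolithic type-(d) bodies of the CNF case, the pairwise-linked (d$_2$) clauses could in principle fire asymmetrically, so the copy symmetry has to do real work to rule out a partial bootstrap --- which is a good clarification, though it is the same argument, not a new one.
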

\begin{proof}
  Let $\Upsilon=\Theta\wedge\Gamma$ be a clause minimum prime pure Horn 3-CNF representation of $h$, with $\Theta$ being 
  a clause minimum pure Horn 3-CNF representation of $g$. According to an analogue of Lemma~\ref{lem:v(j)-quadratic}, 
  all prime implicates of $h$ involving the variables $v(j)$ are quadratic. So, for all indices $j\in[t]$ and all indices 
  $i\in[d]$ define the sets
  \begin{align*}
    \Gamma^j_0 & := \Gamma\cap\{v(j)\lra u(\ell): \ell\in L\cup L'\},\\
    \Gamma^j_i & := \Gamma\cap\{v(j)\lra e(k,i),\\
               & \phantom{:= \Gamma\cap\{}\; v(j)\lra e(x,y,\ell',i),\\
               & \phantom{:= \Gamma\cap\{}\; v(j)\lra e(\beta,x,y,\ell',i) :
                 k\in[2m-1], (x,y)\in E, \ell'\in L'_y, \beta\in[\dg(y)-2]\}.
  \end{align*}
  
  Recall that $e(k,i)$ is an alias to $e(x,y,i)$ if $(x,y)=e^{}_{k-m+1}$ and $k\in\{m,m+1,\ldots,2m-1\}$.
  Our goal is to show that the chosen value for the parameter $d$ forces all the sets $\Gamma^j_i$ to be
  simultaneously empty and consequently, that all the prime implicates involving the variables $v(j)$
  in clause minimum pure Horn CNF representations of $h$ have the claimed form. We shall accomplish this 
  in two steps. 

  Let $j\in[t]$. We first show that if a set $\Gamma^j_i\neq\emptyset$ for some index $i\in[d]$, then 
  $\Gamma^j_i\neq\emptyset$ for all indices $i\in[d]$, simultaneously. 

  \setcounter{claim}{-1}
  \begin{claim}
    It holds that
    \[
      e(k,i) \in F_{\Upsilon} (\{v(j)\})
      \quad\text{if and only if}\quad
      e(k,i') \in F_{\Upsilon} (\{v(j)\}),
    \]
    for all indices $i,i'\in[d]$, with $i\neq i'$.
  \end{claim}
  \begin{proof}
    The families of clauses (a), (b$_1$), (b$_2$), (b$_3$), (b$_4$), and (c) are completely symmetric with respect 
    to the indexing variable $i$ and do not interfere with each other. That is, for $i_1\neq i_2$, variables depending 
    upon $i_1$ do not trigger clauses indexed by $i_2$ during Forward Chaining, and vice-versa.
  \end{proof}

  \begin{claim}
    All clauses of type (d$_2$) have two variables in their subgoals which are roots of different, but 
    completely symmetric trees (the trees specified by the clauses of type (d$_1$)). Therefore, because
    of this symmetry and Claim (0), during the execution of the Forward Chaining procedure from $\{v(j)\}$, 
    either they all clauses of type (d$_2$) trigger simultaneously or none of them do. The reason for them 
    not to trigger is the absence of variables $e(k,i)$, for some $k\in[2m-1]$ and all $i\in[d]$, in the 
    Forward Chaining closure from $\{v(j)\}$, i.e, $e(k,i)\not\in F_{\Upsilon}(\{v(j)\})$.
  \end{claim}
  \begin{proof}
    A simple inspection of the families of clauses shows that Claim (1) holds. 
  \end{proof}

  Now, for each index $i\in[d]$, let $\Xi^{}_i$ be the collection of clauses of types (a), (b$_1$), (b$_2$),
  (b$_3$), (b$_4$), and (c) that depend on $i$.

  \begin{claim}
    It holds that
    \[
      e(k,i) \in F_{\Gamma^j_0\cup\Xi^{}_i} (\{v(j)\})
      \quad\text{if and only if}\quad
      e(k,i') \in F_{\Gamma^j_0\cup\Xi^{}_{i'}} (\{v(j)\}),
    \]
    for all indices $i,i'\in[d]$, with $i\neq i'$.
  \end{claim}
  \begin{proof}
    Notice that the families of clauses (a), (b$_1$), (b$_2$), (b$_3$), (b$_4$), and (c) are completely symmetric 
    with respect to the indexing variable $i$. Moreover, for $i_1\neq i_2$, the clauses indexed by $i_1$ do 
    not interfere with the clauses indexed by $i_2$ during an execution of the Forward Chaining procedure. 
    In other words, variables depending upon $i_1$ do not trigger clauses indexed by $i_2$, and vice-versa. 
    These two properties, symmetry and non interference, proves Claim (2). 
  \end{proof}

  \begin{claim}
    If there is a variable $e(k,i)$, with $k\in[2m-1]$ and $i\in[d]$, such that
    \[
      e(k,i)\not\in F_{\Gamma^j_0\cup\left(\bigcup_{i\in[d]}\Xi_i^{}\right)}(\{v(j)\})
    \]
    then
    \[
      e(k,i)\not\in F_{\Gamma^j_0\cup\left(\bigcup_{i\in[d]}\Xi_i^{}\right)\cup\left(\bigcup_{i'\neq i}\Gamma^j_{i'}\right)}
      (\{v(j)\}).
    \]
    Moreover, this implies that $\Gamma^j_i\neq\emptyset$.
  \end{claim}
  \begin{proof}
    The symmetry and non interference properties of families of clauses (a), (b), and (c) also justifies the first
    part of Claim (3). To see it, just notice that were the claim to be false, the prime implicates in $\Gamma^j_{i'}$ 
    would be trigging clauses involving the variable $e(k,i)$ in an execution of the Forward Chaining procedure. 
    Since $i'\neq i$, this cannot happen. The second part follows immediately from the validity of the first part
    together with the fact that $\Upsilon$ represents $h$. 
  \end{proof}

  To finish the first step, notice that since Claim (3) is valid for any $i\in[d]$, Claim (2) implies that if 
  $\Gamma^j_i\neq\emptyset$ for some index $i\in[d]$, then $\Gamma^j_i\neq\emptyset$ for all indices $i\in[d]$, 
  simultaneously.

  For the second step, suppose that $\Gamma^j_i\neq\emptyset$ for all indices $i\in[d]$. We then have that
  \[
    \gamma:=\sum_{i\in[d]}|\Gamma^j_i|\geq d=1+r\lambda+s\lambda'=1+|L\cup L'|, 
  \]
  that is, $\gamma$ is strictly larger than the number of all available labels in $\cL$. This implies that 
  the following pure Horn CNF
  \begin{align*}
    \Delta_j & := \left(\Upsilon\setminus\bigcup_{i\in d}\Gamma^j_i\right)\cup\Big\{v(j)\lra u(\ell):\ell\in L\cup L'\Big\}\\
             &\ = \Theta\cup\left(\left(\Gamma\setminus\bigcup_{i\in d}\Gamma^j_i\right)\cup
                  \Big\{v(j)\lra u(\ell):\ell\in L\cup L'\Big\}\right),
  \end{align*}
  has fewer clauses than $\Upsilon$ (or more precisely, it implies that $|\Delta_j|_c <= |\Upsilon|_c - 1$). 

  Now, since that $\Theta$ is a (clause minimum) representation of the exclusive component $g$, and that the
  set of clauses $\{v(j)\lra u(\ell):\ell\in L\cup L'\}$ makes all available labels reachable by Forward
  Chaining from $\{v(j)\}$, it follows that $F_{\Delta_j}(\{v(j)\})=V_g\cup\{v(j)\}$. Furthermore, the change
  in clauses did not influence the Forward Chaining procedure from any other variable (other than $v(j)$), 
  and thus $F_{\Delta_j}(\{w\})=F_{\Upsilon}(\{w\})$ for all variables $w\neq v(j)$. Thus, an analogue of
  Lemma~\ref{lem:FC(v(j))} implies that $\Delta_j$ is a representation of $h$.

  We then have that $\Delta_j$ is a shorter representation for $h$, contradicting the optimality of $\Upsilon$.
  Therefore, the sets $\Gamma^j_i=\emptyset$ for all indices $i\in[d]$. As the above arguments do not depend on 
  any particular value of $j$, they can be repeated for all of them.
\end{proof}

\begin{definition}\label{dfi:3cnf-tcover}
  Let $d=1+r\lambda+s\lambda'$ and let $\Upsilon$ be a prime and 
  irredundant pure Horn 3-CNF representation of $h$. For each $j\in[t]$, consider the set 
  \[
    S_j=\left\{\ell\in L\cup L':v(j)\lra u(\ell) \in \Upsilon\right\}
  \] 
  and define the function  $f_j:X\ra L, Y\ra L'$ given by $f_j(x)=S_j\cap L_x$ for vertices $x\in X$ and 
  $f_j(y)=S_j\cap L'_y$ for vertices $y\in Y$.
\end{definition}

\begin{lemma}[Analogue of Lemma~\ref{lem:cnf-tightness-up}]\label{lem:3cnf-tightness-up}
  Let $\Upsilon$ be as in the above Definition. For all indices $j\in[t]$ and vertices $y\in Y$, 
  it holds that $|f_j(y)|\leq 1$.
\end{lemma}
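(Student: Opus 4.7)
The plan is to mirror the argument of Lemma~\ref{lem:cnf-tightness-up}, adapting it to the linked-list gadget that replaces the (b) clauses and the complete-binary-tree gadget that replaces the (d) clauses. Assume toward contradiction that $|f_j(y)| > 1$ for some $j \in [t]$ and $y \in Y$, pick $\ell' \in f_j(y)$, and set $\Upsilon' := \Upsilon \setminus \{v(j) \lra u(\ell')\}$. By Lemma~\ref{lem:3cnf-no-shortcuts}, the removed clause is the only prime implicate of $\Upsilon$ pairing $v(j)$ with $u(\ell')$, so $\Upsilon'$ retains every clause of types (a), (b$_1$)--(b$_4$), (c), (d$_1$), (d$_2$), together with every (e) clause except the one removed. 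The irredundancy of $\Upsilon$ will be contradicted by showing $\Upsilon' \equiv h$, which, by the 3-CNF analogue of Lemma~\ref{lem:FC(v(j))}, reduces to proving $F_{\Upsilon'}(\{v(j)\}) = V_g \cup \{v(j)\}$.

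The key observation is that $u(\ell')$ can be reintroduced through a (d$_2$) clause of the form $e(1,\alpha) \wedge e(1,\alpha+1) \lra u(\ell')$; thanks to the (d$_1$) tree structure, its two root variables lie in the closure as soon as every leaf $e(x', y', i)$, for $(x',y') \in E$ and $i \in [d]$, is derived. For edges $(x', y')$ with $y' \neq y$, the derivation in $\Upsilon$ uses only labels from $L'_{y'}$ and from $L_{z}$ with $z \in N(y')$, none of which equal $\ell' \in L'_y$, and thus carries over unchanged to $\Upsilon'$. For edges $(x, y)$ with $x \in N(y)$, argue by the contrapositive as in the CNF proof: if $e(x, y, i) \notin F_{\Upsilon'}(\{v(j)\})$, then for every $\ell'' \in f_j(y) \setminus \{\ell'\}$ there must exist $z(\ell'') \in N(y)$ with $e(z(\ell''), y, \ell'', i)$ missing from the closure; but for any $\ell \in f_j(z(\ell''))$ satisfying $(\ell, \ell'') \in \Pi_{(z(\ell''),y)}$, both clauses $v(j) \lra u(\ell)$ and $v(j) \lra u(\ell'')$ survive in $\Upsilon'$, so the corresponding (a) clause $u(\ell) \wedge u(\ell'') \lra e(z(\ell''), y, \ell'', i)$ fires, a contradiction.

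Once $e(z, y, \ell''_{**}, i) \in F_{\Upsilon'}(\{v(j)\})$ holds for every $z \in N(y)$ and some fixed $\ell''_{**} \in f_j(y) \setminus \{\ell'\}$, we recover $e(x, y, i)$ by triggering the appropriate gadget: clause (b$_1$) if $\dg(y) \leq 2$, or the chain (b$_2$) $\to$ (b$_3$) $\to \cdots \to$ (b$_4$) if $\dg(y) \geq 3$. The main technical point---and the only real departure from the CNF argument---is to verify that this linked list propagates without relying on $u(\ell')$ or any other absent label; this holds because each intermediate variable $e(\beta, x, y, \ell''_{**}, i)$ is produced from the previous linked-list variable together with one of the (a)-generated variables $e(z_y^{\beta+2}, y, \ell''_{**}, i)$, both of which have already been placed in the closure using only labels distinct from $\ell'$. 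Once all leaves of all trees are in the closure, the (d$_1$) clauses fire, then the (d$_2$) chain reintroduces $u(\ell')$, yielding the desired contradiction.
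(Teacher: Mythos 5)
Your proposal is correct and takes essentially the same approach as the paper's own proof: set $\Upsilon' := \Upsilon \setminus \{v(j) \lra u(\ell')\}$, reduce (via the 3-CNF analogue of Lemma~\ref{lem:FC(v(j))}) to showing $F_{\Upsilon'}(\{v(j)\}) = V_g \cup \{v(j)\}$, and derive a contradiction with the irredundancy of $\Upsilon$ from the type-(a) clauses trigged by the surviving labels in $f_j(y) \setminus \{\ell'\}$, after which the (b$_1$)--(b$_4$), (c), (d$_1$), (d$_2$) gadgets fire. You spell out the $y' \neq y$ case and the linked-list/tree propagation in more explicit detail than the paper does, but the logical skeleton is identical.
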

\begin{proof}
  Let $\Upsilon$ be as in Definition~\ref{dfi:3cnf-tcover} and suppose indirectly that the claim is false, 
  that is, there is an index $j\in[t]$ and a vertex $y\in Y$ such that $|f_j(y)| > 1$.

  During the proof, recall that the chosen value for the parameter $d$ implies, according to 
  Lemma~\ref{lem:3cnf-no-shortcuts}, that all prime implicates of $\Upsilon$ involving the variable 
  $v(j)$ must have the form $v(j)\lra u(\ell)$, with $\ell\in L\cup L'$.

  Let $\ell'\in f_j(y)$ and define the expression 
  \[
    \Upsilon':=\Upsilon\setminus\{v(j)\lra u(\ell')\}.
  \]

  It is enough to show that $F_{\Upsilon'}(\{v(j)\})=V_g\cup\{v(j)\}$, that is, that $\Upsilon'$ is also a 
  representation of $h$ (by an analogue of Lemma~\ref{lem:FC(v(j))}). Suppose that is not the case. Since 
  $\Upsilon$ and $\Upsilon'$ differ only in the clause $v(j)\lra u(\ell')$, it must be the case that 
  $u(\ell')\not\in F_{\Upsilon'}(\{v(j)\})$. This happens if the clause of type (d$_2$) associated to $u(\ell')$ 
  is not trigged. For this to occur, there must be an index $k\in[2m-1]$ and an index $i\in[d]$ such 
  that $e(k,i)\not\in F_{\Upsilon'}(\{v(j)\})$. 

  Now, for $y$ and $i$ as above, notice that: 
  (i) the variable $e(k,i)$ would be included in $F_{\Upsilon'}(\{v(j)\})$ as long as there were a label in 
  $L'_y$ such that the corresponding clause of type (b$_1$) or clauses of types (b$_2$), (b$_3$), and (b$_4$) 
  were trigged; and
  (ii) once such clauses of type (b$_1$)--(b$_4$) were trigged, the appropriated clauses of type (c) would 
  trigger, thus making the other clauses of type (b$_1$)--(b$_4$) associated to $y$ and $i$ to also trigger. 

  Therefore, for $e(k,i)=e(x,y,i)$ to not belong to $F_{\Upsilon'}(\{v(j)\})$, it must be the case that for every 
  label $\ell''\in f_j(y)\setminus\{\ell'\}$ there exists a vertex $z(\ell'')\in N(y)$ for which
  \[
    e(z(\ell''),y,\ell'',i)\not\in F_{\Upsilon'}(\{v(j)\}).
  \]
  
  For this latter relation to be true, we must have that the clauses 
  \begin{equation}\label{eq:3cnf-trigged}
    u(\ell)\wedge u(\ell'')\lra e(z(\ell''),y,\ell'',i)
  \end{equation}
  are not trigged in the Forward Chaining procedure on $\Upsilon'$ starting with $\{v(j)\}$, for every 
  label $\ell\in f_j(z(\ell''))$ with $(\ell,\ell'')\in\Pi_{(z(\ell''),y)}$. 

  However, according to Definition~\ref{dfi:3cnf-tcover}, for each label $\ell''\in f_j(y)\setminus\{\ell'\}$
  and each label $\ell\in f_j(z(\ell''))$, there are clauses $v(j)\lra u(\ell'')$ and $v(j)\lra u(\ell)$,
  respectively, in $\Upsilon$ and, consequently, in $\Upsilon'$. This implies that the clauses \eref{eq:3cnf-trigged} 
  are trigged, which implies that $u(\ell')\in F_{\Upsilon'}(\{v(j)\})$, which then implies that $\Upsilon'$ is 
  also a representation of $h$. Since this contradicts the irredundancy of $\Upsilon$, it follows that 
  $|f_j(y)|\leq 1$, thus concluding the proof.
\end{proof}

\begin{lemma}[Analogue of Lemma~\ref{lem:cnf-tightness-down}]\label{lem:3cnf-tightness-down}
  Let $\Upsilon$ be a clause minimum prime pure Horn 3-CNF of $h$. Then it is prime and irredundant, 
  so Definition~\ref{dfi:3cnf-tcover} applies. We claim that for all indices $j\in[t]$ and vertices $y\in Y$, 
  it holds that $|f_j(y)| \geq 1$.
\end{lemma}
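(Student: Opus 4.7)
The plan is to mirror the proof of Lemma~\ref{lem:cnf-tightness-down} but to navigate the more involved cascade of clause families that now stand between the variables $v(j)$ and the labels $u(\ell')$ with $\ell'\in L'_y$. I would argue by contradiction: assume there exist $j\in[t]$ and $y\in Y$ with $|f_j(y)|=0$, so that the clauses $v(j)\lra u(\ell')$ are absent from $\Upsilon$ for every $\ell'\in L'_y$. By Lemma~\ref{lem:3cnf-no-shortcuts} the only prime implicates of $\Upsilon$ involving $v(j)$ have the form $v(j)\lra u(\ell)$, so the labels in $L'_y$ cannot be reached directly from $\{v(j)\}$ and must instead be reintroduced inside $F_\Upsilon(\{v(j)\})$ via some clause of type (d$_2$).

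I would then show, by tracing the Forward Chaining procedure, that no such reintroduction is possible. Since $e(x,y,\ell',i)$ only appears as a head in clauses of type (a) (which need some $u(\ell'')$ with $\ell''\in L'_y$ in closure) or of type (c) (which need $e(x,y,i)$ already in closure), and $u(\ell'')$ is unavailable by assumption, no type (a) clause depending on $y$ is triggered. Consequently the variables $e(z,y,\ell',i)$ for $z\in N(y)$, $\ell'\in L'_y$, and $i\in[d]$ stay outside $F_\Upsilon(\{v(j)\})$, and therefore so do the variables $e(x,y,i)$, whose only heading clauses are of type (b$_1$) or (b$_4$) and require precisely those missing variables in their bodies. (In the linked-list case I would walk through the intermediate variables $e(\beta,x,y,\ell',i)$ along the chain (b$_2$)--(b$_3$)--(b$_4$) to see that none can ever appear either.)

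Now I invoke the tree structure: the missing variables $e(x,y,i)$ are precisely the leaf aliases $e(k,i)$ for the indices $k\in\{m,\ldots,2m-1\}$ corresponding to edges incident with $y$. Because the $i$-th tree specified by the clauses of type (d$_1$) requires every leaf to propagate the ``true'' value up to its root $e(1,i)$, at least one such root is absent from $F_\Upsilon(\{v(j)\})$ for each $i\in[d]$. Actually, by the symmetry and non-interference across indices $i$ already exploited in Lemma~\ref{lem:3cnf-no-shortcuts}, the same leaf is missing in \emph{every} tree, so $e(1,i)\notin F_\Upsilon(\{v(j)\})$ for all $i\in[d]$. Hence no clause of type (d$_2$) fires, no label in $L'_y$ is reintroduced, and therefore $u(\ell')\notin F_\Upsilon(\{v(j)\})$ for $\ell'\in L'_y$. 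This contradicts the analogue of Lemma~\ref{lem:FC(v(j))}, which guarantees $F_\Upsilon(\{v(j)\})=V_g\cup\{v(j)\}$, and forces $|f_j(y)|\geq 1$.

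The only step that demands care is the middle one, where I must rule out every alternative route by which some $e(x,y,i)$ or $e(z,y,\ell',i)$ might still appear in the closure: the type (c) clauses resolve a potential worry by having $e(x,y,i)$ in their body rather than their head, the (b$_2$)--(b$_4$) chain must be inspected inductively to make sure no shortcut through the auxiliary variables $e(\beta,x,y,\ell',i)$ is available, and one must check that even if the remaining clauses $v(j)\lra u(\ell)$ with $\ell\notin L'_y$ in $\Upsilon$ do fire lots of other type (a) clauses, none of them feeds into the forbidden vertex $y$. Once these local observations are assembled, the global contradiction follows exactly as in the CNF case.
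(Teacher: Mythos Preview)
Your proposal is correct and follows essentially the same approach as the paper: assume $|f_j(y)|=0$, use Lemma~\ref{lem:3cnf-no-shortcuts} to restrict the shape of the $v(j)$-clauses, then trace Forward Chaining through the (a)--(b$_1$)--(b$_4$)--(c)--(d$_1$)--(d$_2$) cascade to show that no $e(1,i)$ can be reached and hence no label in $L'_y$ is reintroduced, contradicting the analogue of Lemma~\ref{lem:FC(v(j))}. You are in fact somewhat more explicit than the paper about the potential circularity via type~(c) clauses and about the inductive walk along the linked-list variables $e(\beta,x,y,\ell',i)$, but these are elaborations of the same argument rather than a different route.
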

\begin{proof}
  Suppose that the claim is false, that is, 
  there is an index $j\in[t]$ and a vertex $y\in Y$ such that $|f_j(y)| = 0$.

  Then clauses $v(j)\lra u(\ell')$, for all labels $\ell'\in L'_y$, are absent from 
  $\Upsilon$. Recall that the chosen value for the parameter $d$ implies that all prime implicates 
  of $\Upsilon$ involving $v(j)$ are quadratic (Lemma~\ref{lem:3cnf-no-shortcuts}).

  Thus, no clause of type (a) dependent on the vertex $y$ is trigged during a Forward 
  Chaining from $\{v(j)\}$ and hence, no clauses of type (b$_1$), (b$_2$), (b$_3$), (b$_4$), and 
  (c) dependent on $y$ are trigged either. This gives that the variables $e(k,i)$, for all 
  indices $k\in\{m,m+1,\ldots,2m-1\}$ such that $e_{k-m+1}=(x,y)$ (recall the indexing of the
  edges in $E$) and $x\in N(y)$, and all indices $i\in[d]$, do not belong to the Forward Chaining 
  closure (from $\{v(j)\}$). This implies further that no variables $e(1,i)$ belong to the Forward 
  Chaining closure. Therefore, no clause of type (d$_2$) is trigged and no label $\ell'\in L'_y$ is 
  reintroduced. In other words, it is the case that $u(\ell')\not\in F_{\Upsilon}(\{v(j)\})$ and hence, that 
  $F_{\Upsilon}(\{v(j)\}) \neq F_h(\{v(j)\})$. By an analogue of Lemma~\ref{lem:FC(v(j))}, $\Upsilon$ 
  does not represent $h$, a contradiction. So, it must be the case that $|f_j(y)|\geq 1$.
\end{proof}

Once again, combining the two lemmas above, gives the following tight result.
\begin{corollary}\label{cor:3cnf-tightness}
  Let $\Upsilon$ be a clause minimum prime pure Horn 3-CNF of $h$. Then, for all indices $j\in[t]$ 
  and vertices $y\in Y$, it holds that $|f_j(y)| = 1$.
  \qed
\end{corollary}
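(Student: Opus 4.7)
The plan is to derive this corollary by directly sandwiching the two preceding lemmas, exactly as was done for Corollary~\ref{cor:cnf-tightness} in the pure Horn CNF setting. First, I would verify that a clause minimum prime pure Horn 3-CNF representation $\Upsilon$ of $h$ satisfies the hypotheses of both Lemma~\ref{lem:3cnf-tightness-up} and Lemma~\ref{lem:3cnf-tightness-down}: it is prime by assumption, and it is irredundant because dropping any clause from a clause minimum representation would give a strictly shorter representation of $h$, contradicting minimality. Hence Definition~\ref{dfi:3cnf-tcover} applies and each labeling $f_j$, for $j\in[t]$, is well defined.

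Next, I would invoke Lemma~\ref{lem:3cnf-tightness-up}, whose hypothesis is exactly that $\Upsilon$ is prime and irredundant, to obtain the upper bound $|f_j(y)|\leq 1$ for every $y\in Y$ and every $j\in[t]$. Then I would invoke Lemma~\ref{lem:3cnf-tightness-down}, which uses clause minimality in an essential way through Lemma~\ref{lem:3cnf-no-shortcuts}, to obtain the matching lower bound $|f_j(y)|\geq 1$. Combining the two inequalities yields the desired equality $|f_j(y)|=1$ for all $j\in[t]$ and $y\in Y$.

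There is essentially no obstacle here: the corollary is a formal consequence of the two preceding lemmas, and the only mild subtlety is making sure that the weaker hypothesis of Lemma~\ref{lem:3cnf-tightness-up} (primality and irredundancy) is indeed implied by the stronger one of Lemma~\ref{lem:3cnf-tightness-down} (clause minimality of a prime representation), which we have already noted above.
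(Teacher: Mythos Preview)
Your proposal is correct and matches the paper's approach exactly: the corollary is stated with a \qed{} and the surrounding text simply says that combining Lemmas~\ref{lem:3cnf-tightness-up} and~\ref{lem:3cnf-tightness-down} gives the result. Your additional remark that clause minimality implies irredundancy (so that the hypothesis of Lemma~\ref{lem:3cnf-tightness-up} is satisfied) is a welcome clarification that the paper leaves implicit.
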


Analogues of Lemmas~\ref{lem:cnf-tcover}, \ref{lem:cnf-represents}, and \ref{lem:cnf-optimality}
can also be obtained in the same semi-verbatim fashion, and Remark~\ref{rem:diff-tcovers} remains 
valid in this context. Hence, we still obtain the following.

\begin{corollary}\label{cor:3cnf-upsilon-ub}
  Let $\Psi$ be as in Definitions~\ref{dfi:3fcns} and $\Upsilon$ be a clause minimum prime pure Horn 
  3-CNF of $h$. It holds that 
  \[
    |\Psi|_c/(\lambda+\lambda')\leq |\Upsilon|_c - t(\kappa(f)r + s)\leq~|\Psi|_c,
  \] 
  where $\kappa(f)r + s$ is the total number of labels in a tight optimal total-cover $f$ for $\cL$.
  \qed
\end{corollary}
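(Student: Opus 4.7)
The plan is to decompose any clause minimum prime pure Horn 3-CNF representation $\Upsilon$ of $h$ into a piece handling the exclusive component $g$ and a piece consisting solely of clauses of type (e), and then to count each piece. By Lemma~\ref{lem:3cnf-g-excl-comp}, $g$ is an exclusive component of $h$, so Lemma~\ref{lem:subfunc-switch} lets us write $\Upsilon = \Theta \wedge \Gamma$, where $\Theta$ is a clause minimum prime pure Horn 3-CNF representation of $g$ and $\Gamma$ contains the remaining clauses of $\Upsilon$, namely those involving at least one variable $v(j)$.

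First I would bound $|\Theta|_c$. This is the 3-CNF analogue of Lemma~\ref{lem:cnf-psi-lb}: the upper bound $|\Theta|_c\leq|\Psi|_c$ is by the construction of $\Psi$ (which represents $g$), and the lower bound $|\Psi|_c/(\lambda+\lambda')\leq|\Theta|_c$ comes from observing that each variable in $V_g$ can serve as the head of at most $\lambda+\lambda'$ distinct prime implicates that occur in $\Psi$, while each variable occurring as a head in $\Psi$ must appear as a head at least once in $\Theta$ (otherwise $\Theta$ would not represent $g$). This estimate transfers verbatim from the CNF case because the families added in the 3-CNF construction are local gadgets that preserve the head-variable structure.

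Next I would count $|\Gamma|_c$ exactly. By Lemma~\ref{lem:3cnf-no-shortcuts}, every prime implicate in $\Upsilon$ involving some $v(j)$ has the shape $v(j)\lra u(\ell)$ for some $\ell\in L\cup L'$, so $\Gamma$ consists exclusively of clauses of type (e). For each $j\in[t]$, Definition~\ref{dfi:3cnf-tcover} associates the labeling $f_j$, and the 3-CNF analogue of Lemma~\ref{lem:cnf-tcover} (together with Corollary~\ref{cor:3cnf-tightness}) shows that $f_j$ is a tight total-cover of $\cL$. An analogue of Lemma~\ref{lem:cnf-optimality} then yields $\kappa(f_j)=\kappa(f)$ for every $j$, where $f$ is a tight, minimum cost total-cover for $\cL$; otherwise one could substitute a cheaper refined canonical family of type (e) clauses (as in Lemma~\ref{lem:cnf-represents}) to produce a strictly shorter representation of $h$, contradicting the minimality of $\Upsilon$. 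Consequently, the number of clauses of type (e) indexed by $j$ equals $|f_j(X)|+|f_j(Y)|=\kappa(f)r+s$ (using tightness for the $Y$-term), and summing over $j\in[t]$ gives $|\Gamma|_c = t(\kappa(f)r+s)$.

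Combining the two estimates, $|\Upsilon|_c - t(\kappa(f)r+s) = |\Theta|_c$, and plugging in the bounds on $|\Theta|_c$ gives the claimed sandwich $|\Psi|_c/(\lambda+\lambda')\leq|\Upsilon|_c-t(\kappa(f)r+s)\leq|\Psi|_c$. The only step that requires genuine care, rather than a direct appeal to an earlier lemma, is verifying that the optimality of $\Upsilon$ really forces $\kappa(f_j)$ to agree with $\kappa(f)$ for every $j$; this is the point where the refined canonical formula of Lemma~\ref{lem:cnf-represents} (adapted to the 3-CNF setting through the tree gadget) must be shown to still represent $h$ so that the exchange argument goes through. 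Once that is in place, the corollary follows immediately.
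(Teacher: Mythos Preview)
Your proposal is correct and follows essentially the same approach as the paper: the corollary is presented there without an explicit proof (just a \qed), being understood as the summary of the chain of 3-CNF analogues of Lemmas~\ref{lem:cnf-psi-lb}, \ref{lem:cnf-tcover}, \ref{lem:cnf-represents}, and \ref{lem:cnf-optimality} together with Lemma~\ref{lem:3cnf-no-shortcuts}. Your decomposition $\Upsilon=\Theta\wedge\Gamma$, the bounds on $|\Theta|_c$ via the analogue of Lemma~\ref{lem:cnf-psi-lb}, and the exact count $|\Gamma|_c=t(\kappa(f)r+s)$ via the optimality exchange argument are precisely what the paper intends when it says those analogues ``can also be obtained in the same semi-verbatim fashion.''
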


We are also able to claim the following result.

\begin{theorem}\label{thm:3cnf-hardness}
  Let $c$ be a fixed constant close to $1/2$. Unless $\PT=\NP$, the minimum number of clauses of a pure Horn 
  function on $n$ variables cannot be approximated in polynomial time (dependent on $n$) to within a factor of 
  \[ 
    \rho_c(n^{\varepsilon})\geq 2^{\varepsilon(\log n)^{1-1/\delta_c(n)}}=2^{\log^{1-o(1)} n},
  \]
  where $\delta_c(n)=(\log\log n)^{c}$, even when the input is restricted to 3-CNFs with $O(n^{1+2\varepsilon})$ 
  clauses, for some $\varepsilon\in(0,1/6]$.
\end{theorem}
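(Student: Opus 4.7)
The plan is to mimic the proof of Theorem~\ref{thm:cnf-hardness}, substituting the canonical $3$-CNF construction of Section~\ref{sec:3-cnf} and the analogues of the correctness lemmas already established. Starting from a Dinur--Safra \LC promise instance $\cL^{}_0$ with covering promise $\rho_c(s)$ (Theorem~\ref{thm:LC-inapprox}), I would refine it to $\cL$ via Definition~\ref{dfi:ELC-inst}, set $d=1+r\lambda+s\lambda'$, and build the canonical pure Horn $3$-CNF $\Phi$ of Definition~\ref{dfi:3fcns}. Let $h$ be the pure Horn function $\Phi$ represents, and let $\Upsilon$ be any clause minimum prime pure Horn $3$-CNF representation of $h$ returned by a hypothetical approximation algorithm. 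The analogues of Lemmas~\ref{lem:cnf-tcover} and~\ref{lem:cnf-optimality}, consolidated into Corollary~\ref{cor:3cnf-upsilon-ub}, then furnish
\[
  |\Psi|_c/(\lambda+\lambda')\leq|\Upsilon|_c-t(\kappa(f)r+s)\leq|\Psi|_c,
\]
where $f$ is any tight, optimal total-cover for $\cL$.

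Next I would substitute the size bounds of Lemma~\ref{lem:3cnf-reduction-size} together with the Dinur--Safra parameters of Remark~\ref{rem:DS-inst-size} into the above sandwich, writing $\delta=\delta_c(s)$ and $\rho=\rho_c(s)$ throughout. The correction term $|\Psi|_c$ is bounded by $d(\pi+m^2\lambda'+4m)$, which grows as a fixed polynomial in $s$ times the sub-polynomial factor $\delta\rho^{O(\delta)}=s^{o(1)}$, and $|\Psi|_v\leq dm(\lambda'+2)+m^2\lambda'$ is polynomially bounded in $s$ as well. Choosing the amplification parameter $t=s^{1/\varepsilon}$ with $\varepsilon\in(0,1/6]$ small enough guarantees that the dominant piece of $|\Upsilon|_c$ is the total-cover term $t(\kappa(f)r+s)$, and that $|\Upsilon|_v=t(1+o(1))$, so $n:=|\Upsilon|_v$ and $t$ are interchangeable up to lower-order terms. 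The two branches of the \LC promise then translate into the dichotomy
\[
  \kappa(\cL)=1\ \Longrightarrow\ |\Upsilon|_c\leq n^{1+\varepsilon}\delta_c(n^\varepsilon)(1+o(1)),
  \qquad
  \kappa(\cL)\geq\rho_c(s)\ \Longrightarrow\ |\Upsilon|_c\geq n^{1+\varepsilon}\delta_c(n^\varepsilon)\rho_c(n^\varepsilon)(1+o(1)),
\]
and the same elementary manipulation of exponents carried out at the end of Theorem~\ref{thm:cnf-hardness}'s proof yields the claimed gap $\rho_c(n^\varepsilon)\geq 2^{\varepsilon(\log n)^{1-1/\delta_c(n)}}=2^{(\log n)^{1-o(1)}}$, while the clause count is bounded by $n^{1+2\varepsilon}$.

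The main obstacle is bookkeeping rather than any new conceptual difficulty. Compared to the plain CNF case, the size bounds of Lemma~\ref{lem:3cnf-reduction-size} pick up an extra $dm^2\lambda'$ term coming from the linked-list cubification of the type~(b) clauses and from the complete-binary-tree cubification of the type~(d) clauses; the latter also introduces new internal tree variables that contribute to $|\Psi|_v$. To keep $t(\kappa(f)r+s)$ in dominant position over $|\Psi|_c$ and to keep $t$ in dominant position over $|\Psi|_v$, the exponent $1/\varepsilon$ must therefore be taken somewhat larger than in Theorem~\ref{thm:cnf-hardness}, which is exactly what tightens the admissible range from $\varepsilon\in(0,1/4]$ to $\varepsilon\in(0,1/6]$. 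Once $t$ is chosen at this level, every remaining step in the asymptotic calculation is a verbatim adaptation of the polynomial-time CNF argument.
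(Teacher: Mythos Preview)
Your proposal is correct and follows essentially the same approach as the paper: the paper's own proof is a one-liner stating that one repeats the argument of Theorem~\ref{thm:cnf-hardness} with the size estimates of Lemma~\ref{lem:3cnf-reduction-size} in place of those of Lemma~\ref{lem:cnf-reduction-size}, and your write-up expands exactly this, including the bookkeeping explanation for why the admissible range of $\varepsilon$ shrinks to $(0,1/6]$. One small wording slip: $\Upsilon$ should be a clause \emph{minimum} representation (as you in fact use it), not the output of an approximation algorithm; the approximation algorithm enters only when you argue that beating the gap $\rho_c(n^\varepsilon)$ would distinguish the two promise cases.
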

\begin{proof}
  The proof follows closely the one given for Theorem~\ref{thm:cnf-hardness}, just using the estimates provided by
  Lemma~\ref{lem:3cnf-reduction-size} instead of the ones in Lemma~\ref{lem:cnf-reduction-size}.
\end{proof}

\subsection{Number of Literals}
With the exception of the variables $v(j)$, with $j\in[t]$, that only appear as subgoals in quadratic prime 
implicates, every other variable appears in subgoals and heads of mostly cubic pure Horn clauses. The functions 
we are dealing with are pure Horn and therefore, have no unit clauses. So, it is the case that
$2|\Phi|_c\leq |\Phi|_l\leq 3|\Phi|_c$ or in other words, that $|\Phi|_l=\Theta(|\Phi|_c)$, where $\Phi$ is a 
pure Horn 3-CNF formula obtained from our 3-CNF construction above. We then have the following result.

\begin{corollary}\label{cor:lit-hardness}
  Unless $\PT=\NP$, the minimum number of literals of a pure Horn function on $n$ variables cannot be
  approximated in polynomial time (in $n$) to within a factor of $2^{\log^{1-o(1)} n}$, even
  when the input is restricted to 3-CNFs with $O(n^{1+2\varepsilon})$ clauses, for some $\varepsilon\in(0,1/6]$.
  \qed
\end{corollary}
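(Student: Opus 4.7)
The plan is to derive this as a direct corollary of Theorem~\ref{thm:3cnf-hardness} by exploiting the constant-factor relationship between clause count and literal count in any pure Horn 3-CNF. As noted in the paragraph preceding the statement, every pure Horn clause has at least two literals (a nonempty body together with the head), and every 3-CNF clause has at most three; hence any pure Horn 3-CNF $\Phi$ representing a function $h$ satisfies $2|\Phi|_c\leq|\Phi|_l\leq 3|\Phi|_c$.

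First I would use this sandwich to relate the optimum literal count $\lambda_3(h)$ and the optimum clause count $\tau_3(h)$ over pure Horn 3-CNF representations of $h$. A literal minimum representation $\Upsilon$ satisfies $|\Upsilon|_l\geq 2|\Upsilon|_c\geq 2\tau_3(h)$, so $\lambda_3(h)\geq 2\tau_3(h)$; conversely, evaluating the literal count of any clause minimum representation gives $\lambda_3(h)\leq 3\tau_3(h)$. Hence $2\tau_3(h)\leq\lambda_3(h)\leq 3\tau_3(h)$.

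Next I would argue by contradiction. Suppose there were a polynomial time algorithm approximating $\lambda_3(h)$ within some factor $\alpha(n)$, returning on input $\Phi$ a value $L$ with $\lambda_3(h)\leq L\leq\alpha(n)\,\lambda_3(h)$. Combining with the sandwich above yields $L/(3\alpha(n))\leq\tau_3(h)\leq L/2$, so outputting $L/2$ realizes a polynomial time $(3/2)\alpha(n)$-approximation for $\tau_3(h)$. Since the form $2^{\log^{1-o(1)}n}$ is invariant under multiplication by any absolute constant (the constant factor is absorbed into the $o(1)$ in the exponent), the choice $\alpha(n)=2^{\log^{1-o(1)}n}$ would contradict Theorem~\ref{thm:3cnf-hardness} unless $\PT=\NP$. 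The input-size restriction carries over directly: by Theorem~\ref{thm:3cnf-hardness} the number of clauses of the produced 3-CNF is $O(n^{1+2\varepsilon})$, which bounds the number of literals by $O(n^{1+2\varepsilon})$ as well, with the same $\varepsilon\in(0,1/6]$.

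There is no genuine obstacle here; the entire content of the corollary is the observation that constant factors are invisible at the resolution of the gap established in Theorem~\ref{thm:3cnf-hardness}. No new reduction, structural lemma, or combinatorial argument is required beyond the trivial sandwich, so the statement follows immediately from the clause-minimization hardness already proved.
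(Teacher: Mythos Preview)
Your proposal is correct and matches the paper's approach exactly: the paper gives no proof beyond the \qed, relying on the preceding observation that $2|\Phi|_c\leq|\Phi|_l\leq 3|\Phi|_c$ for any pure Horn 3-CNF, which is precisely the sandwich you use. Your contradiction argument merely spells out what the paper leaves implicit, namely that the constant $3/2$ is absorbed into the $o(1)$ of the exponent in $2^{\log^{1-o(1)}n}$.
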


\section{Sub-exponential Time Hardness Results}\label{sec:subexp}
The hardness of approximation results of Sections~\ref{sec:reduction} and \ref{sec:3-cnf} apply to the
scenario where the amount of available computational power is polynomial in $n$, the number of variables 
of a pure Horn function. In this section, we extend those results by showing that even when the computational 
power available is sub-exponential in $n$, it still not likely to be able to obtain a constant factor approximation
for such problems. The main ingredients of this section are: a stronger complexity theoretic hypothesis,
a new \LC result, and our pure Horn 3-CNF construction.

Recall that $k$-SAT is the problem of determining if a $k$-CNF formula, that is, one in which each and every
clause has at most $k$ literals, has a satisfying assignment of Boolean values to it variables. The following 
conjecture, called \emph{Exponential Time Hypothesis (ETH)}, concerns the time solvability of the $k$-SAT 
problem and was introduced by Impagliazzo and Paturi~\cite{IP01}.
\begin{conjecture}[Impagliazzo and Paturi~\cite{IP01}]\label{con:ETH}
  For $k\geq 3$, define $s_k$ to be the infimum of the set
  \begin{equation*}
    \Big\{\delta : \text{there exists an $O\Big(2^{\delta n}\Big)$ time algorithm
      for solving the $k$-SAT problem}\Big\},
  \end{equation*}
  with $n$ being the number of variables of the $k$-SAT instance.
  The \emph{Exponential Time Hypothesis (ETH)} states that $s_k>0$ for $k\geq 3$.
\end{conjecture}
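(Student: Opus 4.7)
The statement at hand is not a theorem awaiting proof but the \emph{Exponential Time Hypothesis} itself, posed as a conjecture by Impagliazzo and Paturi and used here as an unproven working assumption for the sub-exponential time hardness results to follow. There is therefore no plan I can offer to \emph{prove} this conjecture: establishing $s_k>0$ for $k\geq 3$ would immediately rule out a sub-exponential deterministic algorithm for 3-SAT and in particular settle $\PT\neq\NP$, which is far out of reach of any known technique. So the honest forward-looking statement is that no approach is currently known, and any plausible line of attack would have to confront the $\PT$ vs.\ $\NP$ barrier head-on.

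What one \emph{can} sketch is a program for making the conjecture internally coherent and locating where the real obstruction lives, which I expect is what the authors rely on in practice. First, observe that $s_k$ is well-defined and monotone non-decreasing in $k$, since a $k$-SAT instance is a special case of a $(k{+}1)$-SAT instance; this gives $s_3\leq s_4\leq\cdots$ so that ETH is equivalent to $s_3>0$. Second, the Sparsification Lemma of Impagliazzo, Paturi, and Zane reduces general $k$-SAT to $k$-SAT with a linear number of clauses in the number of variables, allowing one to interchange input length $n$ and variable count freely in asymptotic statements; this is the step that makes ETH a robust hypothesis about $k$-SAT rather than an artifact of input representation. Third, one checks that standard gap-preserving reductions from SAT to the various \LC instances used in Sections~\ref{sec:reduction} and~\ref{sec:3-cnf} blow up sizes by controlled amounts, so that a sub-exponential algorithm downstream would pull back to a sub-exponential algorithm for 3-SAT and thus refute ETH.

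The genuine obstacle, and the reason ETH is stated as an axiom rather than proved, is that no unconditional super-polynomial lower bound is known for any problem in $\NP$ under a general model of computation. A direct proof of $s_3>0$ would have to rule out, for every fixed $\delta>0$, every algorithm (including ones using sophisticated PPZ/PPSZ-style random restrictions, algebraic techniques, or branching heuristics) running in time $2^{\delta n}$, and we currently have neither circuit lower bounds nor diagonalisation techniques capable of such a separation. Consequently, my ``plan'' reduces to accepting the conjecture on the basis of the cumulative failure of decades of algorithmic effort to achieve $O(2^{o(n)})$ time for 3-SAT, and to use it as a hypothesis in the sequel, exactly as the paper does.
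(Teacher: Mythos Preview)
Your assessment is correct: the statement is a conjecture, not a theorem, and the paper offers no proof for it either---it simply introduces ETH as a working hypothesis attributed to Impagliazzo and Paturi and proceeds to derive conditional hardness results from it. There is nothing to compare, as both you and the paper treat it as an unproven assumption.
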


In other words, if true, the ETH implies that there is no sub-exponential time algorithm for $k$-SAT with 
$k\geq 3$, what in turn implies that $\PT\neq\NP$. The converse of this last implication however, does not
hold and this establish ETH as a stronger hypothesis. It has many implications beyond search and optimization
problems, e.g. in communication, proof, and structural complexity, and it is widely believed to be true.

In Section~\ref{sec:LC}, we adressed the minimization flavor of the \LC problem and briefly mentioned the 
existence of a maximization counterpart. Since the \LC results we shall use in this section were originally
obtained in the maximization setting, we introduce it below together with a ``weak duality'' type of result 
that binds both flavors.

\begin{definition}\label{dfi:packing-labeling}
  Let $\cL^{}_0=(G,L_0^{},L_0',\Pi^{}_0)$ be a \LC instance and $f^{}_0$ be a labeling for it, as in 
  Definitions~\ref{dfi:LC-inst} and \ref{dfi:LC-total-cover}, respectively. Let us call $f^{}_0$ \emph{packing} 
  if it assigns exactly one label per vertex of $G$. A packing labeling is \emph{optimal} if it maximizes 
  the fraction of covered edges of $G$. We denote this maximum fraction by $\mu(\cL^{}_0)$.
\end{definition}

Notice that differently from a total-cover, a packing labeling does not necessarily covers all the 
edges of the graph $G$ and thus, $0< \mu(\cL^{}_0)\leq 1$. The strict lower bound is due to non-empty
relations of admissible pair of labels in $\Pi^{}_0$ --- in the maximization setting, these relations
are also called \emph{projections} as they can be interpreted (or redefined) as mappings
$L^{}_0\longmapsto L'_0$.

In this section, we shall assume that the connected bipartite graph $G=(X,Y,E)$ is regular, that is, 
each vertex $x\in X$ has degree $d_X\geq 1$ and each vertex $y\in Y$ has degree $d_Y\geq 1$. This 
assumption is without loss of generality since \LC instances can be regularized without significantly
altering their sizes and promises (cf. Dinur and Harsha~\cite{DH13}). Moreover, the instances
occurring in the hardness theorem we shall use in this section are regular.

\begin{problem}\label{pro:label-cover}
  Let $0<\xi\leq 1$ be any fixed constant. A \LC instance $\cL^{}_0$ has \emph{packing-promise} $\xi$ if either
  $\mu(\cL^{}_0)=1$ or $\mu(\cL^{}_0)\leq\xi$. That is, either all edges of $\cL^{}_0$ can be covered or at
  most a $\xi$ fraction of them can. The $\LCM_\xi$ problem is a promise problem which receives a \LC 
  instance with packing-promise $\xi$ as input and correctly classify it in one of those two cases.
\end{problem}

The behavior of $\LCM_\xi$ is left unspecified for non-promise instances and any answer 
is acceptable in that case. As before, the definitions above can be easily extended to refined \LC instances.

The following result appears in Arora and Lund~\cite{AL96} and provides a link between the two flavors of
the \LC problem. In a nutshell, it implies that gap producing reductions from \NP-complete problems like
3-SAT to the maximization flavor can be viewed as reductions to the minimization flavor as well. Hence,
hardness of approximation results can be transfered from one flavor to the other. We decided to include its 
proof below for clarity reasons. 

\begin{lemma}[Arora and Lund~\cite{AL96}]\label{lem:LC-weakduality}
  ``Weak Duality:'' for any (refined) \LC instance $\cL$, we have that
  \[
    \mu(\cL) \geq \frac{1}{\kappa(\cL)}.
  \]
  That is, the reciprocal of the value of an optimal packing labeling lower bounds the cost of an optimal 
  (tight) total-cover.
  Furthermore, if $\cL$ has packing-promise $\xi$, then it has covering-promise $\rho\geq 1/\xi$.
\end{lemma}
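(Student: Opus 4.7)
The plan is to derive $\mu(\cL) \geq 1/\kappa(\cL)$ by a randomized rounding of a tight optimal total-cover, and then to deduce the ``furthermore'' statement from this inequality by a two-case analysis based on the packing-promise.

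First I would invoke Lemma~\ref{lem:tight-tcover} (together with Lemma~\ref{lem:LC-ELC} to pass to the refined setting) to fix a tight optimal total-cover $f$ of cost $K := \kappa(\cL)$. Hence $\sum_{x \in X} |f(x)| = K|X|$ while $|f(y)| = 1$ for every $y \in Y$. I would then form a random packing labeling $g$ by independently choosing $g(x)$ uniformly at random from $f(x)$ for each $x \in X$, and setting $g(y)$ equal to the unique label in $f(y)$ for each $y \in Y$. For any edge $(x,y)\in E$, the total-cover condition applied to the single label $g(y)$ guarantees the existence of at least one label $\ell \in f(x)$ with $(\ell, g(y)) \in \Pi_{(x,y)}$; therefore the probability that $g$ covers $(x,y)$ is at least $1/|f(x)|$.

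Exploiting the assumed $d_X$-regularity of $G$ on the $X$-side (so $|E| = d_X|X|$), the expected fraction of edges that $g$ covers equals
\[
\frac{1}{|E|}\sum_{(x,y) \in E}\frac{1}{|f(x)|} \;=\; \frac{1}{|X|}\sum_{x \in X}\frac{1}{|f(x)|},
\]
which by the AM--HM inequality is at least $1/\bigl(\tfrac{1}{|X|}\sum_{x \in X}|f(x)|\bigr) = 1/K$. Hence some realization of $g$ is a packing labeling covering at least a $1/K$ fraction of the edges, giving $\mu(\cL) \geq 1/\kappa(\cL)$ as claimed.

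For the ``furthermore'' part, suppose $\cL$ has packing-promise $\xi$. If $\mu(\cL) = 1$, then a packing labeling that covers every edge is itself a tight total-cover of cost $1$, whence $\kappa(\cL) = 1$; otherwise $\mu(\cL) \leq \xi$, and the inequality just proved forces $\kappa(\cL) \geq 1/\mu(\cL) \geq 1/\xi$, so $\cL$ has covering-promise $\rho = 1/\xi$. The main (conceptual) obstacle is recognizing that independent uniform sampling on the $X$-side, combined with the tightness of $f$ on the $Y$-side, suffices to transfer coverage from $f$ to $g$ on a per-edge basis; once regularity is invoked to globalize this, the proof reduces to a single application of AM--HM.
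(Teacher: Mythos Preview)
Your proof is correct and follows essentially the same approach as the paper: both take a tight optimal total-cover, round it to a packing labeling by independently sampling one label per vertex in $X$, lower-bound the per-edge covering probability by $1/|f(x)|$, and then combine $d_X$-regularity with the AM--HM inequality (which the paper phrases as ``$\sum_x 1/|f(x)|$ is minimized when the $|f(x)|$ are all equal'') to obtain the $1/\kappa(\cL)$ bound. Your treatment of the ``furthermore'' clause is actually more explicit than the paper's, which simply remarks that the promise relation follows in the same way.
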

\begin{proof}
  Let $f$ be an optimal total-cover for $\cL$ of cost $\kappa(\cL)$. That is, $f$ covers all the edges of 
  the graph $G=(X,Y,E)$ assigning $|f(z)|$ labels to each vertex of $z\in X\cup Y$. For simplicity reasons
  and without loss of generality, suppose that $f$ is tight. Recall that by definition, $\kappa(\cL)$ is 
  the average number of labels assigned by $f$ to the vertices in $X$, namely,
  \begin{equation}\label{eq:numlabels}
    \sum_{x\in X} |f(x)| = \kappa(\cL)\cdot|X|.
  \end{equation}

  Consider the following randomized procedure: for each vertex $x\in X$, pick a label at random in 
  $f(x)$ and delete the remaining ones. Let $f'$ be the resulting labeling. As $|f'(z)|=1$ for all 
  vertices $z\in X\cup Y$, $f'$ is a packing labeling for $\cL$ and the expected fraction of edges 
  covered in $f'$ is a lower bound for $\mu(\cL)$.

  Let $\ell\in f(x)$ be a label used in $f$ to cover an edge $(x,y)$. The probability that $\ell\in f'(x)$,
  namely, that it survived the deletion process and ended up in $f'$ is $1/|f(x)|$. The expected number 
  of edges of $G$ still covered in $f'$ is then at least
  \begin{equation}\label{eq:numedgescovered}
    \sum_{(x,y)\in E} \frac{1}{|f(x)|} 
      = \sum_{x\in X} \frac{d_X}{|f(x)|} 
      \geq d_X \frac{|X|^2}{\displaystyle \sum_{x\in X} |f(x)|}
      = d_X\frac{|X|^2}{\kappa(\cL)\cdot|X|}
      = \frac{|E|}{\kappa(\cL)},
  \end{equation}
  where in the first and last equalities, we used the assumption that $G$ is regular and thus, has 
  $|E|=d_X\cdot|X|$ edges; in the inequality, we used the fact that $\sum_x 1/|f(x)|$ is minimized
  when the values $|f(x)|$ are all equal; and in the second to last equality we used 
  Equation~\eref{eq:numlabels}. 

  The above randomized procedure thus gives a packing labeling whose expected fraction of edges covered 
  is at least $1/\kappa(\cL)$. So, there must exist a packing labeling attaining at least this fraction 
  of edges covered and therefore, $\mu(\cL) \geq 1/\kappa(\cL)$ as claimed. The relation on the promise
  bounds follows in a similar way from Equation~\eref{eq:numedgescovered}.
\end{proof}

Moshkovitz and Raz~\cite{MR-10-JACM}, in a celebrated breakthrough, introduced a new two-query projection
test Probabilistically Checkable Proof system with sub-constant error and quasi-linear size. Essentially,
they started with a 3-SAT CNF instance $\phi$ of size (number of clauses) equal to $\sigma$ and showed that
it is \NP-hard to solve $\LCM_{1/\rho}$, for some $\rho=\rho(\sigma)$. Moreover, their reduction produces a 
graph of size at most $\sigma^{1+o(1)}$ and uses fixed label sets $L$ and $L'$ whose 
sizes depend on the value of the promise $\rho$. A simpler proof was later found by Dinur and Harsha~\cite{DH13} 
who started their reduction from a related but slightly different problem (satisfiability of circuits 
instead of formulas), brought different techniques to the mix, and formally stated the result as follows.

\begin{theorem}[Moshkovitz and Raz~\cite{MR-10-JACM}, Dinur and Harsha~\cite{DH13}]\label{thm:LC-Dinur-Harsha}
  There exist constants $c>0$ and $0<\beta<1$ such that for every function 
  $1<\rho(\sigma)\leq 2^{O(\log^\beta\sigma)}$, the following statement holds:

  There exists label sets $L$ and $L'$ of sizes $\exp({\rho(\sigma)^c})$ and $O(\rho(\sigma)^c)$, 
  respectively, such that it is \NP-hard to solve $\LCM_{1/\rho(\sigma)}$ over these label sets. 
  Furthermore, the size of the constraint graph of the \LC instance produced by this reduction 
  is at most $\sigma\cdot 2^{O(\log^\beta \sigma)}\cdot (\rho(\sigma))^c =\sigma^{1+o(1)}$.
\end{theorem}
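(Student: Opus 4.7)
The plan is to build a two-query projection Probabilistically Checkable Proof (PCP) system for 3-SAT (or, following Dinur--Harsha, for Circuit-SAT) having the following simultaneous properties: perfect completeness, soundness error at most $1/\rho(\sigma)$, projection structure (the answer on the $Y$-side determines at most one admissible answer on the $X$-side), alphabet sizes of $\exp(\rho(\sigma)^c)$ and $O(\rho(\sigma)^c)$ respectively, and a query graph of total size $\sigma^{1+o(1)}$. Given such a PCP, the \LC reduction is immediate: the vertex sets $X$ and $Y$ are indexed by the two query positions of the verifier, the label sets are the respective alphabets, edges correspond to the verifier's coin tosses, and each $\Pi_e$ is the projection relation that encodes acceptance. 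Perfect completeness translates to $\mu(\cL)=1$ and soundness to $\mu(\cL)\le 1/\rho(\sigma)$, which is exactly the $\LCM_{1/\rho(\sigma)}$ promise.

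First I would construct an algebraic \emph{outer} PCP based on low-degree polynomials over a suitable finite field, encoding a satisfying assignment as the table of values of a low-degree extension and employing a manifold-versus-point (or plane-versus-point) test as the verifier. This produces a two-query PCP with small (but not yet sub-constant) error and with the projection property built in. To push the soundness down to $1/\rho(\sigma)$, the naive device is parallel repetition, but ordinary parallel repetition would blow the proof length up polynomially and destroy the quasi-linear size bound; instead I would use the \emph{derandomized} aggregation of queries introduced by Moshkovitz--Raz, which replaces independent repetitions with pseudorandomly correlated tuples whose joint distribution still admits a parallel-repetition-style soundness analysis.

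Next I would compose this outer verifier with an \emph{inner} PCP in order to shrink the alphabets to the sizes claimed while preserving the projection form. The inner verifier must be designed so that its answer on one side is a deterministic function of its answer on the other (otherwise composition destroys projection); a long-code or Hadamard-based construction, combined with the recent near-linear-size PCP technology of Ben-Sasson--Sudan/Dinur that Dinur--Harsha leverages, is the natural choice. A careful accounting of randomness in both layers keeps the total size at $\sigma\cdot 2^{O(\log^\beta\sigma)}\cdot\rho(\sigma)^c$.

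The main obstacle is threefold. The first and deepest is achieving quasi-linear size simultaneously with sub-constant error; this is precisely the main contribution of Moshkovitz--Raz and requires the derandomized repetition step above rather than off-the-shelf parallel repetition. The second is preserving the projection property through composition, which forces a specific architecture on the inner PCP. The third is ensuring that the alphabet sizes come out as $\exp(\rho^c)$ and $O(\rho^c)$ rather than as comparable quantities; this is a consequence of the asymmetry in how the inner PCP encodes the two sides. Once all three are in place, NP-hardness is immediate: any algorithm solving $\LCM_{1/\rho(\sigma)}$ on the resulting instances would, composed with the reduction, decide 3-SAT.
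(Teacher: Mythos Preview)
The paper does not prove this theorem at all: it is quoted as an external result of Moshkovitz--Raz~\cite{MR-10-JACM} and Dinur--Harsha~\cite{DH13} and used as a black box in Section~\ref{sec:subexp}. There is therefore no ``paper's own proof'' to compare your proposal against; the authors simply import the statement and immediately combine it with Lemma~\ref{lem:LC-weakduality} and their 3-CNF construction to derive Theorems~\ref{thm:cnf-hardness-subexp} and~\ref{thm:cnf-hardness-hyperpoly}.

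That said, your sketch is a reasonable high-level account of what the cited references actually do: an algebraic outer PCP with a manifold/plane-vs-point test, derandomized query aggregation in place of naive parallel repetition to keep the size quasi-linear, and projection-preserving composition with an inner verifier to control the alphabet. If you were asked to reproduce the proof of the cited theorem, this outline identifies the right ingredients and the right obstacles. But for the purposes of \emph{this} paper, the correct ``proof'' is simply a citation.
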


Notice that differently to what happens in the reduction of Theorem~\ref{thm:LC-inapprox} where the
label sets are instance dependent, in the above theorem the label sets are fixed and their sizes depend
on the promise value (also called soundness value in this context). Also, the size of label set $L$ is
polynomial if $1< \rho(\sigma) \leq \textrm{polylog}(\sigma)$ and super-polynomial if
$\textrm{polylog}(\sigma) < \rho(\sigma)\leq 2^{O(\log^\beta\sigma)}$. It is worth mentioning that these
differences do not affect our constructions (as they scale up appropriately), but do require changes
in the calculations when showing our hardness results. 

Using Lemma~\ref{lem:LC-weakduality}, we can apply the above hardness result to the minimization flavor 
of the \LC problem with promise $\rho(\sigma)$. While the resulting hardness factor is smaller than the
one given by Dinur and Safra~\cite{DS04} in Theorem~\ref{thm:LC-inapprox}, the quasi-linear size of
the constraint graph (and hence of the instance, as the label sets are fixed) allows the following: 

\begin{corollary}[Moshkovitz and Raz~\cite{MR-10-JACM}]\label{cor:LC-subexp-time}
  Assuming the Exponential Time Hypothesis (cf. Conjecture~\ref{con:ETH}, i.e., 3-SAT requires 
  $\exp(\Omega(\sigma))$ time to be solved), quasi-polynomially sized instances of $\LCM_{1/\rho(\sigma)}$ 
  and of $\LC_{\rho(\sigma)}$ cannot be solved in less than $\exp\bigl(\sigma^{1-o(1)}\bigr)$ 
  time.
  \qed
\end{corollary}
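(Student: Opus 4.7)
The plan is to argue by contrapositive, composing the hypothetical sub-exponential algorithm with the quasi-linear reduction of Theorem~\ref{thm:LC-Dinur-Harsha} to obtain a sub-exponential algorithm for 3-SAT, directly contradicting ETH (Conjecture~\ref{con:ETH}). The whole argument rests on the fact that Theorem~\ref{thm:LC-Dinur-Harsha} turns a 3-SAT instance of size $\sigma$ into a \LC instance of size only $s=\sigma^{1+o(1)}$, so that any saving in the exponent on the \LC side translates, with at most a $(1+o(1))$ distortion, to a saving in the exponent on the 3-SAT side.

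First I would handle the $\LCM_{1/\rho(\sigma)}$ direction. Suppose, for contradiction, there exist a fixed constant $\epsilon>0$ and an algorithm $A$ deciding $\LCM_{1/\rho(\sigma)}$ on instances of size $s$ in time $\exp\bigl(s^{1-\epsilon}\bigr)$. Given a 3-SAT formula $\phi$ with $\sigma$ clauses, I would apply Theorem~\ref{thm:LC-Dinur-Harsha} in time polynomial in $\sigma$ to produce an equivalent promise instance $\cL$ of size
\[
  s \;=\; \sigma\cdot 2^{O(\log^{\beta}\sigma)}\cdot\rho(\sigma)^c \;=\;\sigma^{1+o(1)},
\]
then invoke $A$ on $\cL$. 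The total running time is at most
\[
  \mathrm{poly}(\sigma)+\exp\Bigl(\sigma^{(1+o(1))(1-\epsilon)}\Bigr),
\]
and since $(1+o(1))(1-\epsilon)=1-\epsilon+o(1)<1-\epsilon/2$ for all sufficiently large $\sigma$, this quantity is $\exp\bigl(o(\sigma)\bigr)$, contradicting ETH.

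The $\LC_{\rho(\sigma)}$ direction then reduces to the previous one via Lemma~\ref{lem:LC-weakduality}: weak duality gives $\mu(\cL)\geq 1/\kappa(\cL)$, hence whenever $\mu(\cL)\leq 1/\rho(\sigma)$ the same instance automatically satisfies $\kappa(\cL)\geq \rho(\sigma)$. Thus the \LC instance produced by Theorem~\ref{thm:LC-Dinur-Harsha} is simultaneously a valid $1/\rho(\sigma)$-packing-promise instance and a valid $\rho(\sigma)$-covering-promise instance, so a hypothetical $\exp\bigl(s^{1-\epsilon}\bigr)$-time algorithm for $\LC_{\rho(\sigma)}$ can be composed with the same reduction to produce the same contradiction with ETH.

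The main obstacle I anticipate is purely bookkeeping: verifying that the $o(1)$ term in the exponent of $s=\sigma^{1+o(1)}$ really absorbs both hidden factors $2^{O(\log^{\beta}\sigma)}$ and $\rho(\sigma)^c$ uniformly across the full range $1<\rho(\sigma)\leq 2^{O(\log^{\beta}\sigma)}$ permitted by Theorem~\ref{thm:LC-Dinur-Harsha}, and then confirming that raising this estimate to the $(1-\epsilon)$-th power preserves the strict inequality against $1$ in the exponent. Beyond these routine estimates, no additional idea is required.
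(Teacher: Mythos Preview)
Your proposal is correct and matches the paper's approach: the paper does not spell out a proof at all, ending the corollary with a bare \qed, precisely because the argument you outline---compose the quasi-linear reduction of Theorem~\ref{thm:LC-Dinur-Harsha} with a hypothetical $\exp(s^{1-\epsilon})$ algorithm to violate ETH, and invoke Lemma~\ref{lem:LC-weakduality} to pass from $\LCM_{1/\rho(\sigma)}$ to $\LC_{\rho(\sigma)}$---is the intended and essentially unique derivation. Your identification of the only nontrivial step as the bookkeeping verification that $s=\sigma^{1+o(1)}$ absorbs the factors $2^{O(\log^\beta\sigma)}$ and $\rho(\sigma)^c$ is exactly right.
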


This can be used to rule out better approximations in sub-exponential time for other problems by 
further reductions from the \LC problem.

\begin{remark}\label{rem:new-LC-param}
  The new hardness result above brings along a new parametrization, which is slightly 
  different from the one given in Remark~\ref{rem:DS-inst-size}. We now have that:
  $s=\sigma^{1+o(1)}$ with $o(1)\approx(\log\log\sigma)^{-\Omega(1)}$, 
  $r=\sigma 2^{O(\log^\beta\sigma)}=\sigma^{1+o(1)}$, 
  $m=r\rho(\sigma)^c=\sigma^{1+o(1)}$, 
  $\lambda=O\bigl(2^{\rho(\sigma)^c}\bigr)$, and 
  $\lambda'=O(\rho(\sigma)^c)$.
\end{remark}

Let $\vartheta>0$ be such that $c\vartheta\leq 1$ and take $\rho(\sigma)=\log^{\vartheta}\sigma$.
We then have that 
\[
  \lambda =O\bigl(2^{(\log^{\vartheta}\sigma)^c}\bigr)=O(\sigma)\qquad \text{and}\qquad 
  \lambda'=O\bigl((\log^{\vartheta}\sigma)^c\bigr)  =O(\log\sigma),
\]
and since $m\leq\pi\leq m\lambda\lambda'$, we also have that 
\[
  2^{O(\log^\beta\sigma)}\sigma\log^{c\vartheta}\sigma\leq\pi = O\left(2^{O(\log^\beta\sigma)}\sigma^2\log^2\sigma\right).
\]
Furthermore, it follows that 
\begin{equation}\label{eq:d-subexp}
  d = 1+r\lambda+s\lambda' = \Theta\left(2^{O(\log^\beta\sigma)}\sigma^2+\sigma^{1+o(1)}\log^{c\vartheta}\sigma\right).
\end{equation}
Combining all the above, we obtain that as long as the (gap) amplification device $t$ is polynomial in 
the number of variables of the canonical pure Horn 3-CNF $\Phi$, the construction of $\Phi$ can still be
carried out in polynomial time in this setting.

We are now ready to show a hardness of approximation result for the case when sub-exponential time is allowed. 
The proof closely follows the one presented for the polynomial time setting. 

\begin{theorem}\label{thm:cnf-hardness-subexp}
  Assuming the ETH, the minimum number of clauses and literals of pure Horn functions in $n$ variables cannot be
  approximated in $\exp(n^{\delta})$ time, for some $\delta\in(0,1)$, to within factors of $O(\log^{\vartheta} n)$
  for some $\vartheta>0$, even when the input is restricted to 3-CNFs with $O(n^{1+\varepsilon})$ clauses
  and $\varepsilon>0$ some small constant.
\end{theorem}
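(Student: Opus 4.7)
The plan is to reuse, essentially verbatim, the pure Horn 3-CNF reduction of Section~\ref{sec:3-cnf}, but to feed it with the quasi-linear-sized hard \LC instances of Theorem~\ref{thm:LC-Dinur-Harsha} instead of the Dinur--Safra instances, and then to invoke the ETH via Corollary~\ref{cor:LC-subexp-time}. Fix a constant $\vartheta>0$ small enough that $c\vartheta\leq 1$, where $c$ is as in Theorem~\ref{thm:LC-Dinur-Harsha}, and set the \LC promise to $\rho(\sigma):=\log^{\vartheta}\sigma$. Starting from a 3-SAT instance of size $\sigma$, the Moshkovitz--Raz/Dinur--Harsha reduction produces a refined \LC instance $\cL$ whose parameters (Remark~\ref{rem:new-LC-param}) are $s,r,m=\sigma^{1+o(1)}$, $\lambda=O(\sigma)$, and $\lambda'=O(\log\sigma)$. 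By the weak duality of Lemma~\ref{lem:LC-weakduality}, distinguishing $\mu(\cL)=1$ from $\mu(\cL)\leq 1/\rho(\sigma)$ translates to distinguishing $\kappa(\cL)=1$ from $\kappa(\cL)\geq\rho(\sigma)$.

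Next, apply the 3-CNF construction of Section~\ref{sec:3-cnf} to $\cL$ with $d=1+r\lambda+s\lambda'$ (as mandated by Lemma~\ref{lem:3cnf-no-shortcuts}) and with the gap amplification parameter chosen as $t=\sigma^{k}$ for an integer constant $k$ to be fixed later; let $h$ be the resulting pure Horn function and $\Upsilon$ any clause minimum prime pure Horn 3-CNF representation of it. Plugging the new \LC parameters into Lemma~\ref{lem:3cnf-reduction-size} and using \eref{eq:d-subexp}, one gets $n:=|\Upsilon|_v=\sigma^{k+o(1)}$ (the $t$ term dominates provided $k\geq 4$) and $|\Upsilon|_c=\Theta(\sigma^{k+2+o(1)})=O(n^{1+\varepsilon})$ for $\varepsilon=2/k+o(1)$.

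The gap preservation follows from Corollary~\ref{cor:3cnf-upsilon-ub} exactly as in the proof of Theorem~\ref{thm:3cnf-hardness}: the dominant term of $|\Upsilon|_c$ is $t(\kappa(\cL)r+s)$, and the additive correction coming from $|\Psi|_c$ is absorbed into a $(1+o(1))$ factor, yielding the dichotomy
\[
  \kappa(\cL)=1 \Longrightarrow |\Upsilon|_c\leq tr(1+o(1))\quad\text{vs.}\quad \kappa(\cL)\geq\rho(\sigma) \Longrightarrow |\Upsilon|_c\geq tr\rho(\sigma)(1+o(1)).
\]
Since $n$ is polynomially related to $\sigma$, one has $\log n=\Theta(\log\sigma)$, so the multiplicative gap is $\Theta(\rho(\sigma))=\Theta(\log^{\vartheta}n)$.

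Finally, suppose a hypothetical algorithm approximates $\tau(h)$ within a factor strictly better than the gap in time $\exp(n^{\delta})$. Since $n=\sigma^{k+o(1)}$, this runs in time $\exp(\sigma^{k\delta(1+o(1))})$; choosing $\delta<1/k$ makes the running time $\exp(\sigma^{1-\Omega(1)})$, which by Corollary~\ref{cor:LC-subexp-time} contradicts the ETH. The hardness for $\lambda(h)$ is a free corollary, since $|\Upsilon|_l=\Theta(|\Upsilon|_c)$ for any pure Horn 3-CNF. The main technical obstacle is the interlocking choice of the constants $\vartheta$, $k$, $\delta$, and $\varepsilon$: one simultaneously needs (i) $c\vartheta\leq 1$ so that the Moshkovitz--Raz/Dinur--Harsha label sizes yield the clean $n=\sigma^{k+o(1)}$ estimate, (ii) $k$ large enough that $t$ dominates the variable count and $n^{1+\varepsilon}$ bounds the clause count, and (iii) $\delta<1/k$ so that a sub-exponential-in-$n$ algorithm would translate into a sub-exponential-in-$\sigma$ algorithm for \LC; all three are satisfied by taking $k$ a sufficiently large absolute constant and $\delta,\varepsilon>0$ correspondingly small.
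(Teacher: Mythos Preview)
Your proposal is correct and follows essentially the same approach as the paper: feed the Moshkovitz--Raz/Dinur--Harsha quasi-linear \LC instances with $\rho(\sigma)=\log^{\vartheta}\sigma$ (and $c\vartheta\le 1$) into the Section~\ref{sec:3-cnf} construction, invoke Corollary~\ref{cor:3cnf-upsilon-ub} for the dichotomy, and convert the time bound via Corollary~\ref{cor:LC-subexp-time} under the ETH. Your parametrization $t=\sigma^k$ is equivalent to the paper's $t=s^{1/\varepsilon'}$ (with $k\approx 1/\varepsilon'$), and the interlocking constraints you identify on $k$, $\delta$, and $\varepsilon$ match the paper's choices.
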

\begin{proof}
  Let $\cL_0$ be a \LC promise instances in compliance to Theorem~\ref{thm:LC-Dinur-Harsha} and let $\cL$ be its 
  refinement. Let $d$ be as in Equation~\eref{eq:d-subexp}, $\Phi$ be the canonical pure Horn 3-CNF formula 
  constructed from $\cL$, and $h$ be the pure Horn function it defines. Let $\Upsilon$ be a pure Horn 3-CNF 
  representation of $h$ obtained by some exact clause minimization algorithm when $\Phi$ is given as input.

  For convenience, let $\rho=\rho(\sigma)$ and $\zeta=\zeta(\sigma)=r/s=2^{O(\log^\beta\sigma)}/\sigma^{o(1)}$
  and recall Notation~\ref{nta:LC-sizes}.
  Substituting the quantities established in Lemma~\ref{lem:3cnf-reduction-size} into the bounds given by
  Corollary~\ref{cor:3cnf-upsilon-ub} and using and the new parametrization above (cf. 
  Remark~\ref{rem:new-LC-param}), we obtain that
  \begin{align*}
    |\Upsilon|_c & \geq t(\kappa(f)r + s) + \frac{d(\pi+2m\lambda'+2m-1)-1}{\lambda+\lambda'} \\
                 & \geq st(\kappa(f)r/s+1) + \Omega\bigl(\sigma^{1+o(1)} 2^{\log^\tau\sigma} \log^{2c\vartheta}\sigma\bigr)\\
                 & \geq st(\kappa(f)\zeta + 1) + \omega(s),
  \end{align*}
  and
  \begin{align*}
    |\Upsilon|_c & \leq t(\kappa(f)r + s) + d(\pi+m^2\lambda'+4m) \\
                 & \leq st(\kappa(f)r/s + 1) + O\bigl(\sigma^{4} 2^{\log^\tau\sigma} \log^3\sigma\bigr) \\ 
                 & \leq st(\kappa(f)\zeta + 1) + o(s^{5}),
  \end{align*}
  where $\tau>0$ is some constant. Similarly, for the number of variables we have that
  \begin{align*}
    t \leq |\Upsilon|_v & \leq t + dm(\lambda'+2)+m^2\lambda' \\
                        & \leq t + O\bigl(\sigma^{2} 2^{\log^{2\tau}\sigma} \log^2\sigma\bigr) \\
                        & \leq t + o(s^{3}).
  \end{align*}

  Now, choosing $\varepsilon'>0$ such that $t=s^{1/\varepsilon'}=\Omega(s^4)$ and supposing that $s\lra\infty$,
  we obtain the asymptotic expressions
  \[
    |\Upsilon|_c = s^{(1+1/\varepsilon')}\zeta(\sigma)\kappa(f)(1+o(1))
    \qquad\text{and}\qquad
    |\Upsilon|_v = s^{1/\varepsilon'}(1+o(1)).
  \]

  Bringing the existing gaps of the \LC promise instances into play, we then obtain the following dichotomy
  \begin{align*}
    \kappa(\cL)=1               & \Longrightarrow |\Upsilon|_c\leq s^{(1+1/\varepsilon')}\zeta(\sigma)(1+o(1)), \\
    \kappa(\cL)\geq\rho(\sigma) & \Longrightarrow |\Upsilon|_c\geq s^{(1+1/\varepsilon')}\zeta(\sigma)\rho(\sigma)(1+o(1)),
  \end{align*}
  with $\rho(\sigma)=\log^{\vartheta}\sigma$ in this case.
  Let $n=|\Upsilon|_v$ and let $\varepsilon=\varepsilon'/(1+o(1))$. Relating the number of clauses of $\Upsilon$ to the 
  number of variables of $h$, the above dichotomy reads as
  \begin{align*}
    \kappa(\cL)=1               & \Longrightarrow |\Upsilon|_c\leq n^{(1+\varepsilon')}\zeta(n^{\varepsilon})(1+o(1)), \\
    \kappa(\cL)\geq\rho(\sigma) & \Longrightarrow |\Upsilon|_c\geq n^{(1+\varepsilon')}\zeta(n^{\varepsilon})\rho(n^{\varepsilon})(1+o(1)),
  \end{align*}
  giving a hardness of approximation factor of $\rho(n^{\varepsilon})$ for the pure Horn 3-CNF clause minimization problem
  (cf. Theorem~\ref{thm:LC-Dinur-Harsha}). As $\varepsilon^\vartheta$ is a constant, it follows that the gap
  \[
  \rho(n^{\varepsilon})=\log^{\vartheta}n^{\epsilon}=O(\log^{\vartheta}n).
  \]
  Also, the number of clauses $n^{(1+\varepsilon')}\zeta(n^{\varepsilon})\leq n^{1+2\varepsilon'}$.
  
  To conclude the clause minimization part, observe that sub-exponential time in $\sigma$, namely, $2^{o(\sigma)}$ is equivalent 
  to $2^{o\bigl(n^{1/4-o(1)}\bigr)}$ time in $n$ and since $|\Upsilon|_c=O(n^{1+2\varepsilon'})$, the
  time bound follows for $\delta<(1-2\varepsilon')/4$.

  Regarding literal minimization, the structure of $\Phi$ implies its numbers of clauses and literals differ by only a constant
  (cf. Section~\ref{sec:3-cnf}) and therefore, similar results hold in this case as well.
\end{proof}

A natural next step consists in trying to push the hardness of approximation factor further by allowing 
super-polynomially sized constructions, that is, canonical formulae $\Phi$ whose number of clauses is
super-polynomial in $\sigma$. 

So let $b>1$ be such that $\alpha=bc>1$ and take $\rho(\sigma)=\log^\alpha\sigma$. We then have that 
\[ 
  \lambda =O\bigl(2^{\log^\alpha\sigma}\bigr)=O\bigl(\sigma^{\log^{\alpha-1}\sigma}\bigr)
  \qquad\text{and}\qquad
  \lambda'=O(\log^\alpha\sigma).
\]

Also, we have that 
\[
  \pi = O\Bigl(2^{\log^\alpha\sigma+O(\log^\beta\sigma)}\sigma\log^\alpha\sigma\Bigr),
\]
and that we should choose the value of the parameter $d$ such that
\[
  d = \Theta\left(\sigma 2^{\log^\alpha\sigma+O(\log^\beta\sigma)}\right).
\] 

Notice the above values imply that $\Phi$ is now super-polynomially sized in $\sigma$.

Following the steps of the proof of Theorem~\ref{thm:cnf-hardness-subexp}, we obtain that
\[
  \omega(s^2) \leq |\Upsilon'|_c - t(\kappa(f)r + s) \leq o\left(\sigma^{4+\log^{\alpha-1}\sigma}\right)
\]
and that
\[
  t \leq |\Upsilon'|_v \leq t + o\left(\sigma^{3+\log^{\alpha-1}\sigma}\right).
\]

Now, choosing $t=(8\sigma)^{\log^{\alpha-1}8\sigma}$ gives a hardness of approximation factor for pure Horn clause 
minimization equal to $\rho(\sigma)$. Writing $\sigma$ as a function of $n$, we obtain that $\sigma=(2^{\log^{1/\alpha}n})/8$ 
and hence, a hardness of approximation factor
\[
  \rho(n)=\log^\alpha\left(\frac{\displaystyle 2^{\sqrt[\alpha]{\log n}}}{8}\right)=\left(\sqrt[\alpha]{\log n}-3\right)^\alpha
         =O(\log n).
\]

As $|\Upsilon'|_c=O(n^2)$, sub-exponential time $2^{o(\sigma)}$ means
\[
  \mu(n):=2^{o\left(\left(2^{\sqrt[\alpha]{\log n}}/8\right)^{1/2}\right)} = o\left(2^{n^{1/(\log\log n)^C}}\right)
\]
time, for any (possibly large) constant $C\geq 1$. We then just proved the following theorem.

\begin{theorem}\label{thm:cnf-hardness-hyperpoly}
  Assuming the ETH, the minimum number of clauses and literals of pure Horn functions in $n$ variables cannot be 
  approximated in $\mu(n)$ time to within factors of $O(\log n)$, even when the input is restricted to 3-CNFs with 
  $O(n^2)$ clauses.
\end{theorem}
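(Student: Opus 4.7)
The plan is to mimic the proof of Theorem~\ref{thm:cnf-hardness-subexp} but to push the amplification parameters into the super-polynomial regime permitted by Theorem~\ref{thm:LC-Dinur-Harsha}, and then re-express everything in terms of the number of variables $n$ of the resulting pure Horn 3-CNF. Concretely, I would start by instantiating $\rho(\sigma)=\log^{\alpha}\sigma$ with $\alpha=bc>1$ (which is already within the admissible range $\rho(\sigma)\leq 2^{O(\log^\beta\sigma)}$ of Theorem~\ref{thm:LC-Dinur-Harsha}), then set $d=1+r\lambda+s\lambda'$ in accordance with Lemma~\ref{lem:3cnf-no-shortcuts} (whose value is the $\Theta(\sigma 2^{\log^\alpha\sigma+O(\log^\beta\sigma)})$ displayed right before the theorem), and build the canonical pure Horn 3-CNF $\Phi$ from Definition~\ref{dfi:3fcns}. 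Note that unlike in the polynomial-time theorem, $\Phi$ is now super-polynomially sized in $\sigma$; this is harmless because ETH only forbids solving 3-SAT in time $2^{o(\sigma)}$, which still leaves room for reductions of super-polynomial blow-up.

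Next, I would plug the refreshed estimates of Remark~\ref{rem:new-LC-param} into Corollary~\ref{cor:3cnf-upsilon-ub} to obtain the sandwich
\[
  \omega(s^2)\;\leq\;|\Upsilon|_c - t(\kappa(f)r+s)\;\leq\; o\bigl(\sigma^{4+\log^{\alpha-1}\sigma}\bigr),
\]
and analogously $t\leq |\Upsilon|_v\leq t+o(\sigma^{3+\log^{\alpha-1}\sigma})$, where $\Upsilon$ is a clause-minimum prime pure Horn 3-CNF representation of $h$. The key choice is then to pick the second amplification parameter $t=(8\sigma)^{\log^{\alpha-1}(8\sigma)}$, large enough to dominate the additive slack $o(\sigma^{4+\log^{\alpha-1}\sigma})$. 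With this choice the asymptotic expression $|\Upsilon|_c = st\zeta(\sigma)\kappa(f)(1+o(1))$ holds, and invoking the promise from Theorem~\ref{thm:LC-Dinur-Harsha} (transferred to the minimization flavour by Lemma~\ref{lem:LC-weakduality}) yields a dichotomy whose ratio is $\rho(\sigma)=\log^{\alpha}\sigma$.

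The third step is the change of variables from $\sigma$ to $n=|\Upsilon|_v$. From $t=(8\sigma)^{\log^{\alpha-1}(8\sigma)}$ one solves $\sigma=2^{\sqrt[\alpha]{\log n}}/8$ (up to lower-order terms), so that the hardness factor becomes
\[
  \rho(n)=\log^{\alpha}\!\Bigl(\tfrac{2^{\sqrt[\alpha]{\log n}}}{8}\Bigr)
         =\bigl(\sqrt[\alpha]{\log n}-3\bigr)^{\alpha}=O(\log n),
\]
and $|\Upsilon|_c=O(n^2)$, giving the promised nearly-linear clause count. Finally, translating the ETH lower bound $2^{\Omega(\sigma)}$ for 3-SAT through this substitution produces the required time bound: any algorithm achieving factor $O(\log n)$ in time less than $\mu(n)=o(2^{n^{1/(\log\log n)^{C}}})$ would give an algorithm for 3-SAT in time $2^{o(\sigma)}$, contradicting ETH. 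The literal-minimization statement follows for free from $|\Phi|_l=\Theta(|\Phi|_c)$ as observed in Section~\ref{sec:3-cnf}.

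The main obstacle I expect is the bookkeeping in the last step: one has to carefully verify that the inversion $n\mapsto \sigma$ produces exactly the stated $\mu(n)$, and that the super-polynomial amplification $t$ can indeed be selected so as to both (i) dominate the additive error terms in Corollary~\ref{cor:3cnf-upsilon-ub} and (ii) yield $|\Upsilon|_c=O(n^2)$ rather than something worse. All the structural lemmas (exclusive component of $g$, shape of prime implicates involving $v(j)$, tightness of the induced labelings $f_j$) transfer verbatim from Section~\ref{sec:3-cnf}, since they depended only on the inequality $d>|L\cup L'|$ and on the symmetry of the families (a)--(d$_2$), both of which persist under the new parametrization.
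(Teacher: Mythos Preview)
Your proposal is correct and follows essentially the same approach as the paper: the paper's ``proof'' of Theorem~\ref{thm:cnf-hardness-hyperpoly} is precisely the sequence of calculations displayed immediately before its statement (choosing $\rho(\sigma)=\log^{\alpha}\sigma$ with $\alpha>1$, setting $t=(8\sigma)^{\log^{\alpha-1}8\sigma}$, inverting to $\sigma=2^{\sqrt[\alpha]{\log n}}/8$, and reading off the $O(\log n)$ gap and the $\mu(n)$ time bound), which you have reproduced faithfully. One minor slip: $|\Upsilon|_c=O(n^2)$ is quadratic, not ``nearly-linear'' as you wrote, but that matches the theorem's stated $O(n^2)$ clause bound and is not a mathematical error.
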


Pushing the approach further, if we take $\rho(\sigma)=2^{O(\log^\beta\sigma)}$ for some $0<\beta<1$, we have that 
\[
  \lambda= 2^{2^{O(\log^\beta\sigma)}}
  \qquad\text{and}\qquad
  \lambda'=2^{O(\log^\beta\sigma)}, 
\]
that $m=\sigma 2^{O(\log^\beta\sigma)}$, and that 
\[
  \pi=d=\sigma 2^{O(\log^\beta\sigma)} 2^{2^{O(\log^\beta\sigma)}},
\]
implying that $\Phi$ is now sub-exponentially sized in $\sigma$. We then have that 
\[
  \Omega\Big(2^{2^{O(\log^\beta\sigma)}}\Big)\leq \frac{|\Upsilon'|_c - t(\kappa(f)r + s)}{\sigma^2}\leq 
  O\Big(\sigma 2^{2^{O(\log^\beta\sigma)}}\Big)
\]
and that
\[
  t \leq |\Upsilon'|_v \leq t + O\Big(\sigma^2 2^{2^{O(\log^\beta\sigma)}}\Big).
\]

Now, letting $n:=|\Upsilon'|_v=t=2^{2^{1+\gamma\log^\beta\sigma}}$, for some constant $\gamma$, it gives
\[
  \sigma=2^{\left(\frac{1}{\gamma}\log\frac{\log n}{2}\right)^{1/\beta}}
\]
and a hardness of approximation factor $\rho(n)=(\log n)/2$, i.e., a similar $O(\log n)$ hardness factor 
under far more stringent time constraints. This last result is then rendered obsolete by 
Theorems~\ref{thm:3cnf-hardness} and \ref{thm:cnf-hardness-hyperpoly}.

\section{Conclusion}\label{sec:consider}
In the last three sections, we showed improved hardness of approximation results for the problems of
determining the minimum number of clauses and the minimum number of literals in prime pure Horn CNF and
3-CNF representations of pure Horn functions. In the polynomial time setting, we obtained a hardness of
approximation factor of $2^{\log^{1-o(1)} n}$ and when sub-exponential computational time is available,
we showed a factor of $O(\log^{\beta} n)$, where $\beta>0$ is some small constant and $n$ is the number of
variables of the pure Horn function. All these results hold even when the input CNF or 3-CNF formula is 
nearly linear, namely, when its size is $O(n^{1+\varepsilon})$ for some small constant $\varepsilon>0$. We also 
managed to obtain a factor of $O(\log n)$ under more stringent, albeit sub-exponential, time constraints
even when the input formula has size $O(n^2)$. In the polynomial time setting, our results are conditional
on the $\PT\neq\NP$ hypothesis, and are conditional on the Exponential Time Hypothesis (ETH) in the 
sub-exponential time scenario.

A natural question at this point concerns the tightness of our results. 
As mentioned in the introduction, Hammer and Kogan~\cite{HK93} showed that for a pure Horn function in 
$n$ variables, it is possible to approximate the minimum number of clauses and the minimum number of 
literals of a prime pure Horn CNF formula representing it to within factors of $n-1$ and $\binom{n}{2}$, 
respectively. 

Our results then leave a sub-exponential gap in the polynomial time setting, and we are not aware of the
existence of any sub-exponential time approximation algorithm for those problems. Naturally, narrowing 
or closing these gaps is highly desirable. One direction consists in designing new approximation algorithms 
with improved approximation guarantees. This does not seem to be an easy task to accomplish, nevertheless.

Another direction consists in further strengthening our hardness results. Replacing our constructions by
smaller (e.g. quasi-linear) ones will allow us to extend our sub-exponential hardness result to the scenario
where $\exp(o(n))$ computational time is available --- recall that our result is valid for $\exp(n^\delta)$
time, for some constant $0<\delta<1$, and that we conjecture that no constant approximation factor is possible
in the $\exp(o(n))$ time scenario. Besides barely improving the constants in our polynomial time proofs,
we believe this option has little, if anything else, to offer.

Improvements on two-query projection test, sub-constant error Probabilistically Checkable Proof (PCP) systems 
(cf. Arora and Safra~\cite{AS98}, Dinur and Safra~\cite{DS04}, Moshkovitz and Raz~\cite{MR-10-JACM}, and 
Dinur and Harsha~\cite{DH13}) might lead to larger gaps for the \LC (as a promise) problem, and as long as
the \LC instances are polynomially sized, our constructions would immediately imply larger hardness of 
approximation results for those problem. Moreover, improvements on \LC might also allow one to obtain 
hardness results when super-polynomial time is allowed, a case we left untreated.

As a third possibility, notice that it might be possible to start from a different (promise) problem and 
provide different constructions and different proofs. At the moment, it is not clear how to pursue this
venue. However, we conjecture that it is possible to improve the hardness of approximation factor for clause and 
literal minimization of a Horn function in $n$ variables to at least $O(n^\varepsilon)$, for some small 
$\varepsilon>0$. This conjecture concerns the polynomial time setting. In favor of it, we point out the 
fact that Umans~\cite{U99} showed that for general Boolean functions, the decision versions of clause and 
literal minimization problems are $\Sigma_2^p$-complete (namely, $\NP^\NP$-complete) and that it is $\Sigma_2^p$-hard 
to approximate such quantities to within factors of $N^\varepsilon$, where $N$ is the size of the input 
(a low-degree polynomial in the number of variables of the function in question) and $\varepsilon>0$ is
some small constant.

\textbf{Acknowledgements.}
  We would like to thank O. {\v{C}}epek and P. Ku{\v{c}}era for some discussions held during an early stage of 
  development of these results, and three anonymous referees for suggestions that improved the overall readability
  of this paper.


\end{document}